\RequirePackage{fix-cm}
\documentclass{article}
\usepackage{iclr2027_conference,times}
\usepackage{amsmath,amssymb,amsthm}
\usepackage{booktabs}
\usepackage{capt-of}
\usepackage{graphicx}
\usepackage{microtype}
\usepackage{tabularx}
\usepackage{placeins}
\usepackage{needspace}
\usepackage{xspace}
\usepackage{hyperref}
\usepackage{xurl}
\hypersetup{hidelinks,pdftitle={AS2D: Accelerating On-Demand Audio Understanding on Mobile Devices},pdfauthor={Yunzhe Li, Kyoungjun Park, Hongzi Zhu, Lili Qiu}}

\newlength{\cadreDefaultTextFloatSep}
\newlength{\cadreDefaultFloatSep}
\AddToHook{env/figure/begin}[cadre-main-spacing]{%
  \setlength{\abovecaptionskip}{2pt}%
  \setlength{\belowcaptionskip}{0pt}}
\AddToHook{env/table/begin}[cadre-main-spacing]{%
  \setlength{\abovecaptionskip}{0pt}%
  \setlength{\belowcaptionskip}{5pt}}

\theoremstyle{definition}
\newtheorem{proposition}{Proposition}

\newcommand{\system}{\textsc{As}\textsuperscript{2}\textsc{d}\xspace}

\providecommand{\Description}[1]{}
\title{\system: Accelerating On-Demand\\
Audio Understanding on Mobile Devices}
\author{Yunzhe Li$^{1}$ \quad Kyoungjun Park$^{1}$ \quad Hongzi Zhu$^{2}$ \quad Lili Qiu$^{1}$\\
\normalfont $^{1}$The University of Texas at Austin \quad $^{2}$Shanghai Jiao Tong University\\
\normalfont\texttt{yunzhe.li.cs@gmail.com}}
\iclrfinalcopy

\begin{document}
\raggedbottom
\maketitle
\lhead{Preprint}
\suppressfloats
\begin{abstract}
Speculative decoding accelerates autoregressive generation by using a smaller drafter to propose tokens for batched verification by a larger target. However, conventional speculative decoding couples drafting to the target's evolving verified prefix, serializing drafting and verification. We ask whether this dependency is necessary for source-conditioned generation. Our key observation is that, for audio language models, the input audio and user request can provide useful speculative candidates without following the target's evolving text prefix. Based on this observation, we propose \system (Audio Speculative Speculative Decoding), which enables target-decoupled drafting: an audio-conditioned drafter follows its own generation history while the target independently verifies and corrects ready candidates. Without usable candidates, the target advances alone. Thus, target feedback determines which candidates are committed but no longer determines when the drafter can make progress, enabling drafting and verification to proceed concurrently while retaining target-side verification and correction.
We implement \system in MNN for Android and evaluate two target models across four phones, seven datasets, and three tasks covering 12.2 hours of audio. Across four phones, \system improves pooled ASR throughput by 42--76\% over target-only decoding, while only 5.7\% of evaluation windows are slower than target-only, compared with 58.1--63.0\% for speculative baselines. For ASR, \system reaches 97.33--98.20\% of a hindsight per-window oracle's pooled throughput over the evaluated drafter/budget catalog. Native on-demand execution with a 7B target achieves up to 78\% higher throughput than target-only. These results show that source-conditioned audio generation can relax the conventional dependence of speculative drafting on the target's evolving output prefix, exposing substantial parallelism for efficient inference.
\end{abstract}

\section{Introduction}
\label{sec:intro}

Autoregressive generation is inherently sequential: each output token
depends on the preceding output prefix, making decoding a major bottleneck
for language-model inference. Speculative decoding reduces this bottleneck
by using a smaller drafter to propose multiple tokens that a larger target
verifies in parallel, substantially accelerating generation while preserving
the target distribution \citep{leviathan2023speculative}.
However, conventional speculative decoding retains another sequential
dependency: after each verification round, the drafter conditions on the
target's accepted or corrected prefix before preparing its next proposal.
Drafting and verification therefore remain coupled through the target's
evolving autoregressive trajectory.

We ask whether this dependency is always necessary. We study this question
in audio language models, where the input audio itself provides a rich
source of information about future output tokens. One might instead
transcribe the audio and apply a text language model, but transcription
can discard information carried by the original signal, including
intonation, stress, and emotional cues
\citep{wang2025speechprosody}. Modern audio language models therefore
operate directly on audio to support tasks such as transcription,
translation, and question answering \citep{xu2025qwen25omni}.
Because these outputs are grounded in a shared audio source, a drafter may
be able to anticipate useful future tokens from the audio and instruction
without waiting for the target's latest verified prefix.

We investigate this opportunity in \emph{on-demand audio understanding},
where users request transcripts, translations, or answers about previously
received audio, supporting voice assistants and wearable memory aids
\citep{zulfikar2024memoro,openbmb2025minicpmo26,hegde2026larag}.
In this setting, audio can be prefilled before the request
\citep{openbmb2025minicpmo26,sun2026omnimem}, removing much of the audio
processing from the request-time critical path. Response generation,
however, remains autoregressive: longer replies require more sequential
decoding steps (Figure~\ref{fig:causal-streaming-scenario}(a--b)).
The problem is particularly acute for local execution on phones, where
model weights, retained audio state, and runtime buffers must fit mobile
memory while generation operates within limited compute
\citep{he2019streaming,orhon2025whisperkit}.
We seek lower request-time latency and resource cost while preserving
target outputs.

\begin{figure}[!t]
  \centering
  \setlength{\parskip}{0pt}
  \setlength{\abovecaptionskip}{-4pt}
  \includegraphics[width=\linewidth,trim=0 0bp 0 4bp,clip]{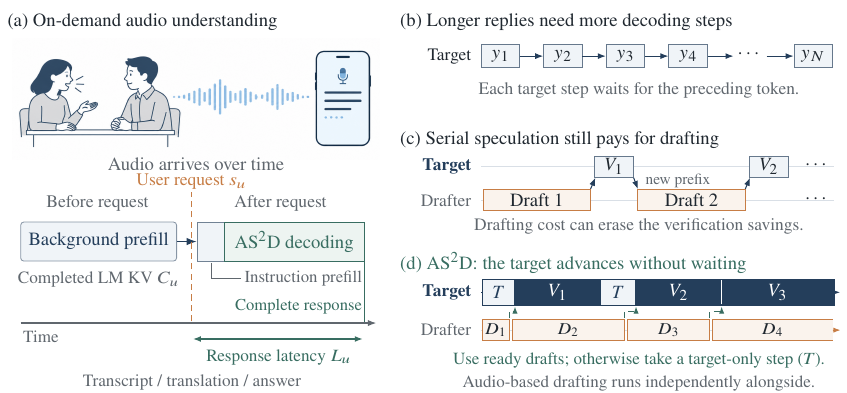}
  \Description{Panel a separates background audio prefill and completed language-model
  KV from request-time instruction prefill and decoding. A request boundary starts the
  response-latency clock, ending at the complete response. Panel b shows sequential target-token generation. Panel c shows
  serial drafting and target verification, with a new verified prefix needed
  before the next draft. Panel d makes target execution the primary continuous
  lane. The target verifies ready candidates or advances alone if no usable
  candidate is ready. Target-decoupled drafting runs independently; dashed arrows
  offer ready candidates at target boundaries without blocking the target.
  All timelines are schematic.}
  \par\nointerlineskip
  \caption{\textbf{\system overlaps drafting and verification for on-demand audio understanding.}
  (a) Background audio prefill prepares LM KV before the request. Response
  latency includes instruction prefill, decoding, and any residual audio
  preparation. (b) Longer replies require more sequential target steps.
  (c) Batched verification saves target work, but serial drafting can consume
  those savings on a phone. (d) \system keeps the target advancing: it verifies
  ready candidates (\(V_i\)) or takes a target-only step (\(T\)) without
  waiting for drafting. Audio-based drafting (\(D_i\)) runs alongside,
  independent of the latest verified prefix. Dashed arrows supply ready
  candidates at target boundaries. The target determines committed tokens.
  Decoding timelines are schematic and start after the request.}
  \label{fig:causal-streaming-scenario}
\end{figure}

Conventional speculative decoding does not fully address this challenge.
On phones, serial drafting can consume much of the savings from batched
verification (Figure~\ref{fig:causal-streaming-scenario}(c)), as illustrated
by WhisperKit's omission of speculation from its final Large v3 Turbo
configuration \citep{orhon2025whisperkit}.
Asynchronous methods overlap drafting and verification by parallelizing
target execution, as in Distributed Speculative Inference (DSI), or
preparing continuations for predicted verification outcomes, as in
Speculative Speculative Decoding (SSD)
\citep{timor2025dsi,kumar2026ssd}.
These methods are studied in multi-GPU server settings. On a single phone,
their additional computation and speculative state compete with verification
for limited resources on heterogeneous processors.

Our key observation is that, for source-conditioned audio generation,
drafting need not always follow the target's evolving text prefix.
Audio and the instruction can directly supply useful candidates while
the target continues along its own autoregressive trajectory.
A controlled ASR study supports this premise: removing preceding
reference-transcript context leaves draft acceptance unchanged in
\textbf{72.3\%} of 480 paired proposal blocks, with a median acceptance
change of zero.

To exploit this property, we propose \system (Audio Speculative Speculative
Decoding), built on \emph{target-decoupled drafting} to remove this verifier-to-drafter synchronization dependency for source-conditioned audio generation.
An autoregressive drafter follows its own generation history without
synchronizing its state to the target's accepted or corrected prefix.
Meanwhile, the target advances along the authoritative output trajectory,
verifying aligned candidates when available and decoding alone when none
is usable. Thus, target feedback determines which candidates are committed,
but no longer determines when the drafter can make progress.
This decoupling lets candidate generation and target verification proceed
concurrently across the phone's heterogeneous processors
(Figure~\ref{fig:causal-streaming-scenario}(d)).
No separate continuations are prepared for alternative verification
outcomes, and the target retains its context and determines the committed
output.

We implement \system in MNN for Android \citep{jiang2020mnn} and evaluate
\textbf{two target LLMs}, Qwen3-ASR-1.7B \citep{shi2026qwen3asr} and
Qwen2.5-Omni-7B \citep{xu2025qwen25omni}.
The study spans \textbf{four phones, seven datasets, and three task types}:
speech recognition
\citep{panayotov2015librispeech,carletta2006ami,conneau2022fleurs,tang2021kespeech,zhang2022m4singer},
translation \citep{wang2021covost2}, and spoken question answering
\citep{lee2018spokensquad}.
The corpus evaluation includes \textbf{928 task requests} over 913 distinct
audio inputs of 1.6--75.2s, totaling \textbf{12.2 hours of audio and
97,953 output tokens}.
Across four phones, \system improves pooled ASR throughput by
\textbf{42--76\%} over target-only decoding.
Only \textbf{5.7\%} of its evaluation windows are slower than target-only,
versus 58.1--63.0\% for the speculative baselines
(Table~\ref{tab:throughput-coverage}), and improvements over the strongest
speculative baseline reach \textbf{114.6\%}
(Table~\ref{tab:representative-cases}).
\system reaches \textbf{97.33--98.20\%} of a hindsight per-window
oracle's pooled TPS over the evaluated drafter/budget catalog.
Our native \textbf{7B deployment} processes a \textbf{10-minute audio stream}
in real time (Appendix~\ref{sec:native-sustained-streaming}) and delivers
up to \textbf{78\% higher TPS} than target-only for on-demand audio
understanding (Figure~\ref{fig:native-ondemand-fourway}).

Our contributions are threefold:
\textbf{(1)} We identify a target-prefix synchronization dependency in
conventional speculative decoding and show that source-conditioned audio
generation can relax this dependency: useful candidates can often be
generated without observing the target's evolving verified prefix.
\textbf{(2)} We introduce target-decoupled drafting and design \system,
which allows the drafter and target to advance concurrently while retaining
target-side verification and correction.
\textbf{(3)} We implement \system on Android and evaluate it across two
target models, four phones, seven datasets, and three audio-language tasks,
demonstrating substantial and robust decoding acceleration in both
phone-profile replay and native execution.

\section{Related work}
\label{sec:main-related-work}

\noindent\textbf{On-demand audio understanding.}\quad
MiniCPM-o 2.6 supports continuous audio input independently of user queries
and separates language-model prefill from response generation
\citep{openbmb2025minicpmo26}. OmniMem compresses cached audio-visual memory
to retain context for later queries \citep{sun2026omnimem}. These methods
retain prepared source context but leave sequential decoding on the response
path. We complement them by accelerating responses under mobile resource constraints.

\noindent\textbf{Speculative speech decoding.}\quad
WhisperKit omits speculation from its final Large v3 Turbo configuration
because of drafter overhead \citep{orhon2025whisperkit}. SpecASR exploits
acoustic alignment through adaptive draft lengths, draft recycling, and sparse
trees, while regenerating drafts from verified text \citep{wei2025specasr}.
We build on its local candidate reuse, but let the audio-conditioned producer
advance independently of target-prefix updates. Whisper-Medusa adds prediction
heads, and CTC encoder drafts provide acoustic candidates. Their acceptance
rules can change the target's greedy output
\citep{segalfeldman2025whispermedusa,saon2026selfspec}. \system retains target
verification and correction across recognition, translation, and question
answering.

\noindent\textbf{Asynchronous and mobile execution.}\quad
PEARL uses pre- and post-verification to overlap drafting and verification
\citep{liu2025pearl}. DSI orchestrates parallel target and drafter instances
\citep{timor2025dsi}, while SSD caches continuations for predicted verification
outcomes on separate server GPUs \citep{kumar2026ssd}. These methods retain
dependencies on verified or hypothesized target-text prefixes. AHASD addresses mobile
asynchronous speculation through a custom NPU--PIM architecture, using target
feedback to confirm or roll back drafts \citep{ma2026ahasd}.
On a single phone, extra computation and speculative state compete with
verification for limited resources. Assigning verification to the processor
best suited to it can leave prefix maintenance on a slower processor.
\system instead uses audio and the instruction to supply target-decoupled
candidates on available phone processors, without preparing continuations for
hypothetical target outcomes. It still verifies, corrects, and aligns
candidates. Appendix~\ref{sec:related-work} covers additional execution and
configuration methods.

\section{Problem formulation}
\label{sec:main-problem}

\noindent\textbf{Notation.}\quad
Let \(x\) denote the incoming audio stream and \(T\) the target model.
Request \(u\) arrives at time \(s_u\) with instruction \(p_u\) for an audio
view \(x_u\) selected from \(x_{\leq s_u}\) by protocol \(\mathcal S\).
Its input \(z_u=(x_u,p_u)\) stays fixed during generation. \(C_u\) is the completed target language-model KV state
prepared from \(x_u\). \(\mathcal S\) fixes audio preparation,
prompt construction, state handling, and termination.

A decoding scheme \(\mathcal A\) specifies candidate generation,
verification, and scheduling under fixed deployment settings.
Let \(Y_{\mathcal A}(z_u;\mathcal S)\) and \(Y_T(z_u;\mathcal S)\) denote
its output and the greedy-target reference. For completion time
\(f_u^{\mathcal A}\), response latency is
\(L_u^{\mathcal A}=f_u^{\mathcal A}-s_u\), including all work after request arrival.
Let \(\mathfrak A\) be the set of schemes respecting input availability,
candidate readiness, task/backend support, output limits, and
accepted-prefix state restoration.

Let \(m_{\mathcal A}(v)\) denote resident model, cache, and buffer memory at
wall-clock time \(v\), with memory budget \(M_{\max}\).
\(Q_\alpha\) denotes the \(\alpha\)-quantile of request latency, with
optional limit \(L_{\max}\).

\noindent\textbf{On-demand audio language model inference problem.}\quad
For a fixed phone and request distribution, we choose
\(\mathcal A\in\mathfrak A\) to minimize mean response latency
\(\mathbb E[L_u^{\mathcal A}]\) while preserving the target's output.
The scheme must satisfy \(m_{\mathcal A}(v)\leq M_{\max}\) at all times
and, when required, the tail-latency constraint
\(Q_\alpha(L_u^{\mathcal A})\leq L_{\max}\).
Post-prefill throughput isolates decoding.
Appendix~\ref{sec:problem-formulation} details the protocol.

\section{Design of \system}
\label{sec:main-method}

\label{sec:method-overview}
\begin{figure}[!t]
  \centering
  \setlength{\parskip}{0pt}
  \setlength{\abovecaptionskip}{-7pt}
  \includegraphics[width=\textwidth,trim=0 1bp 0 2bp,clip]{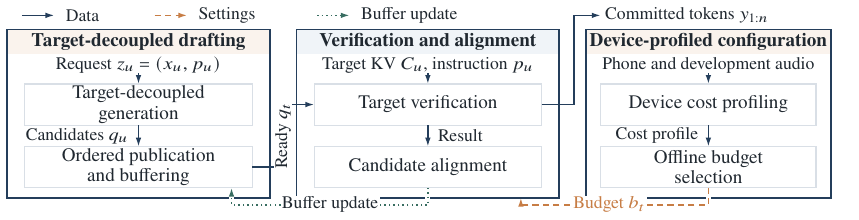}
  \Description{Target-decoupled drafting generates and buffers candidates
  directly from request input z_u, comprising a fixed audio view and instruction.
  Verification and alignment uses target language-model KV C_u already
  retained in memory to process the request instruction p_u, then verifies ready candidates or
  decodes alone. It emits committed tokens and updates the candidate cursor.
  The producer has separate model state and receives no target-text feedback.
  The right module profiles costs and selects the budget offline for a fixed drafter.
  The left and middle modules run concurrently. Solid arrows carry data,
  dashed arrows supply settings, and dotted arrows update the buffer.}
  \par\nointerlineskip
  \caption{\textbf{\system decouples candidate generation from target verification.}
  Request \(z_u=(x_u,p_u)\) supplies fixed audio and an instruction to
  target-decoupled drafting. Verification and alignment processes \(p_u\)
  using retained target KV \(C_u\). The target verifies ready candidates or
  decodes alone while the producer continues independently. Alignment updates
  the buffer. Device-profiled configuration combines measured phone costs
  with development traces to select a fixed budget for the given drafter.}
  \label{fig:overview}
\end{figure}

\begin{figure}[!b]
  \centering
  \begin{minipage}[t]{0.48\linewidth}
    \centering
    \includegraphics[width=\linewidth]{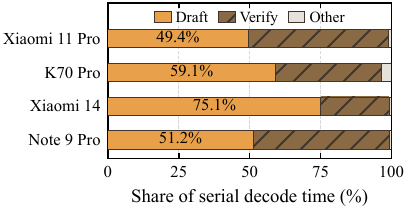}
    \par\small (a) Native serial decoding cost
  \end{minipage}\hfill
  \begin{minipage}[t]{0.48\linewidth}
    \centering
    \includegraphics[width=\linewidth]{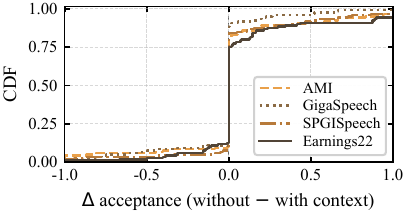}
    \par\small (b) Sensitivity to text context
  \end{minipage}
  \caption{\textbf{Drafting is costly, yet need not wait for updated text.}
  (a) Native serial SD pairs a CPU 0.6B drafter with an OpenCL 1.7B target.
  Time shares pool five runs per phone on one 59.5-second audio,
  with seven candidates per round. Drafting includes prefix catch-up.
  Loading and initial prefill are excluded.
  (b) We compare acceptance with and without preceding transcript context.
  The comparison uses 120 four-second blocks per corpus, a 32-token proposal
  cap, and identical verification context. Both panels use Qwen3-ASR.
  Appendix~\ref{sec:drafting-diagnostics} gives both protocols.}
  \label{fig:prefix-context-cdf}
  \Description{Left: drafting accounts for 49.4 to 75.1 percent of native
  serial decoding time on four phones in one common audio-window study.
  Right: four empirical CDFs show acceptance-ratio differences when
  preceding transcript context is removed from the drafter. All 480 pairs
  are retained; 72.3 percent have zero change. Negative and positive tails
  show that missing context can also reduce or increase acceptance.}
\end{figure}

Our core idea is to generate audio-conditioned candidates independently of
target-prefix updates, allowing drafting and target verification to run
concurrently (Figure~\ref{fig:overview}).
At request time, \emph{Target-decoupled drafting} publishes
candidates to a buffer, and \emph{Verification and alignment} consumes
ready blocks to produce committed tokens. Alignment feedback updates only
the buffer cursor, allowing the producer to continue independently.
If no candidate is usable, the target decodes alone.
\emph{Device-profiled configuration} supplies the budget chosen before execution.

In \emph{Target-decoupled drafting}, \emph{Target-decoupled generation}
maps the request input \(z_u=(x_u,p_u)\) to candidate tokens \(q_u\)
using the fixed task-specific drafter.
\emph{Ordered publication and buffering} supplies a ready block \(q_t\)
to the target at decoding boundary \(t\).
\emph{Verification and alignment} begins with \emph{Target verification},
which uses the existing audio KV \(C_u\), instruction \(p_u\), and committed
context to check ready candidates and emit tokens
\(y_{1:n}\). Its result feeds \emph{Candidate alignment}, which advances
or repairs the buffer cursor to expose the next usable block.
For \emph{Device-profiled configuration}, \emph{Device cost profiling}
measures phone execution costs.
\emph{Offline budget selection} combines these costs with development
audio to select the budget cap \(b_t\) supplied to target verification.

\subsection{Target-decoupled drafting}
\label{sec:method-drafting}
\noindent\textbf{Target-decoupled generation.}\quad
Serial drafting adds substantial work before each verification.
On one common audio window across four phones, it accounts for
49.4--75.1\% of native serial SD decoding time
(Figure~\ref{fig:prefix-context-cdf}(a)). To avoid serializing drafting
with verification, we seek proposals that do not depend on its outcome.
Our key observation is that the input audio already carries
information about upcoming output tokens, allowing a drafter to make
progress without the latest target text.
In a controlled ASR probe, omitting preceding reference-transcript context leaves
draft acceptance unchanged in 72.3\% of 480 paired blocks, with a median
change of zero (Figure~\ref{fig:prefix-context-cdf}(b)). Motivated by this
observation, we generate candidates from the request audio and instruction
independently of target-prefix updates.
After the instruction is known, the fixed drafter is dispatched at
\(\tau_u\geq s_u\) and generates
\(q_u=\operatorname{Tok}_T\!\left(G_{\phi}(x_u,p_u)\right)\),
where \(\operatorname{Tok}_T\) converts proposals into the target vocabulary.
An autoregressive drafter conditions on its own generated text, without
reading target-text updates or target KV. It can therefore continue while
the target verifies earlier candidates.

\phantomsection
\begin{figure}[!t]
  \centering
  \includegraphics[width=\linewidth,trim=0 1bp 0 2bp,clip]{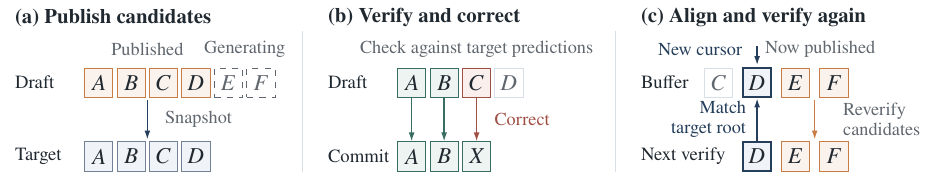}
  \caption{\textbf{\system publishes, verifies, and reuses partial drafts.}
  (a) Only published candidates enter verification.
  (b) The target accepts \(A\,B\), corrects \(C\) to \(X\), and commits \(A\,B\,X\).
  (c) The target's next token \(D\) matches the buffer and supplies the root
  for verifying ready \(E\,F\). The producer continues unchanged.
  The target verifies each reused candidate.
  Tokens and batch lengths are illustrative.}
  \label{fig:token-flow}
  \Description{Three schematic panels follow one draft. The producer publishes
  A B C D while generating E F. Verification accepts A B and supplies correction
  X, leaving the draft D uncommitted. The target then supplies its own next
  token D, which matches a position in the immutable candidate buffer. The
  read cursor moves there. The next verification uses target D as the known
  root and the now-published E F as unaccepted candidates. The producer never
  incorporates the target correction and continues its original sequence.}
\end{figure}

\noindent\textbf{Ordered publication and buffering.}\quad
\label{sec:method-queue}
The target needs partial drafts promptly, but its corrections must not
interrupt ongoing generation. We connect the producer and target through
an ordered token buffer. The producer publishes token batches as it
generates them, leaving published tokens unchanged. In
Figure~\ref{fig:token-flow}(a),
the target can check a published sequence \(A\,B\,C\,D\) while the producer
generates \(E\,F\,\ldots\). Newly published tokens are available to later
verification calls, without changing the candidates already being checked.
If the target accepts \(A\,B\) but replaces \(C\) with \(X\), the producer
continues its own sequence. A separate \emph{read cursor} tracks where to
look for the next usable candidates. Candidate alignment updates this
cursor without changing the producer's text or model state. If no usable
continuation is ready, the target decodes alone.
Appendix~\ref{sec:appendix-drafting} gives the publication details.

\subsection{Verification and alignment}
\label{sec:method-verification}
\phantomsection
\noindent\textbf{Target verification.}\quad
\label{sec:method-target-verification}
After processing instruction \(p_u\) using the retained audio KV \(C_u\),
the target checks up to \(b_t\) ready candidates under the committed prefix
using standard greedy speculative verification
\citep{leviathan2023speculative}. In Figure~\ref{fig:token-flow}(b), it checks
\(A\,B\,C\,D\), agrees on \(A\,B\), but predicts \(X\) instead of \(C\).
It therefore commits \(A\,B\,X\). The alignment step then determines
whether the uncommitted candidates can be reused. If no usable
candidate is ready, the target decodes alone without waiting for the
producer. Appendix~\ref{sec:tree-contract} gives the verification and KV
repair details.

\phantomsection
\noindent\textbf{Candidate alignment.}\quad
\label{sec:method-alignment}
A rejected token need not invalidate the remaining draft. We use local
token matching to recover reusable continuations \citep{wei2025specasr}.
Figure~\ref{fig:token-flow}(c) continues the example: the target next predicts \(D\) after
committing \(A\,B\,X\). We locate \(D\) near the current buffer position
and move the read cursor to that match. The following \(E\,F\,\ldots\)
can then be verified under the updated target context as they become
available. Full acceptance simply advances the cursor. If no match is
found, the target continues decoding alone. Throughout this process, the
producer keeps generating its own sequence.
Appendix~\ref{sec:candidate-recycling} gives the bounded search rule.

\subsection{Device-profiled configuration}
\label{sec:main-joint-control}
\noindent\textbf{Device cost profiling.}\quad
Batching has a non-smooth per-token cost. Larger batches can be less efficient,
with different penalties across phones. We profile target-reference token
blocks on each phone and backend to isolate hardware efficiency before
budget selection (Appendix~\ref{sec:verification-width-profile}). We combine
measured costs and concurrent slowdown with development traces to estimate
decoding time, including correction, alignment, and producer drain.

\noindent\textbf{Offline budget selection.}\quad
Each task/model pairing uses a fixed drafter. Device profiles and development
traces guide the budget choice within memory and backend constraints.
This budget remains fixed throughout
deployment; candidate readiness can still reduce the actual verification width.
Appendix~\ref{sec:appendix-eval-deployment} details development splits, retained
configurations, calibration, and time accounting.

\subsection{When Does Target-Decoupled Drafting Help?}
\label{sec:when-help}

Target decoupling can hide drafting behind target execution, reducing the
idealized cost from $D+V$ to $\max(D,V)+H$, where $D$, $V$, and $H$ denote
drafting, verification/decoding, and additional overhead, respectively.
Drafting accounts for 49.4--75.1\% of
native serial SD time.

Speedup depends on three factors: (1) drafting cost, or how much work can be hidden; (2) candidate readiness, or whether candidates arrive before they are needed; and (3) useful candidate coverage, or how much target output they cover. Audio provides a favorable setting: removing preceding transcript context leaves acceptance unchanged in 72.3\% of our ASR probes. High acceptance alone, however, is insufficient: spoken QA achieves 84.7\% acceptance but only 28.3\% candidate coverage, yielding an 8.0\% throughput improvement over the strongest speculative baseline. Thus, AS$^2$D benefits most when drafting is costly and independent candidates are timely and sufficiently cover target output.

\section{Evaluation}
\label{sec:main-evaluation}

\textbf{RQ1:} How much does \system
accelerate decoding across phones, models, and audio tasks?
\textbf{RQ2:} Which components drive the gains?
\textbf{RQ3:} Does \system accelerate native on-demand requests?

\subsection{Experimental setup}
\label{sec:main-evaluation-setup}

\noindent\textbf{Implementation.}\quad
We implement \system in MNN for Android \citep{jiang2020mnn,mnn2026},
with resident drafter/target models and separate KV caches.
Redmi K70 Pro, Xiaomi 11 Pro, Xiaomi 14, and Redmi Note 9 Pro span
Snapdragon 750G, 888, and 8 Gen 3 platforms
\citep{qualcomm2026sd750g,qualcomm2026sd888,xiaomi2026mi14specs},
with heterogeneous CPUs, Adreno GPUs, and 7.1--14.8\,GiB of OS-visible RAM.
Our targets are Qwen3-ASR-1.7B and Qwen2.5-Omni-7B
\citep{shi2026qwen3asr,xu2025qwen25omni}.
ASR pairs an OpenCL target with Qwen3-ASR-Audio-0.6B on four CPU threads.
All main translation and spoken-QA comparisons pair Omni-7B with
Qwen2.5-Omni-3B as the drafter, sharing the target tokenizer.
See Appendix~\ref{sec:appendix-eval-setup}.

\noindent\textbf{Datasets and tasks.}\quad
Our evaluation covers seven datasets and three task families, totaling 12.2 hours
of distinct audio and 97,953 target output tokens, excluding EOS.
\emph{Speech recognition} covers five settings: read-speech transcription
with LibriSpeech (Libri) \citep{panayotov2015librispeech}, meeting transcription
with AMI SDM \citep{carletta2006ami}, multilingual recognition with six
languages from FLEURS \citep{conneau2022fleurs}, dialect recognition with
KeSpeech \citep{tang2021kespeech}, and singing transcription with
M4Singer \citep{zhang2022m4singer}.
\emph{Speech translation} uses CoVoST2 \citep{wang2021covost2} for
English-to-Chinese and Chinese-to-English translation.
\emph{Spoken question answering} uses Spoken SQuAD \citep{lee2018spokensquad}.
Audio shared by multiple QA questions is counted once in the total duration.
See Appendix~\ref{sec:appendix-eval-setup}.

\noindent\textbf{Baselines.}\quad
\emph{Target-only} generates one token at a time using the target model alone.
\emph{Standard speculative decoding (SD)} uses a smaller, prefix-conditioned drafter to propose candidates,
then verifies them in a batch with the target \citep{leviathan2023speculative}.
\emph{SpecASR (ASP)} adapts draft length using confidence and recycles
rejected candidates \citep{wei2025specasr}.
See Appendix~\ref{sec:baseline-implementations}.

\noindent\textbf{Metric.}\quad
We report tokens per second (TPS), the number of committed output tokens
divided by elapsed time. Audio prefill finishes before request arrival.
TPS measures full-response generation, whereas time to first token (TTFT)
captures only the wait for the first token.

\Needspace{8\baselineskip}
\subsection{RQ1: Performance across phones, models, and tasks}
\label{sec:main-evaluation-results}

We compare the four methods on the same 688 ASR windows from five datasets
using phone-cost replay. Replay and native TPS agree closely on average
in matched timing checks (Appendix~\ref{sec:replay-agreement}).
\system deploys a fixed Qwen drafter with budget 7 on K70, Xiaomi 11 Pro,
and Xiaomi 14, and 3 on Note 9 Pro, selected without test outcomes.
We then evaluate Omni-7B translation and spoken QA.

\begin{figure}[!t]
  \centering
  \includegraphics[width=\linewidth]{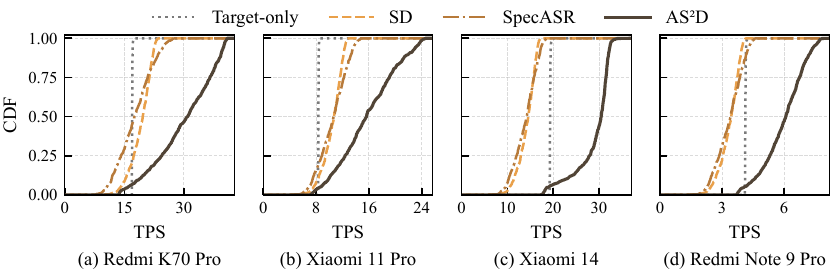}
  \caption{\textbf{\system improves decoding throughput across four phones.}
  Each panel contains the same 688 complete-audio test windows from five
  datasets, with separately measured phone-cost replay. Curves are unsmoothed.
  Farther right indicates higher TPS. SpecASR uses the serial ASP + recycling
  port. SD averages K=5,6,7,8 TPS within each window.
  Horizontal scales differ by phone.}
  \label{fig:fourphone-tps-cdf}
\end{figure}

\noindent\textbf{Across-phone performance.}\quad
Figure~\ref{fig:fourphone-tps-cdf} plots the per-window TPS distributions
for Target-only, SD, SpecASR, and \system on each phone.
\system improves pooled throughput by 42--76\% over target-only.
On K70, SD averages 18.97 TPS versus target-only's 16.96, but its
21.52 TPS on LibriSpeech falls to 15.46 on AMI, below target-only's 16.92.
Serial speculation therefore does not uniformly help. With fixed Qwen/budget 7,
\system reaches 27.92 TPS, improving throughput by 64.6\% over target-only
and 47.2\% over SD.
See Appendix~\ref{sec:appendix-evaluation}.

\phantomsection
\label{sec:main-case-studies}
\begin{table}[!t]
\centering
\begin{minipage}[t]{0.50\linewidth}
\vspace{0pt}
\centering
\fontsize{9}{10.5}\selectfont
\caption{\textbf{\system improves TPS across representative cases.}
Gains use the faster of SD and SpecASR (ASP) in Xiaomi 14 phone-cost replay.}
\label{tab:representative-cases}
\begingroup
\fontsize{9}{10.5}\selectfont
\setlength{\tabcolsep}{1.5pt}
\renewcommand{\arraystretch}{1}
\begin{tabular*}{\linewidth}{@{\extracolsep{\fill}}lrrrrr@{}}
\toprule
Case & Target & SD & ASP & \system & Gain \\
\midrule
Read & 19.26 & 16.65 & 16.88 & \textbf{32.24} & +91.0\% \\
Multi. & 19.24 & 15.58 & 16.02 & \textbf{32.48} & +102.7\% \\
Dialect & 19.37 & 14.19 & 14.61 & \textbf{31.34} & +114.6\% \\
Singing & 19.31 & 13.13 & 14.44 & \textbf{28.94} & +100.4\% \\
Meeting & 19.25 & 11.36 & 8.93 & \textbf{22.86} & +101.3\% \\
\bottomrule
\end{tabular*}
\endgroup

\end{minipage}\hfill
  \begin{minipage}[t]{0.48\linewidth}
    \vspace{0pt}\centering
    \setlength{\abovecaptionskip}{0pt}
    \includegraphics[width=\linewidth]{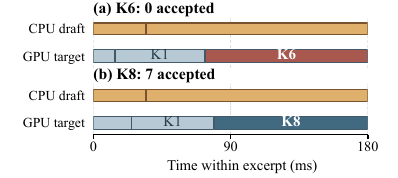}
    \captionof{figure}{\textbf{Qwen CPU drafting overlaps GPU verification on Xiaomi 14 ($b=7$).}}
    \label{fig:native-scheduling-timelines}
  \end{minipage}

\end{table}

\noindent\textbf{Workload and overlap analysis.}\quad
We examine representative workloads and native scheduling traces.
Table~\ref{tab:representative-cases}
selects a completed Xiaomi 14 window nearest each dataset's median gain
over the faster serial baseline.
SpecASR spends 73.1--76.1\% of decode time on drafting; \system gains
91.0--114.6\% over the faster serial baseline. The meeting case gains
18.7\% over target-only: 56 of 144 proposals are accepted, leaving
164 verifier rounds.

Figure~\ref{fig:native-scheduling-timelines} complements these cases with
two excerpts from the median-duration run of five native Qwen repeats on a
Xiaomi 14 LibriSpeech calibration window: the first multi-candidate rejection
and the first fully accepted K8 call. CPU drafting continues while the GPU target decodes
or verifies ready candidates. With proposal budget 7, the target uses K6
when five candidates are ready and K8 when at least seven are ready.
The accepted K8 call commits seven draft tokens in one target pass,
while the rejected K6 call illustrates target correction of mismatched
candidates. Appendix~\ref{sec:appendix-eval-cases} gives case selection and full-run accounting.

\suppressfloats[t]
\begin{table}[t]
\centering
\setlength{\abovecaptionskip}{0pt}
\setlength{\belowcaptionskip}{5pt}
\begin{minipage}[t]{0.355\linewidth}
\vspace{0pt}
\centering
\normalsize
\setlength{\abovecaptionskip}{0pt}
\setlength{\belowcaptionskip}{5pt}
\parbox[t][49pt][t]{\linewidth}{%
\raggedright
\captionof{table}{\textbf{\system approaches the per-window oracle.}
TPS shortfall (\%) from the hindsight drafter/budget oracle;
688 windows/phone.}
\label{tab:joint-oracle-gap}}\par\nointerlineskip
\begingroup
\fontsize{9}{10.5}\selectfont
\setlength{\tabcolsep}{1pt}
\renewcommand{\arraystretch}{1}
\begin{tabular}{lrrrr}
\toprule
Dataset & K70 & 11 Pro & Mi 14 & Note 9 \\
\midrule
LibriSpeech & 2.48 & 2.09 & 2.46 & 2.67 \\
AMI & 3.56 & 5.83 & 5.93 & 2.94 \\
FLEURS & 1.33 & 1.46 & 0.65 & 1.31 \\
KeSpeech & 1.56 & 1.81 & 0.93 & 0.71 \\
M4Singer & 2.03 & 2.76 & 1.76 & 1.25 \\
\bottomrule
\end{tabular}
\endgroup

\end{minipage}\hfill
\begin{minipage}[t]{0.375\linewidth}
\vspace{0pt}
\centering
\normalsize
\setlength{\abovecaptionskip}{0pt}
\setlength{\belowcaptionskip}{5pt}
\parbox[t][49pt][t]{\linewidth}{%
\raggedright
\captionof{table}{\textbf{\system accelerates translation and QA.}
TPS uses K70 Pro phone-cost replay on 80 samples per task.
Gains use the fastest baseline.}
\label{tab:omni-task-results}}\par\nointerlineskip
\begingroup
\fontsize{9}{10.5}\selectfont
\setlength{\tabcolsep}{1.5pt}
\renewcommand{\arraystretch}{1}
\begin{tabular}{lrrr}
\toprule
Method & EN$\to$ZH & ZH$\to$EN & QA$_{\geq16}$ \\
\midrule
Target-only & \underline{8.11} & \underline{8.37} & 7.77 \\
SD & 5.21 & 6.54 & \underline{8.90} \\
SpecASR & 5.99 & 7.15 & 8.57 \\
\system & \textbf{9.52} & \textbf{10.98} & \textbf{9.62} \\
Gain (\%) & +17.4 & +31.2 & +8.0 \\
\bottomrule
\end{tabular}
\endgroup

\end{minipage}\hfill
\begin{minipage}[t]{0.25\linewidth}
\vspace{0pt}
\centering\normalsize
\parbox[t][49pt][t]{\linewidth}{%
\caption{\raggedright\textbf{\system reduces slowdowns.}
Windows below target-only TPS (\%, 688/phone).}
\label{tab:throughput-coverage}}\par\nointerlineskip
\begingroup
\fontsize{9}{10.5}\selectfont
\setlength{\tabcolsep}{0.5pt}
\renewcommand{\arraystretch}{1}
\begin{tabular*}{\linewidth}{@{\extracolsep{\fill}}lrrr@{}}
\toprule
Phone & SD & ASP & \system \\
\midrule
K70 & 21.4 & 45.5 & 6.8 \\
11 Pro & 11.2 & 19.6 & 4.5 \\
Mi 14 & 100.0 & 100.0 & 6.0 \\
Note 9 & 99.7 & 86.9 & 5.4 \\
All & 58.1 & 63.0 & 5.7 \\
\bottomrule
\end{tabular*}
\endgroup

\end{minipage}
\end{table}

\Needspace{8\baselineskip}
\noindent\textbf{Distance to the per-window oracle.}\quad
We compare the deployment with a hindsight oracle on the same
688 windows per phone.
The oracle chooses one feasible drafter and fixed budget per window
from the evaluated catalog, with zero selection cost.
Table~\ref{tab:joint-oracle-gap} reports pooled TPS shortfall by
phone and dataset. \system reaches 97.33--98.20\% of the oracle's overall
pooled TPS; AMI leaves the most headroom.
Appendix~\ref{sec:joint-oracle-gap-protocol} gives the protocol and
cost sensitivity.

\phantomsection
\label{sec:main-omni-tasks}

\noindent\textbf{Impact across source-conditioned tasks.}
We examine \system across source-conditioned tasks with different candidate-generation characteristics. While ASR closely constrains the output from audio, translation and spoken QA permit more varied responses. We compare the four methods on CoVoST2
\citep{wang2021covost2} translation in both directions and Spoken SQuAD
\citep{lee2018spokensquad} QA, with 80 examples per task and shared frozen
Omni-7B target outputs. English-to-Chinese IDs were selected by audio hash;
Chinese-to-English IDs are disjoint from the budget pilot. QA is a post-hoc
subgroup with at least 16 useful target tokens. All three tasks use
the same Omni-7B target and Omni-3B drafter across speculative methods.
We estimate post-prefill TPS using Redmi K70 Pro phone-cost replay.
Source candidates share the target tokenizer and need no text conversion.

Table~\ref{tab:omni-task-results} shows TPS improvements of 17.4\%, 31.2\%,
and 8.0\% over the fastest measured baseline, respectively. Conditional
acceptance is 54.6\%, 63.8\%, and 84.7\%, while accepted-token coverage is
26.6\%, 36.7\%, and 28.3\%. Notably, QA has the highest conditional
acceptance but the smallest improvement, with only 17.5\% of verification
rounds using multiple tokens. These results reveal a key property of
target-decoupled drafting: \emph{high acceptance alone does not translate
into high speedup}. Effective acceleration requires independent candidates
to arrive in time and cover a substantial fraction of the target output,
thereby replacing sequential target steps with multi-token verification.
This observation supports the conditions in Section~\ref{sec:when-help}
and explains when \system can provide substantial gains.
Appendix~\ref{sec:appendix-evaluation} provides selection rules, budgets,
replay cost coverage, and paired audit details.

\noindent\textbf{Frequency of slowdowns.}\quad
To check whether pooled gains hide frequent regressions,
Table~\ref{tab:throughput-coverage} counts windows with lower TPS than
target-only, assigning equal weight to all 688 windows per phone.
\system is slower on 6.8\%, 4.5\%, 6.0\%, and 5.4\%
of windows on K70, Xiaomi 11 Pro, Xiaomi 14, and Redmi Note 9 Pro,
respectively. Across all phone/window pairs, this is 5.7\%, compared with
58.1\% for SD and 63.0\% for SpecASR. Thus \system improves both pooled
throughput and the frequency of gains.
See Appendix~\ref{sec:appendix-eval-cases}.

\Needspace{10\baselineskip}
\subsection{RQ2: Sources of the speedup}
\label{sec:main-ablation}
To separate the contributions of target-decoupled drafting (TD) and
budget configuration (BC), Table~\ref{tab:main-component-ablation} removes
each component while holding the target, 688-window cohort, and
post-prefill clock fixed.
Removing TD uses serial prefix-conditioned
Qwen drafting with the same frozen budget choices. Removing BC restores
ASP's confidence threshold 0.4 and proposal cap (24 on K70, 14 elsewhere)
while retaining TD. Removing both recovers SpecASR.
Appendix Table~\ref{tab:evaluation-performance} gives dataset-level results.

\noindent\textbf{Target-decoupled drafting supplies the main gain.}\quad
With the same budget-7 configuration, \system reaches 27.92 TPS on K70
versus 19.77 without TD, a 41.2\% increase. The corresponding gains are
33.4\%, 95.2\%, and 52.5\% on Xiaomi 11 Pro, Xiaomi 14, and Note 9.
Across all four phones, TD raises pooled TPS from 8.49 to 12.90 (51.9\%).
This ablation measures the joint benefit of target-prefix independence
and asynchronous overlap. Both arms use complete audio and the same initial
prefill. Removing enough exposed drafting work can improve throughput
without increasing acceptance.

\noindent\textbf{Budget configuration adds a smaller benefit.}\quad
On K70, development workloads screen proposal limits 1--24, and validation
retains budget 7. Replacing ASP's
budget rule raises serial TPS from 16.81 to 19.77 (17.6\%). With TD already
present, it raises TPS from 25.60 to 27.92 (9.0\%), with gains of
5.2--17.3\% across the five datasets. The corresponding pooled benefits on
the other phones are 6.4\%, 2.0\%, and 15.4\%.
All four phones use a fixed budget throughout test.
Per-device ablation protocols and sensitivity checks are in
Appendix~\ref{sec:appendix-eval-rq2}.

\begin{table}[!t]
\centering
\setlength{\abovecaptionskip}{0pt}
\setlength{\belowcaptionskip}{5pt}
\begin{minipage}[t]{0.73\linewidth}
\vspace{0pt}
\centering
\normalsize
\setlength{\abovecaptionskip}{2pt}
\IfFileExists{figures/evaluation/fig_native_ondemand_two_phones.pdf}{%
\includegraphics[width=\linewidth]{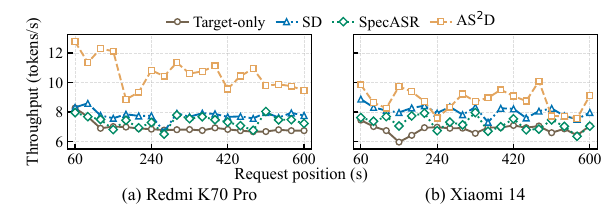}%
}{%
\PackageError{as2d-native-figure}{Measured two-phone Figure 8 is missing}%
{Run plot_fig8_two_phones.py with the completed Xiaomi 14 CSV before compiling.}%
}
\captionof{figure}{\textbf{\system accelerates native ASR requests.}
60s audio per request.}
\label{fig:native-ondemand-fourway}

\end{minipage}\hfill
\begin{minipage}[t]{0.25\linewidth}
\vspace{0pt}
\centering
\normalsize
\parbox[t][44pt][t]{\linewidth}{%
\caption{\raggedright\textbf{Removing TD lowers TPS most.}
Values show TPS after removing BC or TD.}
\label{tab:main-component-ablation}}\par\nointerlineskip
\begingroup
\fontsize{9}{10.5}\selectfont
\setlength{\tabcolsep}{0.5pt}
\renewcommand{\arraystretch}{1}
\begin{tabular*}{\linewidth}{@{\extracolsep{\fill}}lrrr@{}}
\toprule
Phone & \system & $-\mathrm{BC}$ & $-\mathrm{TD}$ \\
\midrule
K70 & \textbf{27.92} & 25.60 & 19.77 \\
11 Pro & \textbf{14.68} & 13.79 & 11.00 \\
Mi 14 & \textbf{28.40} & 27.84 & 14.55 \\
Note 9 & \textbf{5.85} & 5.07 & 3.84 \\
All & \textbf{12.90} & 11.61 & 8.49 \\
\bottomrule
\end{tabular*}
\endgroup

\end{minipage}
\end{table}

\FloatBarrier
\subsection{RQ3: Native on-demand acceleration}
\label{sec:main-native-requests}

To test whether \system accelerates complete responses after audio prefill,
we compare \system, Target-only, SD, and SpecASR in native MNN
on Redmi K70 Pro and Xiaomi 14, using different 10-minute audio
traces. Both phones keep an Omni-7B target resident on
six CPU threads and a Qwen3-ASR-Audio-0.6B drafter on OpenCL. We sample
19 request positions from 60s to 600s of each trace, spaced 30s apart.
At each position, we prepare both models' audio KV for the latest 60s,
then request transcription. All speculative methods use
budget $b=3$. Figure~\ref{fig:native-ondemand-fourway} reports request-to-EOS
TPS, including instruction-suffix prefill but excluding earlier audio
preparation and model loading.

\noindent\textbf{Native request throughput.}\quad
On Redmi K70 Pro, \system achieves the highest measured TPS at all 19 positions, improving
on Target-only by 26.7--78.0\%. At 120s, throughput rises from 6.91 to
12.29 tokens/s, and request completion time falls from 25.05s to 14.07s.
Median TPS across positions is 10.61 for \system, compared with 7.77 for
SD and 7.48 for SpecASR. At 180s, fewer usable candidates limit the gain
to 26.7\%: only 50 draft tokens are accepted and 92 of 120 target calls
remain single-token steps, versus 108 accepted draft tokens and 21 of 65
single-token calls at 120s.
On Xiaomi 14, \system improves pooled TPS by 28.2\% over Target-only
and 8.5\% over SD.

\noindent\textbf{Sustained operation.}\quad
We additionally feed the same native model pair a continuous 10-minute
audiobook recording on Redmi K70 Pro, paced as 150 non-overlapping 4s
blocks. Both models remain resident, with audio and text context reset
every 12s. Processing includes audio encoding, prefill, and decoding as
input arrives. With active cooling and 75\% target audio-token retention,
mean block processing is 3.63s, below the 4s arrival interval. The p95
completion lag is 4.68s, measured from arrival of a block's final audio
sample to output completion, and maximum queue wait is 2.07s. This
configuration keeps pace with the input over 600s.
See Appendix~\ref{sec:native-sustained-streaming}.
On the first 60s Xiaomi 14 window, \system also reduces estimated phone energy
by 30\% compared with Target-only (239\,J versus 342\,J).
See Appendix~\ref{sec:native-request-protocol}.

\section{Conclusion}
\label{sec:conclusion}
We show that audio and the user request can supply speculative candidates
without following the target's evolving verified prefix. \system decouples
drafting from target-prefix updates, allowing the drafter and target to
advance concurrently while retaining target-side verification and correction.
Our Android evaluation spans two target models, four phones, seven datasets,
and three tasks. Phone-profile replay shows 42--76\% higher pooled ASR
throughput than target-only, with slowdowns on 5.7\% of windows versus
58.1--63.0\% for speculative baselines. Native on-demand execution with a 7B
target achieves up to 78\% higher throughput than target-only.
Gains depend on candidate supply and alignment. Future work includes
adaptive processor placement and scheduling, and support for full-duplex
interaction (Appendix~\ref{sec:discussion}).

\label{page:main-text-end}

\clearpage
\setlength{\textfloatsep}{\cadreDefaultTextFloatSep}
\setlength{\floatsep}{\cadreDefaultFloatSep}
\RemoveFromHook{env/figure/begin}[cadre-main-spacing]
\RemoveFromHook{env/table/begin}[cadre-main-spacing]
\subsection*{AI use statement}
Generative AI tools assisted with method formulation, experiment design,
implementation and execution, data processing and result interpretation,
mathematical arguments and proof drafting, figure preparation, and manuscript
drafting, revision, and formatting.
The authors defined the research questions and experimental scope, directed
this assistance through iterative instructions, and reviewed proposed changes,
results, and interpretations at each stage. Research and editorial decisions
remained with the authors, who take responsibility for the paper, implementation, and results.

\subsection*{Reproducibility statement}
Appendix~\ref{sec:problem-formulation} defines the input, output-comparison,
and timing protocols. Appendix~\ref{sec:design} specifies candidate
publication, verification, alignment, and the assumptions for output
equivalence. Model and device configurations, dataset selection, baselines,
development splits, and calibration procedures are documented in
Appendices~\ref{sec:appendix-eval-setup} and~\ref{sec:appendix-eval-deployment}.
Appendix~\ref{sec:replay-agreement} compares replay with native measurements;
Appendix~\ref{sec:appendix-eval-rq3} provides the native execution protocols
and output checks. Observed numerical discrepancies and their implications
are retained in Appendix~\ref{sec:numerical-sensitivity}.

\subsection*{Ethics statement}
This work studies execution scheduling for pretrained audio language models.
The evaluation uses existing speech datasets and prerecorded audio.
The method retains target-side verification
but does not address errors, bias, or harmful outputs of the underlying
models. Applications that record or retain speech should obtain appropriate
consent, protect audio and cached states, and respect applicable model and
dataset terms.

\label{page:references-start}
\bibliographystyle{iclr2027_conference}
\bibliography{references}

@article{liang2026dynamic,
  title = {Reliable and Efficient {LLM} Inference on Resource-Constrained Mobile Devices via Dynamic Scheduling},
  author = {Liang, Aoxing and Hong, Chengfan and Li, Yunzhe and Zhu, Hongzi},
  journal = {IEEE Internet of Things Journal},
  volume = {13},
  number = {9},
  pages = {19751--19754},
  year = {2026},
  doi = {10.1109/JIOT.2026.3659676}
}

@inproceedings{zhang2025sgdrc,
  title = {{SGDRC}: Software-Defined Dynamic Resource Control for Concurrent {DNN} Inference on {NVIDIA GPUs}},
  author = {Zhang, Yongkang and Yu, Haoxuan and Han, Chenxia and Wang, Cheng and Lu, Baotong and Li, Yunzhe and Jiang, Zhifeng and Li, Yang and Chu, Xiaowen and Li, Huaicheng},
  booktitle = {Proceedings of the 30th ACM SIGPLAN Annual Symposium on Principles and Practice of Parallel Programming},
  pages = {267--281},
  year = {2025},
  publisher = {Association for Computing Machinery},
  doi = {10.1145/3710848.3710863}
}

@inproceedings{li2024anole,
  title = {{Anole}: Adapting Diverse Compressed Models for Cross-Scene Prediction on Mobile Devices},
  author = {Li, Yunzhe and Zhu, Hongzi and Deng, Zhuohong and Cheng, Yunlong and Zhang, Liang and Chang, Shan and Guo, Minyi},
  booktitle = {2024 IEEE 44th International Conference on Distributed Computing Systems (ICDCS)},
  pages = {613--623},
  year = {2024},
  doi = {10.1109/ICDCS60910.2024.00063}
}

@article{li2025sceneaware,
  title = {A Scene-Aware Model Adaptation Scheme for Cross-Scene Online Inference on Mobile Devices},
  author = {Li, Yunzhe and Zhu, Hongzi and Deng, Zhuohong and Cheng, Yunlong and Zheng, Zimu and Zhang, Liang and Chang, Shan and Guo, Minyi},
  journal = {IEEE Transactions on Mobile Computing},
  volume = {24},
  number = {10},
  pages = {11061--11075},
  year = {2025},
  doi = {10.1109/TMC.2025.3574766}
}

@article{yuan2025numerical,
  title = {Understanding and Mitigating Numerical Sources of Nondeterminism in {LLM} Inference},
  author = {Yuan, Jiayi and Li, Hao and Ding, Xinheng and Xie, Wenya and Li, Yu-Jhe and Zhao, Wentian and Wan, Kun and Shi, Jing and Hu, Xia and Liu, Zirui},
  journal = {arXiv preprint arXiv:2506.09501},
  year = {2025},
  doi = {10.48550/arXiv.2506.09501},
  url = {https://arxiv.org/abs/2506.09501}
}

@article{he2025nondeterminism,
  title = {Defeating Nondeterminism in {LLM} Inference},
  author = {He, Horace and {Thinking Machines Lab}},
  journal = {Thinking Machines Lab: Connectionism},
  year = {2025},
  doi = {10.64434/tml.20250910},
  url = {https://thinkingmachines.ai/blog/defeating-nondeterminism-in-llm-inference/}
}

@inproceedings{leviathan2023speculative,
  title = {Fast Inference from Transformers via Speculative Decoding},
  author = {Leviathan, Yaniv and Kalman, Matan and Matias, Yossi},
  booktitle = {Proceedings of the 40th International Conference on Machine Learning},
  series = {Proceedings of Machine Learning Research},
  volume = {202},
  pages = {19274--19286},
  year = {2023},
  publisher = {PMLR},
  url = {https://proceedings.mlr.press/v202/leviathan23a.html}
}

@inproceedings{cai2024medusa,
  title = {Medusa: Simple {LLM} Inference Acceleration Framework with Multiple Decoding Heads},
  author = {Cai, Tianle and Li, Yuhong and Geng, Zhengyang and Peng, Hongwu and Lee, Jason D. and Chen, Deming and Dao, Tri},
  booktitle = {Proceedings of the 41st International Conference on Machine Learning},
  series = {Proceedings of Machine Learning Research},
  volume = {235},
  pages = {5209--5235},
  year = {2024},
  publisher = {PMLR},
  url = {https://proceedings.mlr.press/v235/cai24b.html}
}

@inproceedings{li2024eagle,
  title = {{EAGLE}: Speculative Sampling Requires Rethinking Feature Uncertainty},
  author = {Li, Yuhui and Wei, Fangyun and Zhang, Chao and Zhang, Hongyang},
  booktitle = {Proceedings of the 41st International Conference on Machine Learning},
  series = {Proceedings of Machine Learning Research},
  volume = {235},
  pages = {28935--28948},
  year = {2024},
  publisher = {PMLR},
  url = {https://proceedings.mlr.press/v235/li24bt.html}
}

@inproceedings{li2024eagle2,
  title = {{EAGLE-2}: Faster Inference of Language Models with Dynamic Draft Trees},
  author = {Li, Yuhui and Wei, Fangyun and Zhang, Chao and Zhang, Hongyang},
  booktitle = {Proceedings of the 2024 Conference on Empirical Methods in Natural Language Processing},
  pages = {7421--7432},
  year = {2024},
  publisher = {Association for Computational Linguistics},
  doi = {10.18653/v1/2024.emnlp-main.422}
}

@inproceedings{wei2025specasr,
  title = {{SpecASR}: Accelerating {LLM}-based Automatic Speech Recognition via Speculative Decoding},
  author = {Wei, Linye and Zhong, Shuzhang and Xu, Songqiang and Wang, Runsheng and Huang, Ru and Li, Meng},
  booktitle = {Proceedings of the 62nd ACM/IEEE Design Automation Conference},
  pages = {1--7},
  year = {2025},
  doi = {10.1109/DAC63849.2025.11132579}
}

@inproceedings{yusuf2024ssr,
  title = {Speculative Speech Recognition by Audio-Prefixed Low-Rank Adaptation of Language Models},
  author = {Yusuf, Bolaji and Baskar, Murali Karthick and Rosenberg, Andrew and Ramabhadran, Bhuvana},
  booktitle = {Interspeech 2024},
  pages = {792--796},
  year = {2024},
  doi = {10.21437/Interspeech.2024-298}
}

@inproceedings{segalfeldman2025whispermedusa,
  title = {Whisper in Medusa's Ear: Multi-head Efficient Decoding for Transformer-based {ASR}},
  author = {Segal-Feldman, Yael and Shamsian, Aviv and Navon, Aviv and Hetz, Gill and Keshet, Joseph},
  booktitle = {2025 IEEE International Conference on Acoustics, Speech, and Signal Processing},
  pages = {1--5},
  year = {2025},
  doi = {10.1109/ICASSP49660.2025.10888140},
  url = {https://doi.org/10.1109/ICASSP49660.2025.10888140}
}

@article{saon2026selfspec,
  title = {Self-Speculative Decoding for {LLM}-based {ASR} with {CTC} Encoder Drafts},
  author = {Saon, George and Thomas, Samuel and Fukuda, Takashi and Nagano, Tohru and Dekel, Avihu and Lastras, Luis},
  journal = {arXiv preprint arXiv:2603.11243},
  year = {2026},
  url = {https://arxiv.org/abs/2603.11243}
}

@inproceedings{okabe2025smud,
  title = {Simultaneous Masked and Unmasked Decoding with Speculative Decoding Masking for Fast {ASR} without Accuracy Loss},
  author = {Okabe, Koji and Yamamoto, Hitoshi},
  booktitle = {Interspeech 2025},
  pages = {634--638},
  year = {2025},
  doi = {10.21437/Interspeech.2025-382},
  url = {https://www.isca-archive.org/interspeech_2025/okabe25_interspeech.html}
}

@inproceedings{miao2024specinfer,
  title = {{SpecInfer}: Accelerating Large Language Model Serving with Tree-based Speculative Inference and Verification},
  author = {Miao, Xupeng and Oliaro, Gabriele and Zhang, Zhihao and Cheng, Xinhao and Wang, Zeyu and Zhang, Zhengxin and Wong, Rae Ying Yee and Zhu, Alan and Yang, Lijie and Shi, Xiaoxiang and Shi, Chunan and Chen, Zhuoming and Arfeen, Daiyaan and Abhyankar, Reyna and Jia, Zhihao},
  booktitle = {Proceedings of the 29th ACM International Conference on Architectural Support for Programming Languages and Operating Systems, Volume 3},
  pages = {932--949},
  year = {2024},
  publisher = {ACM},
  doi = {10.1145/3620666.3651335},
  url = {https://doi.org/10.1145/3620666.3651335}
}

@article{chen2024sequoia,
  title = {{Sequoia}: Scalable, Robust, and Hardware-aware Speculative Decoding},
  author = {Chen, Zhuoming and May, Avner and Svirschevski, Ruslan and Huang, Yuhsun and Ryabinin, Max and Jia, Zhihao and Chen, Beidi},
  journal = {arXiv preprint arXiv:2402.12374},
  year = {2024},
  doi = {10.48550/arXiv.2402.12374},
  url = {https://arxiv.org/abs/2402.12374}
}

@inproceedings{liu2024online,
  title = {Online Speculative Decoding},
  author = {Liu, Xiaoxuan and Hu, Lanxiang and Bailis, Peter and Cheung, Alvin and Deng, Zhijie and Stoica, Ion and Zhang, Hao},
  booktitle = {Proceedings of the 41st International Conference on Machine Learning},
  series = {Proceedings of Machine Learning Research},
  volume = {235},
  pages = {31131--31146},
  year = {2024},
  publisher = {PMLR},
  url = {https://proceedings.mlr.press/v235/liu24y.html}
}

@inproceedings{zhang2024draftverify,
  title = {Draft {\&} Verify: Lossless Large Language Model Acceleration via Self-Speculative Decoding},
  author = {Zhang, Jun and Wang, Jue and Li, Huan and Shou, Lidan and Chen, Ke and Chen, Gang and Mehrotra, Sharad},
  booktitle = {Proceedings of the 62nd Annual Meeting of the Association for Computational Linguistics (Volume 1: Long Papers)},
  pages = {11263--11282},
  year = {2024},
  publisher = {Association for Computational Linguistics},
  doi = {10.18653/v1/2024.acl-long.607},
  url = {https://aclanthology.org/2024.acl-long.607/}
}

@article{xu2025edgellm,
  title = {{EdgeLLM}: Fast On-device {LLM} Inference with Speculative Decoding},
  author = {Xu, Daliang and Yin, Wangsong and Zhang, Hao and Jin, Xin and Zhang, Ying and Wei, Shiyun and Xu, Mengwei and Liu, Xuanzhe},
  journal = {IEEE Transactions on Mobile Computing},
  volume = {24},
  number = {4},
  pages = {3256--3273},
  year = {2025},
  doi = {10.1109/TMC.2024.3513457}
}

@article{xu2023llmcad,
  title = {{LLMCad}: Fast and Scalable On-device Large Language Model Inference},
  author = {Xu, Daliang and Yin, Wangsong and Jin, Xin and Zhang, Ying and Wei, Shiyun and Xu, Mengwei and Liu, Xuanzhe},
  journal = {arXiv preprint arXiv:2309.04255},
  year = {2023},
  doi = {10.48550/arXiv.2309.04255},
  url = {https://arxiv.org/abs/2309.04255}
}

@article{wang2026lever,
  title = {Lever: Speculative {LLM} Inference on Smartphones},
  author = {Wang, Tuowei and Li, Fengzu and Sun, Yanfan and Gao, Wei and Ren, Ju},
  journal = {arXiv preprint arXiv:2605.16786},
  year = {2026},
  doi = {10.48550/arXiv.2605.16786},
  url = {https://arxiv.org/abs/2605.16786}
}

@inproceedings{yang2026pelm,
  title = {{PELM}: Power Efficient On-Device {LLM} Inference with Speculative Decoding and Dynamic Voltage Frequency Scaling},
  author = {Yang, Weisi and Xia, Stephen},
  booktitle = {Proceedings of the 2026 ACM/IEEE International Conference on Embedded Artificial Intelligence and Sensing Systems},
  pages = {438--451},
  year = {2026},
  publisher = {ACM},
  doi = {10.1145/3774906.3802783},
  url = {https://doi.org/10.1145/3774906.3802783}
}

@article{shi2026qwen3asr,
  title = {{Qwen3-ASR} Technical Report},
  author = {Shi, Xian and Wang, Xiong and Guo, Zhifang and Wang, Yongqi and Zhang, Pei and Zhang, Xinyu and Guo, Zishan and Hao, Hongkun and Yu, Xi and Yang, Baosong and Xu, Jin and Zhou, Jingren and Lin, Junyang},
  journal = {arXiv preprint arXiv:2601.21337},
  year = {2026},
  url = {https://arxiv.org/abs/2601.21337}
}

@inproceedings{panayotov2015librispeech,
  title = {{LibriSpeech}: An {ASR} Corpus Based on Public Domain Audio Books},
  author = {Panayotov, Vassil and Chen, Guoguo and Povey, Daniel and Khudanpur, Sanjeev},
  booktitle = {2015 IEEE International Conference on Acoustics, Speech and Signal Processing},
  pages = {5206--5210},
  year = {2015},
  doi = {10.1109/ICASSP.2015.7178964}
}

@inproceedings{carletta2006ami,
  title = {The {AMI} Meeting Corpus: A Pre-Announcement},
  author = {Carletta, Jean and Ashby, Simone and Bourban, Sebastien and Flynn, Mike and Guillemot, Mael and Hain, Thomas and Kadlec, Jaroslav and Karaiskos, Vasilis and Kraaij, Wessel and Kronenthal, Melissa and Lathoud, Guillaume and Lincoln, Mike and Lisowska, Agnes and McCowan, Iain and Post, Wilfried and Reidsma, Dennis and Wellner, Pierre},
  booktitle = {Machine Learning for Multimodal Interaction, Second International Workshop},
  pages = {28--39},
  year = {2006},
  publisher = {Springer},
  doi = {10.1007/11677482_3}
}

@inproceedings{conneau2022fleurs,
  title = {{FLEURS}: Few-Shot Learning Evaluation of Universal Representations of Speech},
  author = {Conneau, Alexis and Ma, Min and Khanuja, Simran and Zhang, Yu and Axelrod, Vera and Dalmia, Siddharth and Riesa, Jason and Rivera, Clara and Bapna, Ankur},
  booktitle = {2022 IEEE Spoken Language Technology Workshop},
  pages = {798--805},
  year = {2022},
  doi = {10.1109/SLT54892.2023.10023141}
}

@inproceedings{tang2021kespeech,
  title = {{KeSpeech}: An Open Source Speech Dataset of Mandarin and Its Eight Subdialects},
  author = {Tang, Zhiyuan and Wang, Dong and Xu, Yanguang and Sun, Jianwei and Lei, Xiaoning and Zhao, Shuaijiang and Wen, Cheng and Tan, Xingjun and Xie, Chuandong and Zhou, Shuran and Yan, Rui and Lv, Chenjia and Han, Yang and Zou, Wei and Li, Xiangang},
  booktitle = {Proceedings of the Neural Information Processing Systems Track on Datasets and Benchmarks},
  volume = {1},
  year = {2021},
  url = {https://datasets-benchmarks-proceedings.neurips.cc/paper/2021/hash/0336dcbab05b9d5ad24f4333c7658a0e-Abstract-round2.html}
}

@inproceedings{zhang2022m4singer,
  title = {{M4Singer}: A Multi-Style, Multi-Singer and Musical Score Provided Mandarin Singing Corpus},
  author = {Zhang, Lichao and Li, Ruiqi and Wang, Shoutong and Deng, Liqun and Liu, Jinglin and Ren, Yi and He, Jinzheng and Huang, Rongjie and Zhu, Jieming and Chen, Xiao and Zhao, Zhou},
  booktitle = {Advances in Neural Information Processing Systems Datasets and Benchmarks Track},
  year = {2022},
  url = {https://proceedings.neurips.cc/paper_files/paper/2022/hash/2de60892dd329683ec21877a4e7c3091-Abstract-Datasets_and_Benchmarks.html}
}

@inproceedings{wang2021covost2,
  title = {{CoVoST 2} and Massively Multilingual Speech Translation},
  author = {Wang, Changhan and Wu, Anne and Gu, Jiatao and Pino, Juan},
  booktitle = {Interspeech 2021},
  year = {2021},
  pages = {2247--2251},
  doi = {10.21437/Interspeech.2021-2027},
  url = {https://www.isca-archive.org/interspeech_2021/wang21s_interspeech.html}
}

@inproceedings{lee2018spokensquad,
  title = {{Spoken SQuAD}: A Study of Mitigating the Impact of Speech Recognition Errors on Listening Comprehension},
  author = {Lee, Chia-Hsuan and Wu, Szu-Lin and Liu, Chi-Liang and Lee, Hung-yi},
  booktitle = {Interspeech 2018},
  year = {2018},
  pages = {3459--3463},
  doi = {10.21437/Interspeech.2018-1714},
  url = {https://www.isca-archive.org/interspeech_2018/lee18d_interspeech.html}
}

@inproceedings{jiang2020mnn,
  title = {{MNN}: A Universal and Efficient Inference Engine},
  author = {Jiang, Xiaotang and Wang, Huan and Chen, Yiliu and Wu, Ziqi and Wang, Lichuan and Zou, Bin and Yang, Yafeng and Cui, Zongyang and Cai, Yu and Yu, Tianhang and Lyu, Chengfei and Wu, Zhihua},
  booktitle = {Proceedings of Machine Learning and Systems},
  volume = {2},
  year = {2020},
  url = {https://proceedings.mlsys.org/paper_files/paper/2020/hash/bc19061f88f16e9ed4a18f0bbd47048a-Abstract.html}
}

@misc{mnn2026,
  title = {{MNN}: A Lightweight Deep Neural Network Inference Engine},
  author = {{Alibaba Group}},
  year = {2026},
  howpublished = {Software repository},
  url = {https://github.com/alibaba/MNN}
}

@inproceedings{he2019streaming,
  title = {Streaming End-to-End Speech Recognition for Mobile Devices},
  author = {He, Yanzhang and Sainath, Tara N. and Prabhavalkar, Rohit and McGraw, Ian and Alvarez, Raziel and Zhao, Ding and Rybach, David and Kannan, Anjuli and Wu, Yonghui and Pang, Ruoming and Liang, Qiao and Bhatia, Deepti and Shangguan, Yuan and Li, Bo and Pundak, Golan and Sim, Khe Chai and Bagby, Tom and Chang, Shuo-yiin and Rao, Kanishka and Gruenstein, Alexander},
  booktitle = {IEEE International Conference on Acoustics, Speech and Signal Processing (ICASSP)},
  year = {2019},
  url = {https://arxiv.org/abs/1811.06621}
}

@article{orhon2025whisperkit,
  title = {{WhisperKit}: On-device Real-time {ASR} with Billion-Scale Transformers},
  author = {Orhon, Atila and Okan, Arda and Durmus, Berkin and Nagengast, Zach and Pacheco, Eduardo},
  journal = {arXiv preprint arXiv:2507.10860},
  year = {2025},
  doi = {10.48550/arXiv.2507.10860},
  url = {https://arxiv.org/abs/2507.10860}
}

@inproceedings{zhang2026ltd,
  title = {Learning to Draft: Adaptive Speculative Decoding with Reinforcement Learning},
  author = {Zhang, Jiebin and Yu, Zhenghan and Wang, Liang and Yang, Nan and Yu, Eugene J. and Li, Zheng and Song, Yifan and Zhu, Dawei and Zhang, Xingxing and Wei, Furu and Li, Sujian},
  booktitle = {International Conference on Learning Representations},
  year = {2026},
  url = {https://arxiv.org/abs/2603.01639}
}

@inproceedings{kim2026metasd,
  title = {Multi-Drafter Speculative Decoding with Alignment Feedback},
  author = {Kim, Taehyeon and Jung, Hojung and Yun, Se-Young},
  booktitle = {Findings of the Association for Computational Linguistics: ACL 2026},
  pages = {32532--32573},
  year = {2026},
  doi = {10.18653/v1/2026.findings-acl.1629},
  url = {https://aclanthology.org/2026.findings-acl.1629/}
}

@inproceedings{do2026unispec,
  title = {{UniSpec}: Training-Free Speculative Decoding for Robust {LLM} Acceleration Across Languages and Hardware},
  author = {Do, Truong Dinh and Le, Nguyen-Khang and Nguyen, Le-Minh},
  booktitle = {Proceedings of the 64th Annual Meeting of the Association for Computational Linguistics},
  year = {2026},
  url = {https://aclanthology.org/2026.acl-long.285/}
}

@article{ma2026ahasd,
  title = {{AHASD}: Asynchronous Heterogeneous Architecture for {LLM} Adaptive Drafting Speculative Decoding on Mobile Devices},
  author = {Ma, Zirui and Fan, Zhihua and Li, Wenxing and Wu, Haibin and Zhang, Fulin and Ye, Xiaochun and Li, Wenming},
  journal = {arXiv preprint arXiv:2604.25326},
  year = {2026},
  url = {https://arxiv.org/abs/2604.25326}
}

@inproceedings{kumar2026ssd,
  title = {Speculative Speculative Decoding},
  author = {Kumar, Tanishq and Dao, Tri and May, Avner},
  booktitle = {International Conference on Learning Representations},
  year = {2026},
  url = {https://proceedings.iclr.cc/paper_files/paper/2026/hash/1b96f01343ff10150e6719eb163e1536-Abstract-Conference.html}
}

@inproceedings{liu2025pearl,
  title = {{PEARL}: Parallel Speculative Decoding with Adaptive Draft Length},
  author = {Liu, Tianyu and Li, Yun and Lv, Qitan and Liu, Kai and Zhu, Jianchen and Hu, Winston and Sun, Xiao},
  booktitle = {International Conference on Learning Representations},
  year = {2025},
  url = {https://proceedings.iclr.cc/paper_files/paper/2025/hash/03b1043052700b1a471996b0baf309d4-Abstract-Conference.html}
}

@inproceedings{timor2025dsi,
  title = {Distributed Speculative Inference ({DSI}): Speculation Parallelism for Provably Faster Lossless Language Model Inference},
  author = {Timor, Nadav and Mamou, Jonathan and Korat, Daniel and Berchansky, Moshe and Pereg, Oren and Wasserblat, Moshe and Galanti, Tomer and Gordon-Kiwkowitz, Michal and Harel, David},
  booktitle = {International Conference on Learning Representations},
  year = {2025},
  url = {https://proceedings.iclr.cc/paper_files/paper/2025/hash/b36554b97da741b1c48c9de05c73993e-Abstract-Conference.html}
}

@article{xu2025qwen25omni,
  title = {{Qwen2.5-Omni} Technical Report},
  author = {Xu, Jin and Guo, Zhifang and He, Jinzheng and Hu, Hangrui and He, Ting and Bai, Shuai and Chen, Keqin and Wang, Jialin and Fan, Yang and Dang, Kai and Zhang, Bin and Wang, Xiong and Chu, Yunfei and Lin, Junyang},
  journal = {arXiv preprint arXiv:2503.20215},
  year = {2025},
  doi = {10.48550/arXiv.2503.20215},
  url = {https://arxiv.org/abs/2503.20215}
}

@inproceedings{wang2025speechprosody,
  title = {Can {AI} Understand Mandarin Speech Prosody? A Framework and Benchmark Showcase},
  author = {Wang, Zilong and Zhang, Xiaoxue and Jiang, Xinyang and Song, Kaitao and Yu, Jue},
  booktitle = {Interspeech 2025},
  pages = {5378--5382},
  year = {2025},
  doi = {10.21437/Interspeech.2025-1873},
  url = {https://www.isca-archive.org/interspeech_2025/wang25v_interspeech.html}
}

@article{gibier2025segmentwise,
  author = {Gibier, Marcel and Duroselle, Rapha{\"e}l and Serrano, Pierre and Boeffard, Olivier and Bonastre, Jean-Fran{\c{c}}ois},
  title = {Segmentwise Pruning in Audio-Language Models},
  journal = {arXiv preprint arXiv:2511.14293},
  year = {2025},
  url = {https://arxiv.org/abs/2511.14293}
}

@misc{openbmb2025minicpmo26,
  author = {{OpenBMB}},
  title = {{MiniCPM-o 2.6}: A {GPT-4o} Level {MLLM} for Vision, Speech and Multimodal Live Streaming on Your Phone},
  year = {2025},
  howpublished = {Official model card},
  url = {https://huggingface.co/openbmb/MiniCPM-o-2_6}
}

@article{sun2026omnimem,
  author = {Sun, Guangzhi and Li, Yixuan and Yang, Yudong and Zhang, Chao},
  title = {{OmniMem}: Perturbation-aware Memory Compression for Streaming Audio-Visual {LLMs}},
  journal = {arXiv preprint arXiv:2606.07577},
  year = {2026},
  doi = {10.48550/arXiv.2606.07577},
  url = {https://arxiv.org/abs/2606.07577}
}

@inproceedings{zulfikar2024memoro,
  author = {Zulfikar, Wazeer and Chan, Samantha and Maes, Pattie},
  title = {{Memoro}: Using Large Language Models to Realize a Concise Interface for Real-Time Memory Augmentation},
  booktitle = {Proceedings of the CHI Conference on Human Factors in Computing Systems},
  year = {2024},
  doi = {10.1145/3613904.3642450},
  url = {https://arxiv.org/abs/2403.02135}
}

@article{hegde2026larag,
  author = {Hegde, Kartik and Sridhar, Arvind Krishna and Vakada, Naveen and Guo, Yinyi and Visser, Erik},
  title = {Event-Grounded Question Answering over Long Audio via Structured Retrieval},
  journal = {arXiv preprint arXiv:2602.14612},
  year = {2026},
  doi = {10.48550/arXiv.2602.14612},
  url = {https://arxiv.org/abs/2602.14612}
}

@misc{qualcomm2026sd750g,
  author = {{Qualcomm}},
  title = {{Snapdragon 750G 5G Mobile Platform}: Product Brief},
  year = {2026},
  howpublished = {Official product specifications},
  note = {Accessed September 25, 2026},
  url = {https://www.qualcomm.com/media/documents/files/snapdragon-750g-5g-mobile-platform-product-brief.pdf}
}

@misc{qualcomm2026sd888,
  author = {{Qualcomm}},
  title = {{Snapdragon 888 5G Mobile Platform}: Product Brief},
  year = {2026},
  howpublished = {Official product specifications},
  note = {Accessed September 25, 2026},
  url = {https://www.qualcomm.com/content/dam/qcomm-martech/dm-assets/documents/prod_brief_qcom_sd888_5g_0.pdf}
}

@misc{xiaomi2026mi14specs,
  author = {{Xiaomi}},
  title = {{Xiaomi 14 Specifications}},
  year = {2026},
  howpublished = {Official product specifications},
  note = {Accessed September 25, 2026},
  url = {https://www.mi.com/global/product/xiaomi-14/specs/}
}

@misc{qualcomm2023sd8gen3,
  author = {{Qualcomm}},
  title = {{Snapdragon 8 Gen 3 Mobile Platform}: Product Brief},
  year = {2023},
  note = {87-71408-1, Revision B},
  url = {https://www.qualcomm.com/content/dam/qcomm-martech/dm-assets/images/company/news-media/media-center/press-kits/snapdragon-summit-2023/documents/Snapdragon8Gen3_%20ProductBrief.pdf}
}

@misc{ggml2026opencl,
  author = {{llama.cpp contributors}},
  title = {{llama.cpp} for {OpenCL}: Hardware Support},
  year = {2026},
  note = {Official backend documentation, accessed September 25, 2026},
  url = {https://github.com/ggml-org/llama.cpp/blob/master/docs/backend/OPENCL.md}
}

@misc{xiaomi2026k70specs,
  author = {{Xiaomi}},
  title = {{Redmi K70 Pro Specifications}},
  year = {2026},
  howpublished = {Official product specifications},
  note = {Accessed September 25, 2026},
  url = {https://www.mi.com/redmi-k70-pro/specs}
}

@misc{xiaomi2026mi11prospecs,
  author = {{Xiaomi}},
  title = {{Mi 11 Pro Specifications}},
  year = {2026},
  howpublished = {Official product specifications},
  note = {Accessed September 25, 2026},
  url = {https://www.mi.com/mi11Pro/specs}
}

@article{luo2026locality,
  author = {Luo, Jiale and Liang, Xiaoyu and Hu, Haoji},
  title = {Locality Matters for Training-Free Audio Token Compression in Audio-Language Models},
  journal = {arXiv preprint arXiv:2605.25179},
  year = {2026},
  doi = {10.48550/arXiv.2605.25179},
  url = {https://arxiv.org/abs/2605.25179}
}

\clearpage
\appendix
\label{page:appendix-start}
\FloatBarrier
\section{Input protocol and measurement contract}
\label{sec:problem-formulation}

\subsection{Requests, audio availability, and reference execution}
A fixed protocol \(\mathcal S\) specifies audio preparation, prompt
assembly, the audio view at each target position, initial state, context
retention, and termination. Request \(u\) arrives at \(s_u\) with instruction
\(p_u\). Protocol \(\mathcal S\) selects its audio view \(x_u\) from audio
received by \(s_u\). Reference input \(z_u=(x_u,p_u)\) is fixed throughout
decoding, even if more audio arrives before the response completes.
Neither the producer nor the target may extend this input with later audio.
Comparisons use the same protocol and target, rather than changing the
reference's context to accommodate a faster implementation.

The on-demand setting prepares audio before the request. A reusable
state \(C_u\) contains the resulting target language-model KV,
positions, and other state needed to continue the same reference execution;
encoder embeddings alone do not remove language-model prefill. Reuse also
requires the cached tokens to be a valid prefix of the reference prompt,
with matching audio features, positions, and attention conventions. An
instruction that changes earlier prompt tokens can require reconstruction;
a cache cannot simply be attached to an arbitrary prompt. Processing the
instruction using compatible audio KV produces the initial decoding state
\(\kappa_{u,0}\). Each response starts from this state, without
silently reusing a previous response's generated text. Background work,
cache growth, and any remaining work at request arrival must be recorded.
This preparation protocol is shared across compared methods.

The target and drafter maintain separate states. Retaining target audio KV
does not prefill another model: source input preparation and any source
prefill remain producer work. The request-scoped candidate buffer starts
empty in the primary comparison. Producer dispatch satisfies
\(\tau_u\geq s_u\); no task-specific response is assumed available before
its instruction. Once dispatched, the producer does not wait for the target
to accept earlier candidates. Any variant using candidates prepared before
a request requires a separate preparation and timing protocol.

The current full-audio study makes a complete window available before
initial audio/prompt prefill and retains target context throughout that
window. The Qwen source reads the same available waveform and publishes
candidate tokens incrementally. External Parakeet and SenseVoice sources
in the deployment-selection studies instead process 2.56-second waveform
chunks; their chunk boundaries do not reset the target. The full-window
study isolates decoding and does not measure a live microphone stream.

For a fixed request \(z_u\), let \(Y_T(z_u;\mathcal S)\) denote the greedy target
reference under deterministic tie-breaking. Successful speculative execution
must satisfy
\begin{equation}
 Y_{\system}(z_u;\mathcal S)=Y_T(z_u;\mathcal S)
 \label{eq:exact-output}
\end{equation}
and restore the corresponding target state at each committed prefix.
Recognition error against an annotation is a separate quality metric;
matching a canonical token trace in replay does not validate native KV
restoration. Appendix~\ref{sec:exactness-proof} states the required backend
conditions, and Appendix~\ref{sec:evaluation} retains numerical discrepancies
observed in native checks.

\subsection{Configuration and execution boundaries}
The decoding scheme \(\mathcal A\) uses a fixed task-specific drafter and
a budget selected before deployment. Ready candidates determine the block
processed at each target boundary.
In the four-phone ASR corpus comparison, three phones retain Qwen/budget 7
and Note 9 Pro uses Qwen/budget 3. Appendix~\ref{sec:deployment-selection}
documents their development splits and selection scope. At a verification
boundary, readiness, confidence stopping, the configured cap, and the
remaining output budget determine the actual chain length. A proposal cap
\(b\) permits up to \(b\) candidates after one known target root, so physical
verification width satisfies \(K\leq b+1\).

Candidate publication preserves request/source identity and token order.
At each boundary, the target uses already-published candidates and earlier
verified outcomes. When no usable candidate is ready, it advances alone.
The target verifies proposals before committing tokens.

\subsection{Response latency, throughput, and memory}
The problem in Section~\ref{sec:main-problem} minimizes mean request latency
\(L_u^{\mathcal A}=f_u^{\mathcal A}-s_u\), ending when the complete requested output becomes
available. Its clock begins at the prescribed request time, even if audio
prefill has not finished. Request processing, residual prefill, producer
startup, queue work, verification, correction, and alignment after
that time are included. Overlapping stages share one wall clock. Producer
cleanup after the final output is recorded separately and charged to any
later requests it delays. The optional tail constraint
\(Q_\alpha(L_u^{\mathcal A})\leq L_{\max}\) is an application requirement, not a
guarantee implied by mean decoding throughput.

The current corpus studies measure committed tokens per post-prefill time.
That boundary includes exposed source work, verification,
correction/alignment, queue work, and charged producer completion; it
excludes loading and initial audio/prompt prefill. These measurements
isolate decoding cost but do not reproduce the request clock or establish
that a growing audio KV cache remains ready on a phone. Native case studies
retain their stated completion and cleanup boundaries and are not pooled
across different runners.

The continuous-output deployment in
Appendix~\ref{sec:native-sustained-streaming} instead measures
\(\ell_i=f_i-r_i\), from a block's final audio sample to its completed
output. Its per-block prefill, automatic output, and 12s context resets
differ from persistent prefill followed by an on-demand request. Its RTF
and lag cannot be relabeled as request latency. Memory accounting in the
on-demand setting includes resident models, retained audio KV and position
state, source and candidate buffers, and temporary allocations; finite
capacity bounds the audio history that can be served without eviction.

\FloatBarrier
\section{Empirical design diagnostics}
\label{sec:motivation}
\label{sec:main-empirical}
This appendix gives the measurement protocols for the drafting and
verification diagnostics in Section~\ref{sec:main-method}, together with
an additional check of how candidate acceptance varies across corpora.

\subsection{Acceptance is content dependent}

We measure strict prefix acceptance over 28,595 audio blocks from four
corpora using a Parakeet 110M CTC proposer and a Qwen3-ASR 1.7B target at
\(K=32\). The median accepted ratio is 0.55 on AMI and 0.52 on SPGISpeech,
falls to 0.31 on GigaSpeech, and is zero on Earnings22. The distributions
overlap, showing variation within each corpus as well as across corpora.

\subsection{Device-specific verification profiles}
\label{sec:verification-width-profile}

Figure~\ref{fig:verifier-width} uses native full-logits width sweeps on
Redmi K70 Pro and Xiaomi 11 Pro. Both use a Qwen3-ASR 1.7B OpenCL target,
the same 59.5125-second LibriSpeech calibration audio, and three target-prefix
offsets (0, 64, and 128 tokens). After audio/prompt prefill, each width
is measured five times at each offset, excluding warmup and with no
concurrently executing producer. Both models remain resident.
The plot uses the common range \(K=2,\ldots,15\), with 15 measurements
per phone and width. K70 Pro also completes measurements at \(K=16\), which are not plotted. Xiaomi 11 Pro's wider trial
contains only warmup records and no successful completion.
K70 Pro sweeps widths in one job, whereas Xiaomi 11 Pro uses grouped jobs
with additional drafter component probes between verifier sweeps.
The verifier operation and contexts are matched, while job ordering and
recorded operating conditions are retained.

\begin{figure}[!htb]
  \centering
  \begin{minipage}[c]{0.51\linewidth}
    \includegraphics[width=\linewidth]{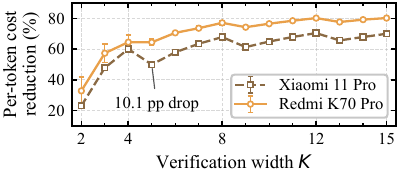}
  \end{minipage}\hfill
  \begin{minipage}[c]{0.46\linewidth}
    \setlength{\abovecaptionskip}{0pt}
    \caption{\textbf{Batching savings depend on phone and width.}
    We report per-token verification cost reductions for Qwen3-ASR 1.7B
    on OpenCL relative to the \(K=1\) verifier. Costs average 15 calls.
    Bars scale one call-time standard deviation by the same baseline.
    These are verifier savings, not full-request gains.}
    \label{fig:verifier-width}
  \end{minipage}
  \Description{One panel compares per-token full-logits verification cost
  savings on Xiaomi 11 Pro and Redmi K70 Pro for physical widths two through
  fifteen. Each cost is compared with the same phone's single-token
  full-logits verifier. A call that checks K positions is compared with K
  single-position calls. The Xiaomi curve drops by 10.1 percentage points
  from four to five. The reference is not the scalar Target-only decoder.}
\end{figure}

The timer covers the full-logits verification call, excluding loading,
initial prefill, pre-call context setup, and top-1 extraction. Each input
is a known target-reference continuation. All retained rows report
successful verification, actual width equal to requested \(K\), logits
shape \([1,K,151936]\), and no top-1 mismatches against the reference.
Let \(\bar V(K)\) be the mean call time for width \(K\). The plotted
reduction in cost per verified position is
\begin{equation}
 R(K)=100\left(1-\frac{\bar V(K)}{K\bar V(1)}\right)\%.
 \label{eq:verification-cost-reduction}
\end{equation}
The baseline consists of 15 additional full-logits \(K=1\) measurements
per phone from the same jobs. It uses the verification entry point,
whereas native Target-only uses a separate scalar decoding entry point.
Thus \(R(K)\) measures hardware batching efficiency at the profiled
contexts, not observed Target-only request-latency reduction.
Error bars are \(100s_K/(K\bar V(1))\) percentage points, where \(s_K\)
is the sample standard deviation across the 15 width-\(K\) calls.
The measured baseline mean is treated as fixed for this visualization.
These bars describe call-time dispersion, not confidence intervals for
the estimated ratio.

Increasing \(K\) from 4 to 5 reduces per-position savings from 60.1\%
to 50.0\% on Xiaomi 11 Pro, while they remain near 64.5\% on Redmi K70
Pro. The corresponding decreases are 10.1 and 0.1 percentage points,
respectively. These costs motivate profiling each phone/backend configuration.
Configuration selection then compares pooled validation throughput and
retains one pair for deployment, as described in
Appendix~\ref{sec:deployment-selection}.
The study comprises 435 width-sweep measurements and 30 additional
single-token baseline measurements.

\FloatBarrier
\subsection{Drafting overhead and prefix independence}
\label{sec:drafting-diagnostics}

\subsubsection{Native serial decoding breakdown}
\label{sec:serial-drafting-breakdown}

Figure~\ref{fig:prefix-context-cdf}(a) decomposes native serial SD on the
same 59.5125-second LibriSpeech calibration window across the four phones.
Each run uses a CPU Qwen3-ASR 0.6B drafter, an OpenCL 1.7B target, and seven
candidates after one known token (physical width \(K=8\)). All five formal
runs per phone are retained, excluding warmup. We sum measured draft
generation and prefix catch-up as drafting, sum target verification
separately, and assign the remaining decode time to other loop work.
Each plotted share divides the accumulated component time by accumulated
post-prefill decode time. Model loading and initial audio/prompt prefill
are outside this boundary.

Drafting occupies 49.4\%, 59.1\%, 75.1\%, and 51.2\% of serial decode time
on Xiaomi 11 Pro, Redmi K70 Pro, Xiaomi 14, and Redmi Note 9 Pro,
respectively. Every run reaches EOS and matches its local target reference.
Note 9 Pro emits 238 tokens per run, while the other phones emit 239.
The phones retain their recorded deployment and operating conditions,
including different CPU scheduling and frequencies. This is a within-run
cost breakdown on one audio window, rather than a controlled ranking of
processor speeds or a corpus-wide estimate. These measurements motivate
reducing serial drafting overhead through target-decoupled generation
and overlap.

\subsubsection{Audio proposals can run ahead}
\label{sec:prefix-context-probe}

A context-sensitivity probe compares Qwen3-ASR 0.6B proposals with and
without preceding transcript context against a Qwen3-ASR 1.7B target.
The 480 paired blocks comprise 120 from each of AMI, GigaSpeech,
SPGISpeech, and Earnings22. Each block spans seconds 4--8 of its source.
The supplied context approximates the preceding 0--4 seconds by slicing
reference-transcript words in proportion to audio duration. The
with-context condition (fresh) supplies this preceding text. The
without-context condition (stale) supplies no preceding text for any tested block. Both
conditions use the same supplied context for target verification and a
32-token proposal cap. This is a supplied-context ablation, not a replay
of actual target-generated text or KV synchronization.

Acceptance is the consecutive accepted-token count divided by the proposed
count within each condition. All 960 rows have nonempty proposals and
successful scores. Figure~\ref{fig:prefix-context-cdf}(b) plots all 480 paired
differences, without-context minus with-context. The median difference is
zero, and 347 pairs (72.3\%) have unchanged acceptance, including 73 pairs
where both conditions accept no token. Removing context reduces acceptance
by more than 0.05 in 9.6\% of pairs and increases it by more than 0.05 in
15.6\%. These results motivate drafting without waiting for updated text
context; they do not imply unchanged output tokens or uniform acceptance
across tasks. The target-decoupled ablations and native cases evaluate the
asynchronous execution interface directly.

The current Qwen producer reads a complete available window and publishes
batches of eight candidate tokens without target-text input. It continues
while the target verifies published candidates and does not restart after
a disagreement. The Parakeet distribution above is a motivating model
probe; native Parakeet comparisons are reported separately in
Appendix~\ref{sec:native-scheduling-cases}.

\subsection{Native drafting-cost check}

On one 60-second AMI development window on Redmi K70 Pro, five interleaved repeats give
14.58 TPS for target-only, 16.25 for Parakeet/$K=4$, and 16.03 for
SenseVoice/$K=4$ (11.5\% and 9.9\% gains). Native timing includes candidate
generation, verification, and recovery; all outputs match. Increasing
Parakeet to $K=8$ accepts a median of 43 rather
than 39 extra tokens yet lowers throughput to 15.66 TPS. Thus drafter cost and
verification budget must be judged by net throughput.
Appendix~\ref{sec:k70-drafter-case} gives the protocol and controls for this
single-window check.

\FloatBarrier
\section{Candidate publication, verification, and correctness}
\label{sec:design}

This appendix expands the target-decoupled interface in
Section~\ref{sec:main-method}. The evaluated runtime consumes chains;
its configuration is fixed before testing. Deployment-selection details
are in Appendix~\ref{sec:deployment-selection}.

\subsection{Source records and incremental publication}
\label{sec:appendix-drafting}
A logical source record identifies its request, fixed audio view, producer,
and token positions. In the existing single-window traces, the window ID
identifies this scope. The record contains the fixed input \(z_u=(x_u,p_u)\);
later audio or another request cannot change it.
A target-decoupled generator reads this audio view and instruction and,
when autoregressive, its own generated prefix. It receives neither the
latest target text nor target KV. Drafter output is converted into the
target vocabulary before publication; conversion cost belongs to source
execution.

The Qwen producer reads a complete available window and appends immutable
batches of eight candidate tokens. Publication does not wait for the whole
source sequence to finish. Each batch retains its source ID, producer ID,
token offset, publication time, and available confidence. Parakeet and
SenseVoice publish each completed 2.56-second source chunk atomically in
the external-source experiments. Their confidence values are unavailable
and are not synthesized by replay.

The buffer exposes only published tokens and preserves their order within
each request and source.
Verification reads an immutable snapshot while the producer appends later
batches. The consumption cursor advances independently of source
generation. Target context persists across publication boundaries.
Rolling native controls may combine consecutive ready batches into one
chain while retaining per-token source provenance.

\subsection{Verification and state repair}
\label{sec:tree-contract}
Before the first decoding step, the target processes the instruction
using compatible audio KV to obtain \(\kappa_{u,0}\),
using the same prompt and position conventions as target-only execution.
The cache-compatibility requirements and persistent-KV measurement scope
are specified in Appendix~\ref{sec:problem-formulation}. Candidate
generation uses separate state and receives no generated target prefix.

A verification call processes one known target root followed by up to
\(b\) candidate tokens under the actual committed prefix. Its physical
width \(K\leq b+1\) is also limited by candidate readiness and remaining
output length. Physical width,
logit indexing, attention, and cache positions must agree with the backend
contract. Only the matching greedy prefix is committed. At disagreement,
the target emits its correction, removes rejected cache entries, and
processes the correction under the same state convention as reference
execution. Every physical width requires its own feasibility and cost check.

The known target token can be carried into the next verification call;
full acceptance does not require an unconditional extra K1 forward.
If no compatible candidate is ready, a K1 target step makes progress.
Native paths that defer cache updates track logical accepted length
separately from the physical cache and remove rejected entries before the
next target forward. Appendix~\ref{sec:exactness-proof} states the
reference-equivalence and state-restoration conditions required for exact
greedy execution.

\subsection{Candidate alignment}
\label{sec:candidate-recycling}
We adapt local continuation matching \citep{wei2025specasr} to published
target-decoupled candidates. The source generator continues independently;
only the buffer's candidate cursor is realigned.

Full acceptance advances the cursor. After rejection and correction
processing, the next available target token is the first alignment anchor.
If it equals the candidate at the mismatch position, that position remains
eligible, allowing target insertions without discarding a useful suffix.
Otherwise, a bounded local search looks for the next target token, then
for the correction token and its following continuation. Unmatched
positions are retired and the target can continue alone. Exhausted records
are discarded, with source order and provenance retained.

Every reused token is verified under the current target prefix, and no old
target KV is attached to a recycled suffix. Repeated-token anchors can
select an unhelpful suffix, increasing verification work.
All alignment and recovery work remains inside the stated timing boundary.

\subsection{Correctness conditions and proof}
\label{sec:theory-proofs}
\label{sec:exactness-proof}
We specialize the standard output-preserving verification principle of
speculative decoding \citep{leviathan2023speculative} to the greedy decoding
protocol in Section~\ref{sec:main-problem}.

For on-demand execution, fix one request and its audio view. Below we
suppress the request index and include its instruction in \(\mathcal S\).
The initial-state condition includes a compatible prefilled audio KV cache.

For the fixed request, protocol \(\mathcal S\) prescribes the audio view,
initial request-prefilled state, and termination rule.
Two decoding schemes at the same committed prefix use that same view, even
if they reach the prefix at different wall times. Merely giving both
access to all audio that has arrived by their respective execution
times would not suffice. Source-side partitioning and token publication
do not reset or extend the target's request context.

Let \(g(\kappa)\) return the target's next token with deterministic tie-breaking,
and let \(F(\kappa,z)\) advance the target state after token \(z\). The state
includes the prescribed audio view, decoder KV, positions, and any pending-token
convention. The reference obeys
\begin{equation}
 y_{n+1}^T=g(\kappa_n^T),\qquad
 \kappa_{n+1}^T=F(\kappa_n^T,y_{n+1}^T).
 \label{eq:reference-transition}
\end{equation}
The verification and alignment interface has the following requirements.

\begin{enumerate}
\item \emph{Pathwise equivalence.} For every position in a candidate chain,
batched target execution produces the same greedy decision and retainable
state as sequential execution on the committed prefix followed by the
preceding candidates. Correct causal masks, positions, audio context,
padding, and logit indexing are required. Masks alone do not establish numerical
state equivalence across different execution kernels.
\item \emph{Commit and repair.} Verification commits only the consecutive
candidates matching the target's greedy tokens. At a mismatch it emits the
target's own correction. Retained state contains only the committed prefix;
rejected-candidate entries are removed. The correction is processed under the same state
convention. If compaction cannot establish equivalence, canonical replay
must restore the state before a successful commit is reported. A failed
repair invalidates the run.
\item \emph{Proposal-only recycling.} Alignment changes only the candidate
cursor, never published tokens, committed target tokens, or target state.
Every reused token is verified again under the actual current prefix.
The fixed drafter supplies proposals without changing the reference input.
\end{enumerate}
These backend and protocol conditions must be validated for every
admitted configuration.

\begin{proposition}[Proposal-independent exactness]
\label{prop:policy-exactness}
Fix the audio views, state conventions, and greedy tie-breaking of
\(\mathcal S\). If verification is reference-equivalent and restores the
accepted-prefix state, arbitrary source proposals, budgets, and candidate
recycling preserve the target tokens and state at every committed prefix.
Every completed execution then satisfies
\(Y_{\mathcal A}(x;\mathcal S)=Y_T(x;\mathcal S)\).
\end{proposition}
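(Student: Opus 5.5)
The plan is an induction on the number of committed tokens, with invariant \((\mathcal I_n)\): after \(n\) tokens have been committed, the committed prefix is \(y^T_{1:n}\) and the retained target state equals \(\kappa_n^T\) from \eqref{eq:reference-transition}. The base case \(n=0\) holds because both schemes start from the same request-prefilled state \(\kappa_{u,0}\), fixed by \(\mathcal S\). This uses the fixed audio view and initial-state condition, so no audio arriving later can enter the target context.

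For the inductive step, fix an arbitrary target boundary with \((\mathcal I_n)\) in force and split into two cases. \emph{(a) No usable candidate.} A K1 step computes \(g(\kappa_n^T)=y^T_{n+1}\) and transitions by \(F\), giving \((\mathcal I_{n+1})\) directly.

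\emph{(b) A chain is verified.} The chain consists of a known root followed by candidates \(c_1,\dots,c_k\); the root is itself an earlier target greedy token, hence equal to the next reference token by induction. By pathwise equivalence, the decision at chain position \(j\) equals \(g\) applied to the sequential state obtained from \(\kappa_n^T\) by feeding the root and \(c_1,\dots,c_{j-1}\). An inner induction on \(j\) shows the following. As long as \(c_1,\dots,c_{j-1}\) match the target's greedy tokens, that sequential state is exactly \(\kappa^T_{n+j}\) (with the root counted appropriately). So the greedy decision at position \(j\) is \(y^T_{n+j+1}\). The commit rule keeps only matching candidates and, at the first mismatch, emits the target's own decision, which is the reference token. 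Commit-and-repair then guarantees that the retained state is the state for the committed prefix only, which is \(\kappa^T_{n+m}\) where \(m\) is the number of committed tokens. This gives \((\mathcal I_{n+m})\).

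Next I would argue that nothing outside verification can break the invariant. The proposals, budget \(b_t\), readiness-limited width, and alignment cursor only affect which tokens \(c_j\) are offered and how many. By proposal-only recycling, they never modify committed tokens or target state. Since steps (a) and (b) hold for \emph{arbitrary} \(c_j\) and \(k\), the invariant is independent of the drafter, its timing, and recycling. This is the step where the claim ``arbitrary source proposals'' is discharged: I would stress that the drafter's lack of target feedback is irrelevant, because correctness never assumes anything about the candidates.

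Finally, termination. The termination rule of \(\mathcal S\) is a function of the committed prefix and state, and by the invariant these coincide with the reference at every prefix. So a completed execution stops at the same length and \(Y_{\mathcal A}=Y_T\). I also need every boundary to commit \(m\ge1\) token, which holds in both cases.

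The main obstacle is the inner induction in (b): making precise that pathwise equivalence yields equality of \emph{retainable state}, not just of argmax decisions, and that the carried-over root token and deferred-cache conventions do not shift indices. I would handle this by stating the pending-token convention explicitly as part of \(\kappa\), so that ``root carried into next call'' is just a representation of \(\kappa^T_{n}\).
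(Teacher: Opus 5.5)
Your proposal is correct and follows essentially the same route as the paper's proof. Both argue by induction on committed tokens: the target-only step, consecutively accepted candidates (via pathwise equivalence), the mismatch correction, and commit-and-repair extend the invariant, while proposal-only recycling handles arbitrary proposals, budgets, and realignment. Your plan to fold the carried root into \(\kappa\) matches the paper's convention that the state includes any pending token. Adopting it also fixes your off-by-one in treating the root as \(y^T_{n+1}\): in the paper, the first candidate, not the root, is compared against \(y^T_{n+1}\).
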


\begin{proof}[Proof of Proposition~\ref{prop:policy-exactness}]
Fix any candidate sequence and budget decisions. We induct on the number of committed tokens. At initialization,
the speculative execution and reference have the same input view and canonical target state.
Assume their tokens and states agree after prefix length \(n\).

For target-only execution, both choose \(g(\kappa_n^T)\) and apply the same
transition \(F\), extending the invariant by one token. For a nonempty
candidate chain, the first candidate is checked from this same state. If
it matches, it equals \(y_{n+1}^T\), and pathwise equivalence gives the same
next state. Repeating this argument covers every consecutively accepted
candidate. At a mismatch, the target correction equals the next reference
token and its canonical processing extends the invariant. If the chain
is exhausted, all committed candidates have passed this check; the next
reference token can be carried under the prescribed known-root convention.
State repair leaves exactly the state of the committed prefix.

Neither cursor realignment nor the choice of fixed drafter changes the
reference transition. A reused suffix, including a wrongly aligned suffix,
can be committed only after a fresh target check; its previous location or
confidence grants no authority. Source publication cannot change the
prescribed audio view or initial state. Induction therefore covers
all successful commits, independently of proposal and budget decisions. If
the response completes under the same termination rule,
its full token sequence equals the reference. Since the argument holds for
every proposal realization, it also holds for randomized proposal generation.
\end{proof}

Applying this invariant to a backend requires validation of pathwise
equivalence and accepted-prefix state restoration
(Appendix~\ref{sec:numerical-sensitivity}).

\FloatBarrier
\section{Implementation and evaluation details}
\label{sec:appendix-evaluation}
\label{sec:evaluation}

This appendix follows Section~\ref{sec:main-evaluation}: experimental setup,
calibration and fixed deployment, replay validation, and the three research
questions. The four-phone ASR study contains 688 full-audio test windows and
94,511 target tokens per phone. All test windows, including capped
outputs and throughput regressions, remain in the reported cohort. Native
failures are retained with their execution boundaries.

\subsection{Experimental setup}
\label{sec:appendix-eval-setup}
\subsubsection{Devices and runtime}
\label{sec:implementation}

\noindent\textbf{Test devices.}\quad
Table~\ref{tab:implementation-devices} summarizes hardware and the
four-phone ASR configurations. Hardware specifications follow the vendor
documentation \citep{qualcomm2026sd750g,qualcomm2026sd888,qualcomm2023sd8gen3,
xiaomi2026mi14specs,xiaomi2026k70specs,xiaomi2026mi11prospecs}; Adreno identifiers are also documented by
\citet{ggml2026opencl}. RAM is the kernel-reported \texttt{MemTotal},
converted to GiB. Peak CPU clocks describe hardware specifications, not
frequencies maintained throughout a run.

\begin{table}[t]
\centering
\small
\setlength{\tabcolsep}{4pt}
\caption{The four phones span different hardware and software configurations.
RAM denotes OS-visible total memory, and CPU clocks are specified maxima.
The lower block lists the fixed models, placement, and proposal budgets
used for the ASR comparison.}
\label{tab:implementation-devices}
\begin{tabularx}{\linewidth}{@{}l*{4}{>{\centering\arraybackslash}X}@{}}
\toprule
 & Redmi K70 Pro & Xiaomi 11 Pro & Xiaomi 14 & Redmi Note 9 Pro \\
\midrule
Model ID & 23117RK66C & M2102K1AC & 23127PN0CC & M2007J17C \\
Snapdragon SoC & 8 Gen 3 & 888 & 8 Gen 3 & 750G \\
Process (nm) & 4 & 5 & 4 & 8 \\
CPU family & Kryo (X4 prime) & Kryo 680 & Kryo (X4 prime) & Kryo 570 \\
CPU cores & 8 & 8 & 8 & 8 \\
Peak CPU (GHz) & 3.3 & 2.84 & 3.3 & 2.2 \\
Adreno GPU & 750 & 660 & 750 & 619 \\
RAM (GiB) & 10.91 & 7.05 & 14.81 & 7.21 \\
DRAM standard$^{*}$ & LPDDR5X & LPDDR5 & LPDDR5X & LPDDR4X \\
\midrule
Android / API & 16 / 36 & 14 / 34 & 16 / 36 & 12 / 31 \\
System build & \shortstack{OS3.0.304.0.\\WNMCNXM} & \shortstack{OS2.0.7.0.\\UKACNXM} & \shortstack{OS3.0.306.0.\\WNCCNXM} & \shortstack{V14.0.2.0.\\SJSCNXM} \\
Security patch & 2026-06-01 & 2025-04-01 & 2026-08-01 & 2023-03-01 \\
OpenCL version & 3.0 & 2.0 & 3.0 & 2.0 \\
Driver build/commit & 0762.36.1 & 329cf4c2a7 & 0762.36.1 & 16c8186230 \\
OpenCL compiler & E031.45.02.25 & E031.38.01.08 & E031.45.02.25 & E031.37.12.04 \\
\midrule
ASR target & \multicolumn{4}{c}{Qwen3-ASR-1.7B, OpenCL GPU} \\
ASR drafter & \multicolumn{4}{c}{Qwen3-ASR-Audio-0.6B, CPU, four threads} \\
Audio frontend & \multicolumn{4}{c}{CPU, four threads} \\
Precision mode$^{\dagger}$ & \multicolumn{4}{c}{Target: \texttt{low}; drafter: \texttt{normal}} \\
Publication batch & \multicolumn{4}{c}{Eight tokens} \\
Proposal budget $b$ & 7 & 7 & 7 & 3 \\
Maximum width $K$ & 8 & 8 & 8 & 4 \\
\bottomrule
\end{tabularx}
\par\smallskip
\begin{minipage}{\linewidth}
\footnotesize
$^{*}$DRAM specifications follow the phone vendor, except Note 9 Pro's
LPDDR4X, which is the chipset's supported interface.
$^{\dagger}$MNN runtime precision settings do not identify weight bit widths.
Note 9 Pro is the Chinese 5G variant. Thermal conditions and repetition
protocols are specified with the calibration and native experiments below.
Software metadata was queried directly from the phones after calibration,
with no intervening system upgrades. Driver identifiers are queried through
\texttt{CL\_DRIVER\_VERSION}.
\end{minipage}
\end{table}

\paragraph{Resident producer and target.}
For ASR, the MNN Android implementation \citep{jiang2020mnn,mnn2026}
runs a CPU source producer alongside a resident OpenCL target.
The four-phone comparison uses
Qwen3-ASR-Audio-0.6B drafting with four CPU workers and a 1.7B target.
Source and target maintain separate model state; candidate tokens pass
through the ordered buffer in Appendix~\ref{sec:design}.

For ASR, the Qwen source consumes the full available audio window once,
then publishes eight-token batches as generation proceeds. External-source
calibration uses chunk-local Parakeet and SenseVoice generation.
Source generation is independent of newly committed target text. CPU
worker placement, resident source pools, and supported widths differ by
phone and are specified with each calibration in Appendix~\ref{sec:appendix-eval-deployment}.

\paragraph{Runtime state and timing.}
\label{sec:native-runtime-optimizations}
The target's logical committed prefix governs attention, positions, and
cache repair. Known-root verification and deferred KV submission avoid an
unconditional extra target call, while target-only progress remains available
when candidates are absent or incompatible. Candidate alignment advances
the source cursor independently of target cache repair.
The implementation contract is detailed in
Appendix~\ref{sec:tree-contract}.

Monotonic stage events record source generation/publication, verification,
correction, queue work, and producer completion. Corpus replay applies
separately calibrated phone costs to saved operation trajectories. Native
case studies use the device clock and retain their own repetition counts,
completion boundaries, and output checks. Every evaluated deployment uses its
fixed drafter and preselected budget, with no request-time configuration inference
(Appendix~\ref{sec:deployment-selection}).

\subsubsection{Datasets, task cohorts, and token accounting}

\noindent\textbf{Combined evaluation scope.}\quad
Across the ASR, translation, and QA cohorts described here,
the corpus evaluation contains 928 requests: 688 ASR windows and 80
requests each for English-to-Chinese translation, Chinese-to-English
translation, and spoken QA. Deduplication by audio hash gives 913 audio
inputs spanning 1.632--75.168s and 43,844.9815s (12.18h) in total.
Shared QA passages make the request-weighted duration 12.40h.
The 97,953 target output tokens exclude EOS and include the first token
selected during prefill; this corpus-size count does not change the
task-specific TPS numerators. Calibration, selection pilots, native cases,
methods, and phone repetitions are excluded from these totals.

\paragraph{ASR evaluation cohort.}
Table~\ref{tab:evaluation-performance} uses 688 audio windows selected by
shuffling within each dataset with seed 20260915, without using model outputs.
The selection contains 165 LibriSpeech, 53 AMI, 91 FLEURS, 188 KeSpeech,
and 191 M4Singer windows, totaling 11.0 hours of audio and 93,824 non-EOS
target tokens. Selection is at the window level, without speaker or
recording grouping. All methods use the same windows and target references,
including one output that reaches the shared generation limit.
Cost calibration and fixed-configuration selection use separate development
audio, as specified below and in Appendix~\ref{sec:deployment-selection}.

\paragraph{Translation and spoken QA protocol.}
\phantomsection
\label{sec:omni-tasks}

Table~\ref{tab:omni-task-results} compares three 80-sample cohorts using
Qwen2.5-Omni-7B as the target and Qwen2.5-Omni-3B as the drafter for all
speculative methods. English-to-Chinese IDs are the first 80 of the frozen
2,733 evaluation IDs after sorting by
SHA256(\nolinkurl{20260923:audio_sha256}); this rule was fixed before the
run without inspecting TPS or acceptance. The Chinese-to-English IDs are
held out from a disjoint 24-sample budget-selection pilot. Spoken SQuAD
uses the length-conditioned subgroup below. All TPS values pool
useful target tokens over post-prefill, resource-complete replay time;
useful tokens exclude EOS and the first token selected during prefill.
Appendix~\ref{sec:baseline-implementations} specifies the serial baselines.

\system uses budget 3 (physical $K\leq4$), fixed from the disjoint
Chinese-to-English pilot and retained for both other tasks. The producer
receives only the requested audio, instruction, and its own generation
history. It never receives target-prefix updates. Source token IDs share
the target tokenizer and become available after their real $K=1$ forwards,
without text conversion. The QA buffer receives cumulative observed
prefixes with unchanged token-arrival times; these prefixes only grow.
Verification and already-started source work after target EOS count in
resource-complete time. Conversion and per-round queue fees are zero;
these profiles do not separately calibrate queue/cache-control overhead
or contention. The reported times are phone-cost surrogate estimates.

\paragraph{QA cohort selection.}
\phantomsection
\label{sec:omni-qa-subgroup}

The Spoken SQuAD column in Table~\ref{tab:omni-task-results} uses a post-hoc
subgroup selected after inspecting the target output lengths of 970 frozen
questions. The rule requires at least 16 useful target tokens, excluding
the first token selected during prefill and EOS. It selects 80 questions
and 1,895 useful tokens. These IDs were fixed before the independent
Omni-3B run, and no additional samples were filtered. The length-conditioned
result does not establish an unselected QA-wide gain.

\subsubsection{Baseline implementations}
\label{sec:baseline-implementations}

\paragraph{ASR SD replay.}
Target-path top-2 ranks are collected with the prefix-conditioned 0.6B
drafter on the server and matched to the frozen audio hashes and canonical
target tokens. Standard SD uses a known target root plus $K-1$ draft tokens
in a physical width-$K$ verification. It advances through the first mismatch,
carries the correction or bonus token as the next known root, and charges
any missing drafter KV forward before the next proposal. Target correction
has no unconditional extra K1 call. Only the common output budget can
reduce the physical width; remaining reference length is never used to
choose a cheaper tail action. EOS contributes to the committed-token count
without an extra forward after it becomes known. Model loading and both
audio/prompt prefills are excluded. Corpus replay applies separately
calibrated phone costs to these cached operation trajectories. Server/phone
numerical-parity checks do not cover the full test cohort.

\paragraph{ASR fixed-width controls and averaging.}
The Standard SD entries in Table~\ref{tab:evaluation-performance} use the
equal-weight arithmetic mean of four fixed configurations, K=5,6,7,8.
For each dataset and K, we first pool committed
tokens and total decode time; we then average the four TPS values. The All
column pools the complete five-dataset cohort within each K before taking
the same mean. This is a configuration summary, not the throughput of a
random switching policy or a validation-selected best fixed setting.
For the CDF and slowdown counts, we instead average the four TPS values
within each window before forming the distribution or comparing with
Target-only (Appendix~\ref{sec:appendix-eval-rq1}).

\paragraph{ASR SpecASR ASP port.}
We implement adaptive single-sequence prediction (ASP) with draft recycling
\citep{wei2025specasr}, using the same pretrained 0.6B drafter and 1.7B target.
No baseline-specific training is performed. The confidence threshold is 0.4
and the maximum is 24 draft tokens, fixed before testing. We interpret
normalized logits as softmax probabilities and include the low-confidence
token before stopping. The known target root makes the physical verifier
width at most 25. Actual off-path drafter tokens and confidences are collected
on the server; target-path ranks alone do not recover these trajectories.

After rejection, a masked two-query drafter forward extends the old branch
while regenerating from the corrected prefix. A match at the same or an
adjacent position permits suffix reuse. Our port then refreshes the reused
suffix's KV state in one batch under the corrected prefix and charges this
batched KV reconciliation. Drafting and target verification remain serial,
with parallel old/new branches internal to the drafter. We evaluate this ASP
port. SpecASR's alternative two-pass sparse-tree (TSP) variant is not evaluated.

\paragraph{Translation and spoken QA baselines.}
Target-only runs the Omni-7B target without drafting. SD and adapted
SpecASR use real server conditional proposals from the same
prefix-conditioned Omni-3B drafter with budget 7 (physical $K\leq8$).
SpecASR uses adaptive single-sequence prediction and rejected-draft recycling with
serial single-token drafter forwards. Its original masked-pair/TSP runtime
is outside this comparison. Their KV-ready replays charge K70 phone costs. Confidence
softmax and drafter KV branch switching are not separately calibrated.
Appendix~\ref{sec:omni-tasks} specifies task cohorts and output-token accounting.

Configuration selection is specified in
Appendix~\ref{sec:deployment-selection}. Each reported deployment uses
its own phone calibration, without borrowing costs from other phones.

\FloatBarrier

\subsection{Calibration, deployment configuration, and time analysis}
\label{sec:appendix-eval-deployment}
This section records the phone costs and fixed configurations used by the
experiments. The final analysis relates realized overlap and verifier work
to elapsed time; acceptance is an analysis variable, not a configuration
selection criterion.

\subsubsection{Feasible widths and coexistence costs}
\label{sec:action-pruning}
Only measured, executable widths enter a deployment's cost catalog.
Missing or failed shapes are excluded instead of assigned zero cost.
A proposal budget and a physical backend width are different quantities:
including the known root, budget 7 permits physical K8. Large-width
concurrent costs that use a fitted ratio rather than direct measurements
are identified in the corresponding protocol and sensitivity analysis.

Calibration loads the declared target and source models before timing.
Coexistence can change both producer and target service, so isolated
forward latency alone cannot determine placement. Verification and
correction must also preserve ownership of the same target KV. These
constraints apply to both the ASR CPU-producer/OpenCL-target placement
and the native Omni placement in Appendix~\ref{sec:appendix-eval-rq3}.

\subsubsection{Redmi K70 Pro calibration}
\label{sec:k70-baseline-replay}

\paragraph{K70 ASR cost calibration.}
Both the 0.6B CPU drafter (four threads, normal precision) and the 1.7B OpenCL
target (low precision) were resident in a paired K70 calibration on one
59.5-second calibration window. One warmup and five measured pairs produced
identical 239-token outputs through EOS. Native target-only and serial K8
throughput were 16.91 and 20.86 tokens/s on that calibration window. These
measurements calibrate the replay costs. The baseline cost model uses
59.391\,ms per target-only step,
23.347\,ms per draft forward, 112.656\,ms per K8 verification, and
10.738\,ms of other decode-loop work per serial verification round. The
residual loop cost is measured wall time minus drafting and verification
time. Smaller K costs come from a separate K70 K1--16 calibration-context
sweep at prefix offsets 0, 64, and 128.
No language, long-context, or thermal slope is fitted.

\paragraph{ASR ASP cost extension and check.}
The single-step drafter, K1--K8 verifier, and loop-residual constants are
unchanged from the other Table~\ref{tab:evaluation-performance} baselines.
Additional K9--K25 verification and batch-refresh costs use K70 profile
medians: one warmup and five measurements at each of three text offsets.
True paired-branch drafting costs 35.075\,ms per forward, before a measured
0.951\,ms of probability computation per output row; batch KV refresh is
charged by its physical width. All 90 checks against independent branch
execution pass. This accounting yields 11,136 verification rounds over the
complete 688-window test split.

A separate long development-window ASP diagnostic reaches EOS in one warmup and
five measured repetitions. It emits 240 tokens, including one additional
comma relative to scalar AR's 239 tokens. On the recorded native operation
trace, replay underestimates latency by 2.6--14.7\% across the five repetitions,
with 5.6\% error at the median. The fixed costs do not capture the observed
run-to-run slowdown.

\paragraph{Native overlap calibration and checks.}
One calibration audio uses one warmup and five paired measured repetitions,
with alternating execution order. Serial budget 7 achieves 15.41 pooled
TPS and target-decoupled overlap 29.81 TPS, a 93.4\% increase. Every run reaches
EOS and matches the native scalar-target output; an independent queue
reconstruction checks all 503 source-overlap verification rounds, including warmup.
These native repetitions supply the service-cost fit.

Five additional, predeclared development windows are excluded from cost
fitting. Three each yield one complete, output-matched measured pair after
one warmup:
serial/TD throughput is 14.78/24.20 TPS on FLEURS, 9.62/15.80 on KeSpeech,
and 11.85/12.52 on M4Singer. Replaying their native-produced candidates
with the frozen calibration costs gives signed relative TD-time errors of
$+0.6\%$, $+2.0\%$, and $-8.5\%$, respectively. The other two windows do
not yield measured pairs: the AMI reference reaches its 1,024-token cap,
and the LibriSpeech warmup emits two additional commas. These diagnostic
outcomes do not remove any corpus-test windows.
Appendix~\ref{sec:replay-agreement} reports the subsequent repeated timing
validation.

\subsubsection{Xiaomi 11 Pro calibration}
\label{sec:xiaomi11-transfer}

The Xiaomi 11 Pro (SM8350) evaluation uses the same 1.7B OpenCL target,
Qwen3-ASR 0.6B CPU drafter, and proposal budget 7 as the K70 deployment.
Standard SD averages fixed physical K=5,6,7,8 TPS
(Table~\ref{tab:sd-fixed-mean}). SpecASR uses ASP with threshold 0.4 and
at most 14 proposals, or physical $K\leq15$. All four methods retain the
same 688 windows and 94,511 reference tokens, including the capped FLEURS
output. The two capped Qwen source sequences are also retained.

\paragraph{Common clock and hardware condition.}
The measured interval starts after model loading and both initial
audio/prompt prefills. It includes draft computation, exposed waits,
verification, queue/loop work, and producer completion after target EOS.
Target-only calibration loads only the target. Serial SD and ASP load the
target and Qwen; source-selection calibration keeps the target, Qwen and
Parakeet resident to compare source choices without model replacement.
The four CPU drafting workers use cores 4--7. Before each measured episode,
we require all eight cores online and CPU4's frequency ceiling at least
1.8816\,GHz at two checks five seconds apart, after at least 15 seconds of
rest. A native monitor samples core availability and frequency ceilings
once per second. Within-run frequency changes are retained; core offlining
stops the batch and leaves the failure record intact. This protocol measures
decode episodes from a common starting condition, not continuous-load throughput.

\paragraph{Cost calibration.}
Two English calibration windows, from LibriSpeech and AMI, provide five
measured repetitions each of target-only, Qwen overlap and Parakeet overlap
with the common resident pool, plus five standalone-target repetitions per
window. All 40 measured runs reach EOS with identical target tokens.
Per-width K1--8 profiles and these episode traces calibrate the shared
verifier costs and overlap-specific concurrent costs; observed concurrency
ratios are retained without clipping them to one.

Additional calibration on the 59.5-second LibriSpeech calibration window uses
one warmup and
five alternating measured repetitions of serial K8 SD and ASP with cap 14.
Scalar drafting and the residual decode-loop cost are shared by SD and ASP;
ASP additionally charges confidence computation, paired-branch execution
and batch KV refresh. Missing verifier widths K9--15 receive five measured
runs each; CPU refresh widths K1--15 receive ten each, and paired-branch
cost uses 20 measured forwards. Component outputs are checked for physical
width $K$ and logits shape $[1,K,151936]$. All paired-branch checks match
independent branch execution. Arithmetic means retain every formal timing;
warmups are excluded. A previous K16--20 calibration stalled, so widths
above 15 remain pruned without assigning an unverified failure cause.
All timing parameters are fitted on separate calibration audio. They do not model every
language, context length, or thermal state.

The serial calibration check reaches 11.60 TPS for K8 SD and 11.56 TPS for
ASP; all ten measured outputs match the scalar reference exactly.
Repricing these recorded native operation traces gives latency residuals
from -1.7\% to +3.0\%. The corresponding Qwen/Parakeet overlap
residuals on the two calibration windows range from $-0.89\%$ to $+1.62\%$.
These residuals characterize the in-sample calibration fit. The corpus
comparison applies the calibrated phone costs to saved operation trajectories;
Appendix~\ref{sec:replay-agreement} reports separate native timing checks on K70.

\subsubsection{Xiaomi 14 and Redmi Note 9 Pro calibration}
\label{sec:additional-phone-calibration}

\paragraph{Device-specific calibration.}
We calibrate Xiaomi 14 and Redmi Note 9 Pro separately using the same fixed LibriSpeech
and AMI calibration windows. Each context supplies one warmup and five measured
repetitions of resident-target decoding, Qwen overlap, and standalone-target
decoding. Target-only loads only the target; speculative runs retain target
and Qwen. Model loading and initial audio/prompt prefill precede the measured
interval. Source waits, verification, correction/alignment, loop work, and
post-EOS producer completion remain charged. K70 and Xiaomi 11 Pro retain
their previously documented residency protocols.

Xiaomi 14 pins the four CPU drafting workers to cores 4--7. On Redmi Note 9
Pro, explicit worker-affinity setup failed after loading and prefill, before
formal measurements. We retain that failure and use four MNN workers under
OS-default scheduling for every measured method on that phone. The baseline
frontend only skips explicit placement when no CPU IDs are configured;
inference algorithms are unchanged. Each episode requires all eight cores
online, at least 15 seconds of rest, and two condition checks five seconds
apart. Within-run frequency changes are retained. We do not modify thermal
control or CPU governors.

Dense profiles measure physical K1--8 on both calibration contexts; observed
overlap rounds provide concurrent costs. Widths with fewer than five
concurrent observations use the median eligible calibration concurrent/idle
ratio, without clipping. Serial K8 SD and ASP with cap 14 receive one warmup
and five measured repetitions on the LibriSpeech window. Scalar drafting
and loop costs are shared; ASP additionally charges confidence computation,
paired-branch execution, and KV refresh. Verifier K9--15 receives five
measurements per width, refresh K1--15 receives ten, and paired execution
receives 20. Component checks enforce physical width and logits shape
$[1,K,151936]$. The two new phones use the same conservative ASP cap 14 as
Xiaomi 11 Pro; larger shapes are outside this transfer experiment.

\paragraph{Calibration diagnostics.}
On Xiaomi 14, 30/30 formal target/overlap outputs and 10/10
serial outputs match their native scalar reference through EOS.
Repricing the observed native target/source paths gives in-sample overlap
latency residuals of -3.33\% to -0.94\%; serial residuals are
-9.83\% to +16.61\%. All formal repetitions are retained.

On Redmi Note 9 Pro, 30/30 formal target/overlap outputs and 10/10
serial outputs match their native scalar reference through EOS.
Repricing the observed native target/source paths gives in-sample overlap
latency residuals of -2.34\% to -0.31\%; serial residuals are
-6.33\% to +3.38\%. All formal repetitions are retained.
These residuals characterize the in-sample calibration fit on two English
windows. The corpus comparison applies the resulting phone costs to preserved
causal operation traces, with fixed costs across languages, context lengths,
and thermal states. Appendix~\ref{sec:replay-agreement} reports separate native
timing checks on K70.

\subsubsection{Resident-source cost calibration}
\label{sec:resident-source-calibration}
For Xiaomi 14 and Redmi Note 9 Pro, a separate calibration retains the
target, Qwen3-ASR 0.6B, Parakeet CTC, and SenseVoice Small in memory.
Two fixed calibration windows, from LibriSpeech and AMI, are run with
target-only, Qwen, Parakeet, and SenseVoice arms in rotating/reversed order.
Each arm has one warmup and five measured repetitions per audio,
giving 40 measured episodes per phone. This all-resident calibration
is distinct from the target/Qwen comparisons in
Appendix~\ref{sec:additional-phone-calibration}.

External-source chunk service, publication, and startup are measured on
the phone. Concurrent/idle target ratios from the same session are applied
to the device-specific idle profile, retaining the original Qwen and
baseline cost anchors. Idle physical K9--15 costs are directly measured;
their concurrent costs extend the observed calibration ratios. Width-specific
profiles use the median available ratio at unmeasured concurrent widths,
while constant-factor profiles retain their factor. These estimates do not
represent direct measurements of every source/width pair. The English
calibration uses a fixed padded chunk shape; other supported languages
remain an extrapolation.

All measured repetitions are retained. Native outputs match the device's
scalar reference in 40/40 Xiaomi 14 episodes and 35/40 Note 9 Pro episodes.
The remaining five rows contribute cost measurements but do not support
native output equivalence.

\subsubsection{Fixed configurations and development splits}
\label{sec:deployment-selection}
The four-phone ASR corpus comparison uses a single fixed Qwen drafter and
proposal budget per device: budget 7 on K70, Xiaomi 11 Pro, and Xiaomi 14; budget 3 on
Redmi Note 9 Pro. The selected pair remains unchanged across test requests;
configuration selection incurs no request-time feature extraction or inference.
The studies below use pooled throughput to compare configurations before
their fixed evaluation, rather than selecting a budget from runtime
acceptance statistics.

On K70, the initial budget sweep compares integer proposal limits 1--24
on 50 balanced development windows and selects budget 7. The setting is retained
on a separate validation set of 100 windows. A balanced 50-window development
pilot screens Qwen3-ASR 0.6B, Parakeet CTC, and SenseVoice Small as the
source pool; adding bilingual Zipformer gives no further hindsight gain
on that pilot. The retained K70, Xiaomi 11 Pro, and Xiaomi 14 deployments
use Qwen/budget 7. These preserved development studies do not constitute
a full validation-grid search on every phone. The larger catalogs in Appendix~\ref{sec:joint-oracle-gap-protocol}
serve a separate purpose: measuring the headroom of the deployed settings.

The Note 9 result is a retrospective reanalysis of the preserved split and
phone profiles. We rank all 42 source/budget settings on the 100 validation
windows by pooled throughput, using Qwen at the same budget when a source
is language-ineligible. Validation selects fixed Qwen/budget 3 at 5.6233 TPS,
versus 5.3898 for Qwen/budget 7. Applying this setting to all 688 test
windows yields 5.8512 TPS. Test outcomes do not enter the ranking;
the reanalysis does not imply that a new prospective experiment preceded
inspection of the historical test results.

\subsubsection{Overlap, added work, and performance limits}
\label{sec:performance-derivation}
We derive a finite-execution condition under which target-decoupled overlap
compensates for additional drafting and verification work. The decomposition
uses realized costs and token progress, whereas SSD analyzes expected speed
from verification-outcome prediction \citep{kumar2026ssd}.

\paragraph{Execution boundary and accounting.}
Compare a serial baseline \(s\) and asynchronous execution \(a\) that
complete the same input under protocol \(\mathcal S\), with the same
\(N>0\) committed target tokens. Use matching start and completion events,
including producer drain if charged by the protocol. For post-prefill decode
comparisons, the clock starts after initial audio/prompt prefill. A
request-latency comparison instead starts at request arrival and includes
any remaining audio or request prefill. Both runs must satisfy the
target-equivalence conditions in Appendix~\ref{sec:exactness-proof}.

For run \(r\in\{s,a\}\), let \(\mathcal P_r\) and \(\mathcal V_r\) be the
unions of producer and verifier service intervals within \([0,T_r]\).
Producer service includes source preparation, generation, token conversion,
and publication when inside the measured boundary. Verifier service
includes target-only progress, speculative checks, correction, state
restoration, and candidate alignment. Use completed stage intervals,
not worker lifetimes: a worker blocked waiting for audio or the other
worker is not executing a stage. Internal intervals of the same stage
are merged before their durations are summed. These are elapsed service
times, not processor-utilization measures.

With \(|\cdot|\) denoting interval duration, define
\begin{equation}
 \begin{aligned}
 D_r&=|\mathcal P_r|, & V_r&=|\mathcal V_r|,\\
 O_r&=|\mathcal P_r\cap\mathcal V_r|, &
 H_r&=|[0,T_r]\setminus(\mathcal P_r\cup\mathcal V_r)|.
 \end{aligned}
 \label{eq:performance-intervals}
\end{equation}
The residual \(H_r\) includes remaining queue/synchronization work and
idle time outside both stage sets. Work already inside a service interval
is not charged again to \(H_r\). Costs come from the actual execution:
resource contention can increase \(D_a\) and \(V_a\), so isolated-device
profiles cannot simply be substituted while holding these costs fixed.
Startup and drain remain on the clock; only their actual intersections
with the other stage contribute to \(O_r\).

\paragraph{Derivation of the speedup condition.}
Inclusion--exclusion gives the exact accounting identity
\begin{equation}
 T_r=D_r+V_r-O_r+H_r.
 \label{eq:performance-accounting}
\end{equation}
The serial comparator has \(O_s=0\). Therefore, with
\(\Delta X=X_a-X_s\),
\begin{equation}
 T_s-T_a=O_a-\Delta D-\Delta V-\Delta H.
 \label{eq:performance-time-saving}
\end{equation}
Both runs commit \(N\) tokens, so their throughput ratio is
\((N/T_a)/(N/T_s)=T_s/T_a\). A strict throughput gain therefore occurs
exactly when
\begin{equation}
 T_a<T_s
 \quad\Longleftrightarrow\quad
 O_a>\Delta D+\Delta V+\Delta H.
 \label{eq:overlap-break-even}
\end{equation}
The increments may be negative: for example, a cheaper target-decoupled producer can reduce
\(D_a\), so a measured gain need not arise entirely from overlap.
Against an already asynchronous comparator, replace \(O_a\) in the
condition by \(O_a-O_s\).

\paragraph{Acceptance and verifier work.}
Let \(R_r\) be the number of sequential verifier rounds, including
target-only rounds. If round \(k\) commits \(m_{r,k}\) tokens and incurs
verifier service time \(v_{r,k}\), then
\(\sum_k m_{r,k}=N\) and \(\sum_k v_{r,k}=V_r\). With the arithmetic
round means \(\bar m_r=N/R_r\) and \(\bar v_r=V_r/R_r\),
\begin{equation}
 \frac{V_r}{N}=\frac{\bar v_r}{\bar m_r},\qquad
 \frac{\Delta V}{N}
 =\frac{\bar v_a}{\bar m_a}-\frac{\bar v_s}{\bar m_s}.
 \label{eq:performance-progress}
\end{equation}
Committed progress includes correction tokens as well as accepted
candidates. Lower progress per round increases verifier work per output
token when round cost is unchanged, but raw acceptance rate alone does
not determine either quantity. Budget, alignment, target-only fallback,
and backend shape change round cost and progress together. This is why
a faster producer with lower acceptance can still win, and why increasing
the proposal budget need not improve throughput.

\paragraph{Bottleneck bound and interpretation.}
Since \(0\leq O_a\leq\min(D_a,V_a)\),
\begin{equation}
 T_a\geq\max(D_a,V_a)+H_a,\qquad
 \frac{T_s}{T_a}\leq
 \frac{T_s}{\max(D_a,V_a)+H_a}.
 \label{eq:performance-bottleneck}
\end{equation}
For fixed realized costs, no overlap can yield a strict speedup if
\(\Delta D+\Delta V+\Delta H\geq\min(D_a,V_a)\).
The bound need not be attainable: request-time source preparation, startup, drain,
and candidate readiness can all reduce overlap. When the verifier is
already continuously supplied, faster drafting alone does not lower
its service requirement; further gains must reduce verifier work,
exposed startup/drain, or residual time. Conversely, slow supply can
leave overlap unrealized even when proposal agreement is high.

The decomposition provides a framework for interpreting the native cases
in Section~\ref{sec:main-case-studies}. Estimating its terms requires
synchronized stage traces under the matched execution boundary and
concurrent load. Predicting performance additionally requires estimates of
stage costs, committed progress, and candidate readiness under that load.

\FloatBarrier

\subsection{Replay agreement with native timing}
\label{sec:replay-agreement}

We compare replay with native target-decoupled overlap on Redmi K70 Pro,
using a Qwen3-ASR-1.7B OpenCL target, a Qwen3-ASR-0.6B drafter on four CPU
threads, and proposal budget 7. To isolate timing error, replay uses the
same phone-generated target and candidate token sequences as native
execution. The measured interval starts after prefill and includes producer
completion. Signed TPS error is $100(\mathrm{TPS}_{\mathrm{replay}}/
\mathrm{TPS}_{\mathrm{native}}-1)$. Mean absolute error averages its magnitude
over windows.

\noindent\textbf{Validation protocol.}\quad
For each LibriSpeech window, separate training audio is cropped or
zero-padded to the same duration for calibration, matching the initial KV
length. Calibration immediately precedes validation, and the resulting
costs are frozen before observing validation timings. Each calibration
mode has one warm-up and two measured repetitions. Target-only and
\system then each run one warm-up and five measured repetitions, with
their order alternating across repetitions. We use the median native
\system latency. The initial check completes nine windows; a tenth stops
during calibration before native validation. After inspecting the initial
results, we select the first five windows for fresh calibration and five
new measured repetitions per mode. All 90 initial and 50 repeat requests
match the target reference and reach EOS. Candidate sequences are complete
and identical across repetitions within each window, and all completed
windows enter the timing comparison.

\noindent\textbf{Timing agreement.}\quad
Table~\ref{tab:replay-agreement} summarizes the full initial check and the
five-window repeat. Mean absolute TPS errors are $5.81\%$ and $6.54\%$,
respectively.

\begin{minipage}{\linewidth}
\centering
\small
\setlength{\tabcolsep}{8pt}
\begin{tabular}{@{}lrrr@{}}
\toprule
Check & Windows & Mean absolute error (\%) & Pooled error (\%) \\
\midrule
Initial validation & 9 & 5.81 & $-4.58$ \\
Fresh repeat & 5 & 6.54 & $-4.95$ \\
\bottomrule
\end{tabular}
\captionof{table}{The table summarizes replay TPS errors in the initial
check and fresh repeat.}
\label{tab:replay-agreement}
\end{minipage}

\noindent\textbf{Earlier block-calibration check.}\quad
A separate 16-window experiment calibrates costs before each four-window
block, with one warm-up and two measured repetitions of Target-only,
serial speculation, and overlap per window (96 measured requests).
All 32 measured overlap outputs match the target reference. Twelve
windows also have complete, identical source sequences across the two
repetitions and qualify for a same-trajectory timing comparison. Four
have incomplete source traces, including one source-sequence mismatch.
Using the second measured native repetition, the 12 eligible pairs have
$8.43\%$ mean absolute TPS error and
$-9.39\%$ pooled error, retaining all eligible pairs.

\noindent\textbf{Measurement scope.}\quad
These checks condition on phone token trajectories. The headline corpus
replay uses server-generated traces. The context-matched profiles
are used only for timing validation and do not replace the headline
corpus profiles.

\FloatBarrier

\subsection{RQ1: Performance across devices and tasks}
\label{sec:appendix-eval-rq1}
\subsubsection{Four-phone throughput distributions and complete results}
\label{sec:fourphone-transfer}

Figure~\ref{fig:fourphone-tps-cdf} evaluates the complete-audio cohort
on Xiaomi 14 (23127PN0CC) and Redmi Note 9 Pro (M2007J17C).
Each phone uses the same 688 held-out
windows and 94,511 target tokens; the one capped target output and two
capped Qwen source sequences remain included. Fixed deployment uses
Qwen/budget 7 on Xiaomi 14 and Qwen/budget 3 on Note 9 Pro, with eight-token
source publication and canonical target verification/alignment.
The Note 9 reanalysis ranks configurations only on the preserved validation
split, then evaluates the chosen fixed setting on the test set
(Appendix~\ref{sec:deployment-selection}). No per-window selector runs. The K70 and Xiaomi 11 Pro
values reproduce Table~\ref{tab:evaluation-performance} exactly.

Table~\ref{tab:evaluation-performance} includes both additional phones and matched component removals.
Libri, Flrs, KeS, and M4S abbreviate LibriSpeech, FLEURS, KeSpeech,
and M4Singer, respectively.

\begin{table}[!t]
\centering
\caption{\textbf{\system improves pooled throughput over the baselines on all four phones.}
Values report phone-profile replay TPS on the same 688 windows. Standard SD
pools tokens and time for each fixed K=5,6,7,8, then averages the four TPS
values. All includes all five datasets. Bold identifies \system, and
underlines mark the best baseline per phone and dataset. Gain is the
relative TPS increase over that baseline, excluding ablations. Xiaomi 14
retains Qwen/budget 7, while Note 9 Pro uses validation-selected fixed
Qwen/budget 3. Matched without-TD traces cover all 688 windows.
Section~\ref{sec:main-ablation} specifies the settings, and
Appendix~\ref{sec:appendix-eval-deployment} describes calibration.}
\label{tab:evaluation-performance}
\begingroup
\fontsize{9}{10.5}\selectfont
\setlength{\tabcolsep}{2pt}
\renewcommand{\arraystretch}{1.12}
\begin{tabular*}{\linewidth}{@{\extracolsep{\fill}}lrrrrrr@{\hspace{8pt}}rrrrrr@{}}
\toprule
& \multicolumn{6}{c}{Redmi K70 Pro} & \multicolumn{6}{c}{Xiaomi 11 Pro} \\
\cmidrule(lr){2-7}\cmidrule(l){8-13}
Method & Libri & AMI & Flrs & KeS & M4S & All & Libri & AMI & Flrs & KeS & M4S & All \\
\midrule
Target-only & 16.92 & \underline{16.92} & 16.94 & 17.00 & 17.04 & 16.96 & 8.33 & 8.33 & 8.34 & 8.37 & 8.39 & 8.35 \\
Standard SD & \underline{21.52} & 15.46 & \underline{19.91} & \underline{17.63} & \underline{18.26} & \underline{18.97} & 11.77 & \underline{8.39} & 10.87 & \underline{9.59} & \underline{9.94} & \underline{10.34} \\
SpecASR & 21.44 & 11.19 & 18.35 & 15.44 & 15.61 & 16.81 & \underline{12.46} & 7.32 & \underline{10.97} & 9.37 & 9.64 & 10.21 \\
\midrule
\system & \textbf{35.37} & \textbf{18.60} & \textbf{28.22} & \textbf{27.57} & \textbf{25.59} & \textbf{27.92} & \textbf{18.94} & \textbf{9.82} & \textbf{14.98} & \textbf{14.24} & \textbf{13.22} & \textbf{14.68} \\
Gain (\%) & +64.4 & +9.9 & +41.7 & +56.4 & +40.2 & +47.2 & +52.0 & +17.0 & +36.6 & +48.4 & +33.0 & +41.9 \\
\midrule
\quad w/o BC & 33.60 & 15.86 & 26.84 & 25.53 & 22.91 & 25.60 & 18.21 & 8.89 & 14.42 & 13.37 & 12.22 & 13.79 \\
\quad w/o TD & 22.68 & 15.61 & 20.88 & 18.42 & 18.93 & 19.77 & 12.59 & 8.74 & 11.62 & 10.25 & 10.52 & 11.00 \\
\midrule
& \multicolumn{6}{c}{Xiaomi 14} & \multicolumn{6}{c}{Redmi Note 9 Pro} \\
\cmidrule(lr){2-7}\cmidrule(l){8-13}
Method & Libri & AMI & Flrs & KeS & M4S & All & Libri & AMI & Flrs & KeS & M4S & All \\
\midrule
Target-only & \underline{19.26} & \underline{19.26} & \underline{19.28} & \underline{19.35} & \underline{19.39} & \underline{19.30} & \underline{4.11} & \underline{4.11} & \underline{4.12} & \underline{4.13} & \underline{4.14} & \underline{4.12} \\
Standard SD & 15.86 & 11.56 & 14.73 & 13.11 & 13.57 & 14.06 & 3.78 & 2.70 & 3.49 & 3.08 & 3.19 & 3.32 \\
SpecASR & 16.08 & 9.91 & 14.38 & 12.64 & 12.97 & 13.55 & 3.92 & 2.39 & 3.49 & 3.01 & 3.09 & 3.26 \\
\midrule
\system & \textbf{30.99} & \textbf{22.77} & \textbf{28.79} & \textbf{28.63} & \textbf{27.54} & \textbf{28.40} & \textbf{6.72} & \textbf{4.57} & \textbf{5.94} & \textbf{5.74} & \textbf{5.49} & \textbf{5.85} \\
Gain (\%) & +60.9 & +18.2 & +49.3 & +48.0 & +42.1 & +47.1 & +63.4 & +11.1 & +44.2 & +39.0 & +32.5 & +42.0 \\
\midrule
\quad w/o BC & 30.84 & 21.25 & 28.55 & 28.23 & 26.91 & 27.84 & 6.21 & 3.54 & 5.42 & 4.94 & 4.55 & 5.07 \\
\quad w/o TD & 16.35 & 11.77 & 15.24 & 13.72 & 14.06 & 14.55 & 4.10 & 3.42 & 3.94 & 3.69 & 3.76 & 3.84 \\
\bottomrule
\end{tabular*}
\endgroup
\end{table}

\paragraph{CDF and pooled TPS.}
Every complete window receives equal CDF weight, without smoothing,
selection of a test-specific best K, or removal of distribution tails.
Target-only and \system use each window's committed target tokens
divided by its decoding time. The Standard SD curve averages that window's
four TPS values at physical K=5,6,7,8. The tables instead pool all tokens and
time separately for each K before averaging the four throughputs; the
aggregation orders need not agree. Target-only curves are nearly vertical
because the calibrated model charges one mean cost per target forward.
They are not measured distributions of native per-window wall time.

\paragraph{Xiaomi 11 Pro result and fixed configuration.}
The development study retains Qwen at budget 7. This pair remains fixed
across all test windows (Appendix~\ref{sec:deployment-selection}).

The frozen deployment reaches 14.68 pooled TPS, improving throughput by
75.8\% over target-only and 41.9\% over the strongest listed baseline (Standard SD).
The per-dataset improvement over the strongest baseline ranges from
+17.0\% to +52.0\%. These results use the calibrated chain configuration
with target-decoupled overlap.

\subsubsection{Fixed-width SD controls}
\label{sec:appendix-eval-sd-controls}

Table~\ref{tab:sd-fixed-mean} retains each configuration and its range for
both phones. On Xiaomi 11 Pro, all methods use the current calibration protocol in
Appendix~\ref{sec:xiaomi11-transfer}. Five native serial calibration runs
calibrate scalar drafting and loop work; the K1--8 verifier profile is
shared with the source-overlap replay. The fixed-SD mean is 10.34 TPS,
with pooled rates from 9.71 to 10.70 TPS across the four widths.

The draft-step and per-round loop constants are shared across widths;
separate corpus-wide native calibration at each K is not claimed.
Table~\ref{tab:k70-sd-k-sensitivity} also retains K8 and K12 controls; K12
verification costs 142.825\,ms on K70. Fixed-budget SD continues proposing to its
budget even if the drafter emits EOS; target-committed EOS terminates output.
ASP instead may stop at draft EOS. The fixed-SD mean and ASP's maximum of
24 draft tokens on K70 are distinct operating points, not a matched-length
algorithm ablation.

\begin{table}[!t]
\centering
\caption{Standard SD throughput varies with fixed verification width on the same
688 windows. Physical $K$ includes one known root. For each fixed
configuration, TPS pools tokens and decode time. Mean denotes the equal-weight
arithmetic mean of TPS at K=5,6,7,8. Range shows variation across these
configurations, not run-to-run uncertainty. Both phones use the same averaging rule.}
\label{tab:sd-fixed-mean}
\small
\setlength{\tabcolsep}{3pt}
\begin{tabular*}{\linewidth}{@{\extracolsep{\fill}}lrrrrrr@{}}
\toprule
Setting & LibriSpeech & AMI & FLEURS & KeSpeech & M4Singer & All five \\
\midrule
\multicolumn{7}{l}{\textbf{Redmi K70 Pro}} \\
$K=5$ & 19.77 & 15.56 & 18.67 & 17.15 & 17.62 & 18.09 \\
$K=6$ & 21.41 & 15.82 & 19.87 & 17.90 & 18.43 & 19.10 \\
$K=7$ & 21.97 & 15.28 & 20.19 & 17.65 & 18.35 & 19.13 \\
$K=8$ & 22.93 & 15.19 & 20.90 & 17.81 & 18.62 & 19.55 \\
Mean & \textbf{21.52} & \textbf{15.46} & \textbf{19.91} & \textbf{17.63} & \textbf{18.26} & \textbf{18.97} \\
Range & 19.77--22.93 & 15.19--15.82 & 18.67--20.90 & 17.15--17.90 & 17.62--18.62 & 18.09--19.55 \\
\midrule
\multicolumn{7}{l}{\textbf{Xiaomi 11 Pro}} \\
$K=5$ & 10.64 & 8.31 & 10.03 & 9.19 & 9.45 & 9.71 \\
$K=6$ & 11.56 & 8.48 & 10.71 & 9.62 & 9.91 & 10.28 \\
$K=7$ & 12.29 & 8.50 & 11.27 & 9.83 & 10.23 & 10.67 \\
$K=8$ & 12.59 & 8.28 & 11.45 & 9.73 & 10.18 & 10.70 \\
Mean & \textbf{11.77} & \textbf{8.39} & \textbf{10.87} & \textbf{9.59} & \textbf{9.94} & \textbf{10.34} \\
Range & 10.64--12.59 & 8.28--8.50 & 10.03--11.45 & 9.19--9.83 & 9.45--10.23 & 9.71--10.70 \\
\bottomrule
\end{tabular*}
\end{table}

\begin{table}[!t]
\centering
\caption{Fixed-width SD at K8 outperforms K12 on all five datasets in the K70
replay cohort. $K$ includes one known root. These controls are separate
from the main-table K5--8 mean. Neither configuration is claimed to be
a validation-selected optimum.}
\label{tab:k70-sd-k-sensitivity}
\small
\begin{tabular*}{\linewidth}{@{\extracolsep{\fill}}lrrrrrr@{}}
\toprule
Fixed SD & LibriSpeech & AMI & FLEURS & KeSpeech & M4Singer & All five \\
\midrule
Standard SD (K8) & 22.93 & 15.19 & 20.90 & 17.81 & 18.62 & 19.55 \\
Standard SD (K12) & 22.79 & 12.45 & 19.53 & 15.55 & 16.52 & 17.76 \\
\bottomrule
\end{tabular*}
\end{table}

On K70, increasing K8 to K12 reduces verification rounds from 16,148 to
12,681, but increases draft forwards from 121,724 to 144,475 (18.7\%).
The replayed verification totals barely change (1,819 versus 1,811 seconds),
while drafting increases from 2,842 to 3,373 seconds. K12 therefore has
lower complete SD throughput despite using fewer target calls. The K5--8
mean is 3.0\% below K8 and 6.8\% above K12. Verifier throughput alone does
not determine the best serial SD width.

\subsubsection{Selection headroom and fixed controls}
\label{sec:joint-oracle-gap-protocol}

Table~\ref{tab:joint-oracle-gap} covers
688 complete windows and 94,511 target tokens per phone, including the
capped output. K70 searches Qwen3-ASR 0.6B, Parakeet CTC, and SenseVoice Small
at every integer proposal budget from 1 to 24. Xiaomi 11 Pro searches Qwen
and Parakeet at budgets 1--14. Xiaomi 14 and Redmi Note 9 Pro search all
three sources at budgets 1--14, supported by their native calibration profiles.
Language-ineligible sources remain masked. A budget $b$ permits at most $b$
proposals plus one known target root: physical verifier width is at most
$b+1$. Thus budget 7 belongs to physical K8, not physical K7.

\paragraph{Matched replay and objective.}
Every feasible source/budget pair uses preserved source candidates, causal
publication, target references, phone costs, confidence threshold 0.4 where
available, alignment, and correction. External-source confidence remains
missing rather than being imputed. Exposed drafting, verification, queue
work, concurrent slowdown, and producer drain use the same post-prefill
resource clock as the corresponding deployed result. The search covers
94,540 distinct phone/window/source/budget combinations. All rounds
are audited for token commits, readiness, physical width, and cost accounting.
3,862 budget-7 checks reproduce the prior source/window times at
$10^{-12}$ relative tolerance; a known-root boundary check reaches physical K25
and rejects missing cost entries.

For each window, the oracle minimizes resource-completion time over the
feasible source and fixed-cap pairs. It has hindsight and pays zero selection
cost. Selected and oracle TPS each pool tokens over total time, and
\[
\mathrm{Gap}=100\left(1-
\frac{\mathrm{TPS}_{\rm selected}}{\mathrm{TPS}_{\rm oracle}}\right).
\]
Deployment configurations are selected without test outcomes. Three phones
retain Qwen/budget 7; the Note 9 reanalysis selects fixed Qwen/budget 3 using
the original 100 validation windows, with no runtime configuration selection
(Appendix~\ref{sec:deployment-selection}).
Test outcomes provide only the oracle and fixed-configuration diagnostics.
This oracle chooses one source and cap per complete window, not a globally
optimal within-window source, tree, or budget schedule.

\paragraph{Results and fixed controls.}
Selected/oracle TPS is 27.92/28.54, 14.68/15.08, 28.40/29.03, 5.85/5.96
on K70, Xiaomi 11 Pro, Xiaomi 14, and Redmi Note 9 Pro, respectively.
A strong pooled fixed configuration can coexist with different per-dataset
and per-window choices. For fixed-source controls, an unsupported language
falls back to Qwen at the same budget.
Small TPS gaps need not imply identical actions on most windows: many
alternative choices can have similar costs.

\paragraph{Large-width cost sensitivity.}
Idle verifier costs cover all searched physical widths. Concurrent costs
above physical K8 are modeled rather than directly measured for every
source/width pair. Qwen retains the existing calibration-only ablation ratio
rule. External sources with width-specific profiles extend the median
available concurrent/idle ratio; observed ratio extremes provide sensitivity
bounds. Constant-factor profiles retain that factor. Source computation,
publication, and physical K1--8 costs remain fixed in this sensitivity check.

The full search is repeated under nominal, low-ratio, and high-ratio
scenarios. The pooled gaps span 1.94--2.19\%, 2.00--2.88\%, 2.17--2.24\%, 1.78--1.81\%
on the four phones in the same order. These are cost-model sensitivity
ranges, not statistical confidence intervals or proof of native-corpus
optimality.

\subsubsection{Translation and spoken QA results}
\label{sec:appendix-eval-omni-results}
The cohorts, model pairings, budgets, and replay cost coverage are defined
in Appendix~\ref{sec:appendix-eval-setup}. The following results retain
those frozen inputs and the QA length-selection rule.

Conditional \system acceptance on English-to-Chinese, Chinese-to-English,
and QA is 213/390 (54.6\%), 438/687 (63.8\%), and 537/634 (84.7\%),
respectively. Accepted proposals cover 26.6\%, 36.7\%, and 28.3\% of useful
target output. The corresponding $K>1$ verification rounds are 179/669
(26.8\%), 281/834 (33.7\%), and 251/1,438 (17.5\%). Thus high acceptance
among tested proposals does not guarantee broad candidate coverage or a
large throughput gain, even with a common model pair.

The Chinese-to-English audit matches all 80 earlier Target-only records
and checks 2,601 \system rounds across budget arms. The English-to-Chinese
80-ID audit matches all three earlier serial-method records and checks
2,049 \system rounds. The QA audit matches all 80 frozen audio and
reference hashes and reproduces every Target-only timing and token count.
All 80 independent QA sources reach natural EOS without truncation.
The first frozen QA sample also passes incremental-versus-rebuilt KV checks.
Independent batched-prefix reconstruction finds near-tied FP16 top-1
differences in one English-to-Chinese ID (SD and SpecASR) and one
Chinese-to-English ID (SpecASR). The executed serial $K=1$ traces, frozen
target outputs, and reported costs remain unchanged. These checks do not
establish corpus-wide native numerical equivalence.

\paragraph{Paired QA result.}

On these same requests, Target-only, SD, adapted SpecASR, and \system
reach 7.773497, 8.902087, 8.567286, and 9.617707 TPS, respectively.
\system improves TPS by 8.0\% over SD, the strongest baseline, and by
23.7\% over Target-only. All speculative methods use the Omni-3B drafter;
\system keeps budget 3, while the serial baselines keep their reported
budget 7. Budget arms outside this fixed comparison do not determine the
reported result. The QA source replay preserves publication times,
causal alignment, and in-flight source drain.

The Omni phone-cost replay does not separately calibrate confidence
softmax or drafter KV branch switching.

\subsubsection{Workload and native scheduling cases}
\label{sec:appendix-eval-cases}
The case-selection protocol, full-window slowdown counts, and native
scheduling traces below support the workload analysis in
Section~\ref{sec:main-case-studies}.

\paragraph{Case selection and throughput-degradation accounting.}
\phantomsection
\label{sec:representative-case-protocol}

Tables~\ref{tab:representative-cases} and~\ref{tab:throughput-coverage}
use the unchanged 688-window cohort and phone profiles from
Figure~\ref{fig:fourphone-tps-cdf}. All population statistics retain every
window, including short and capped outputs. No selection here changes the
deployment configuration, timing profile, or evaluation cohort.

\noindent\emph{Central-window selection.}\quad
For each of the five Xiaomi 14 datasets, we compute the median of the
per-window ratio $\mathrm{TPS}_{\system}/\max(\mathrm{TPS}_{\rm SD},
\mathrm{TPS}_{\rm SpecASR})$ over all test windows. The displayed case
minimizes the absolute log deviation from this median; ties use the sample
ID. Only displayed cases must reach EOS and contain at least 64 target
tokens. We impose no minimum speedup and do not rank by maximum gain.
The additional K70 read-speech case uses the same audio as its Xiaomi 14 counterpart,
rather than selecting a second favorable input. Xiaomi 14 supplies a
consistent device comparison across all five workload families; K70
illustrates the effect of a different device on one shared input. This is
a retrospective choice of explanatory regimes, not random sampling across
phones or a new estimator of overall performance.

The all-window median TPS improvements over the faster serial method on Xiaomi 14
are 91.0\% (LibriSpeech), 102.7\% (FLEURS),
114.5\% (KeSpeech), 100.4\% (M4Singer), and
101.3\% (AMI). Across all four phones, 431 of 2,752 phone/window
pairs improve TPS by at least 100\% over both serial methods; 365 of these pairs are
on Xiaomi 14. This comparison excludes target-only from the denominator.
Table~\ref{tab:representative-cases} therefore shows target-only TPS explicitly
and defines Gain as
$100[\mathrm{TPS}_{\system}/\max(\mathrm{TPS}_{\rm SD},
\mathrm{TPS}_{\rm SpecASR})-1]$, computed before rounding.

\begin{table}[!t]
\centering
\caption{Medians and interquartile ranges summarize each illustrative regime.
All windows are retained. Values are per-window TPS changes (\%) relative
to target-only, rather than changes in pooled TPS. The intervals describe
variation across windows, not confidence intervals. Negative values indicate slowdowns.}
\label{tab:case-group-distributions}
\small
\begin{tabular*}{\linewidth}{@{\extracolsep{\fill}}lrrr@{}}
\toprule
Case group & Standard SD & SpecASR & \system \\
\midrule
A & +28.8 [+24.3, +32.3] & +31.2 [+15.6, +42.4] & +124.4 [+104.1, +134.6] \\
B & -16.7 [-19.4, -14.5] & -14.7 [-20.1, -10.8] & +64.4 [+61.9, +66.6] \\
C & -22.3 [-27.6, -18.2] & -24.3 [-32.1, -18.4] & +61.2 [+54.5, +64.6] \\
D & -29.2 [-36.2, -24.2] & -31.9 [-40.1, -25.2] & +54.6 [+45.7, +60.1] \\
E & -27.9 [-33.7, -21.8] & -30.7 [-39.9, -22.5] & +50.8 [+38.3, +58.6] \\
F & -39.6 [-44.0, -35.1] & -48.0 [-53.1, -43.0] & +21.4 [+3.3, +33.0] \\
\bottomrule
\end{tabular*}
\end{table}

\noindent\emph{Representative windows.}\quad
Every method within a case uses the same audio and target-token count.
Groups B--F in Table~\ref{tab:case-group-distributions} correspond to the
five Xiaomi 14 workloads; group A is the K70 read-speech comparison:
\begin{itemize}
  \item Read speech (B): LibriSpeech, 57.005\,s and 206 tokens. The K70 comparison (A) uses the same input.
  \item Multilingual speech (C): FLEURS German, 60\,s and 254 tokens.
  \item Dialect speech (D): KeSpeech Southwestern, 60\,s and 96 tokens.
  \item Singing (E): M4Singer, 60\,s and 135 tokens.
  \item Meeting speech (F): AMI SDM, 60\,s and 221 tokens.
\end{itemize}

\noindent\emph{Same-input device comparison.}\quad
For the shared read-speech input, K70 reaches 16.92, 22.65, 23.44, and
40.42 TPS with target-only, SD, SpecASR, and \system, respectively.
The corresponding Xiaomi 14 values are 19.26, 16.65, 16.88, and 32.24 TPS.
SpecASR's drafting cost rises from 5.33\,s on K70 to 9.29\,s on Xiaomi 14,
while target-only time falls from 12.18\,s to 10.70\,s. Serial speculation
therefore helps on K70 but slows Xiaomi 14 for the same 206-token input.

\noindent\emph{Cost accounting.}\quad
SpecASR drafting includes scalar/paired predictions, confidence work, and
draft KV refresh. Its total adds target verification and the same per-round
loop cost as the full-cohort replay. On Xiaomi 14, drafting accounts for
73.1--76.1\% of SpecASR time in the selected cases. Concurrent source
operations cover 97.5--97.7\% of executed source-active time through drain
and cleanup, including publication. This fraction is not the fraction of
total decoding time saved: verification, queue work, and processor
contention remain. Instrumenting the existing replay timeline reproduces
each published \system time and SpecASR decomposition to $10^{-12}$
relative tolerance. The replay uses the same candidates, queue implementation,
phone costs, and causal-publication checks as the full-cohort comparison.
These are phone-profile replay examples, not newly measured native phone executions.

\noindent\emph{All-window degradation rates.}\quad
For each phone and method, we count a
window as faster or slower by comparing its TPS with target-only on the
same phone and audio. A ratio within $10^{-9}$ of one is a numerical tie;
there are none. Table~\ref{tab:throughput-coverage} reports the slower fraction.
This tolerance handles floating-point equality, not timing
uncertainty. Standard SD uses the per-window arithmetic mean of fixed
physical K=5,6,7,8 TPS, matching the CDF. Each of the 688 windows receives
equal weight, regardless of duration or token count. Thus the percentages
are window frequencies, not pooled speedups or dataset-weighted means.
The capped output remains in the denominator. These are throughput
regressions, not recognition-quality regressions; the replay does not
establish native-corpus timing or output parity.

\begin{table}[!t]
\centering
\small
\caption{The counts compare complete-window throughput with target-only.
All 688 windows per phone are retained. Losses measure TPS reductions
among slower windows.}
\label{tab:throughput-coverage-detail}
\begin{tabular*}{\linewidth}{@{\extracolsep{\fill}}llrrrr@{}}
\toprule
Phone & Method & Faster & Slower & Median loss & Max. loss \\
\midrule
Redmi K70 Pro & Standard SD & 541 & 147 & 9.8\% & 62.1\% \\
 & SpecASR & 375 & 313 & 19.2\% & 74.3\% \\
 & \system & 641 & 47 & 14.2\% & 21.9\% \\
\midrule
Xiaomi 11 Pro & Standard SD & 611 & 77 & 8.3\% & 58.5\% \\
 & SpecASR & 553 & 135 & 10.7\% & 67.8\% \\
 & \system & 657 & 31 & 7.4\% & 20.3\% \\
\midrule
Xiaomi 14 & Standard SD & 0 & 688 & 24.6\% & 74.7\% \\
 & SpecASR & 0 & 688 & 26.4\% & 79.6\% \\
 & \system & 647 & 41 & 6.3\% & 10.1\% \\
\midrule
Redmi Note 9 Pro & Standard SD & 2 & 686 & 16.6\% & 73.0\% \\
 & SpecASR & 90 & 598 & 20.7\% & 79.3\% \\
 & \system & 651 & 37 & 7.2\% & 11.3\% \\
\bottomrule
\end{tabular*}
\end{table}

Table~\ref{tab:throughput-coverage-detail} reports exact counts and
conditional loss magnitudes. The frozen \system deployment reduces the
frequency of slowdowns but does not guarantee a speedup on every window.

\paragraph{Native scheduling cases.}
\phantomsection
\label{sec:native-scheduling-cases}

Table~\ref{tab:native-scheduling-cases} collects existing native comparisons
that expose different bottlenecks. Each row uses a matched input and its
original runner; results from different rows are not pooled. The target-decoupled
comparisons change both prefix dependence and overlap, whereas the older
lookahead comparison retains prefix dependence and matched acceptance.
These calibration examples complement the 688-window test results.

\begin{table}[!t]
\centering
\caption{\textbf{Native cases compare source and scheduling choices.}
Arrows follow the comparison column, and $n$ gives measured repetitions
per arm. Acceptance $\alpha$ divides pooled accepted draft tokens by pooled
proposed draft tokens. Rows retain their own runtime boundaries.
Gains are not pooled across rows.}
\label{tab:native-scheduling-cases}
\small
\setlength{\tabcolsep}{3pt}
\begin{tabular*}{\linewidth}{@{\extracolsep{\fill}}llllrr@{}}
\toprule
Case & Device/data & Comparison & TPS & $\alpha$ (\%) & $n$ \\
\midrule
Supply & Mi 14/Libri & Qwen $\to$ Parakeet & 28.36 $\to$ 46.33 & 74.0 $\to$ 69.1 & 5 \\
Verification & Mi 14/AMI & Qwen $\to$ Parakeet & 22.11 $\to$ 22.74 & 41.1 $\to$ 23.9 & 5 \\
TD, high acc. & K70/Libri & Serial $\to$ TD & 15.41 $\to$ 29.81 & 77.4 $\to$ 81.7 & 5 \\
TD, low acc. & K70/KeSpeech & Serial $\to$ TD & 9.62 $\to$ 15.80 & 44.9 $\to$ 24.3 & 1 \\
TD, zero acc. & K70/M4Singer & Serial $\to$ TD & 11.85 $\to$ 12.52 & 80.4 $\to$ 0.0 & 1 \\
Prefix lookahead & K70/AMI & Serial $\to$ lookahead & 9.25 $\to$ 9.41 & 37.0 $\to$ 37.0 & 2 \\
\bottomrule
\end{tabular*}
\end{table}

\noindent\emph{Candidate supply versus physical overlap.}\quad
On the prespecified Xiaomi 14 LibriSpeech calibration window,
Qwen and Parakeet retain the same GPU target and budget 7. Each has one
warmup and five measured runs with identical 239-token outputs.
Qwen's drafting is physically overlapped for 97.4\% of its active time,
yet 78.7\% of verification rounds have physical K1 and no ready candidates.
Parakeet reduces that fraction to 44.0\% and raises pooled TPS from 28.36
to 46.33, although its accepted/proposed ratio is lower (69.1\% versus 74.0\%).
The benefit comes with faster candidate supply, not a higher acceptance
ratio or a larger fraction of drafting hidden. Producer service spans include
publication; GPU spans are target-call intervals rather than hardware-kernel
traces. K1 spans represent useful target computation, not GPU idle waiting.
Figure~\ref{fig:native-scheduling-timelines} selects the first
multi-candidate rejection and the first fully accepted K8 from the
median-duration Qwen run. The equal-width windows are not consecutive.
Across all five Qwen repeats, observed K6 and K8 calls have median native
durations of 97.4 and 103.9\,ms (28 and 31 calls), respectively. The
widths are chosen by candidate availability under a common budget-7 cap,
so these timings are not a fixed-width causal comparison.
TPS in the table pools all five runs, including the slow Qwen repeat.

\noindent\emph{Low acceptance can leave verification dominant.}\quad
On the prespecified Xiaomi 14 AMI calibration window, Parakeet's acceptance
is 23.9\% versus Qwen's 41.1\%. K1 rounds still account for 84.5\% with
Parakeet, and pooled throughput rises only from 22.11 to 22.74 TPS (2.9\%).
The same faster source therefore provides much less benefit when candidates
rarely advance the target. Neither source requires waiting for the target
prefix, but verification and correction still determine useful progress.

\noindent\emph{Removing exposed drafting at low acceptance.}\quad
K70's fixed-budget-7 KeSpeech calibration pair reaches 15.80 TPS with TD
versus 9.62 with serial prefix-conditioned drafting, despite acceptance
falling from 44.9\% to 24.3\%. Source independence and overlap remove enough
exposed work to offset the lost agreement. The M4Singer pair is a contrasting
case: TD accepts none of 30 proposals and improves only from 11.85 to
12.52 TPS (5.7\%). Both examples contain one measured pair and matching
outputs; they are individual observations rather than dataset estimates.
The five-repeat LibriSpeech calibration gives 15.41 versus 29.81 TPS.

\noindent\emph{Prefix-dependent lookahead can waste hidden work.}\quad
The earlier K70 AMI experiment retains the serial acceptance trajectory
while overlapping lookahead. At its original K8 setting, physical overlap
hides 34.9\% of drafting, but only 6.2\% is reusable after rejection and
realignment. Two native pairs give 9.25 versus 9.41 TPS (1.7\%). This is a
different runner from TD; it illustrates why physical overlap alone does
not establish useful speedup, rather than supplying a current baseline row.

\paragraph{Drafter cost and verification budget on Redmi K70 Pro.}
\phantomsection
\label{sec:k70-drafter-case}

This native case study checks whether inexpensive audio proposals provide
net decoding gains once generation, verification, and recovery are executed
together. It supplements acceptance-only screening with a measured
cost--acceptance trade-off under fixed source and budget settings.

\noindent\emph{Matched-audio protocol.}\quad
We use one 60-second AMI SDM development window from recording ES2004c,
whose Qwen3-ASR 1.7B target output contains 212 tokens including EOS.
The window is the first of at least 50 seconds in the fixed AMI development
pilot order. All audio is available before decoding, so this experiment
isolates decoding and does not measure sustained streaming or arrival lag.
The target uses OpenCL with low precision on Redmi K70 Pro. Parakeet CTC
and SenseVoice remain resident; only the selected source generates
candidates, using four CPU threads. Each processes the same 24 audio chunks
of at most 2.56 seconds, without reading the target text prefix.

We test $K\in\{3,4,8\}$ for each drafter and a target-only control.
Here $K$ is the maximum number of real token nodes in an admitted chain,
including its first token already predicted by the target. A rolling
buffer can span chunk boundaries. It retains and realigns candidate
suffixes after rejection; unmatched or unavailable proposals use target
K1 progress, after which later candidates remain eligible. Deferred KV
submission carries the target's correction or bonus token to the next
batch, avoiding an unconditional additional K1 forward after verification.
Each run keeps its selected drafter and budget unchanged.

Each of the seven configurations has one warmup and five measured runs.
Configuration order is shuffled within each repetition, and each treatment
is compared with the target-only control from that repetition. Timing runs
from target-prefill completion to committed EOS. Source production starts
on this clock, with no precomputed candidate text; exposed source work,
retokenization, queue alignment, target verification/readback, and recovery
are included. Initial model loading, common waveform loading, and target
prefill are excluded. Worker cleanup after EOS is recorded separately and
has a median below 5\,ms for every configuration. All 35 measured runs
reproduce the complete reference token sequence through EOS.

\begin{table}[!t]
\centering
\small
\setlength{\tabcolsep}{6pt}
\begin{tabular}{@{}lrrrrr@{}}
\toprule
Drafter & Max. $K$ & TPS & Gain (\%) & Extra tokens & Target calls \\
\midrule
Target-only & scalar & 14.58 & 0.0 & 0 & 211 \\
Parakeet CTC & 3 & 15.44 & 5.9 & 33 & 178 \\
Parakeet CTC & 4 & 16.25 & 11.5 & 39 & 172 \\
Parakeet CTC & 8 & 15.66 & 7.4 & 43 & 168 \\
SenseVoice & 3 & 16.03 & 10.0 & 24 & 187 \\
SenseVoice & 4 & 16.03 & 9.9 & 26 & 185 \\
SenseVoice & 8 & 15.94 & 9.3 & 30 & 181 \\
\bottomrule
\end{tabular}
\caption{Native decoding compares drafters on one 60-second AMI development
window on Redmi K70 Pro. TPS pools five measured runs per configuration.
Target-only uses scalar decoding.
Gain is relative to the interleaved target-only control. Extra tokens and
target calls are per-run medians. Extra tokens exclude the already-known
first token of each admitted chain. All 35 runs agree through EOS (212 tokens).}
\label{tab:k70-drafter-case}
\end{table}

\noindent\emph{Realized benefit and budget sensitivity.}\quad
Table~\ref{tab:k70-drafter-case} pools output tokens and decoding time
across the five measured repetitions. Parakeet/$K=4$ reaches 16.25 TPS,
11.5\% above target-only; its five paired gains range from 5.6\% to 16.2\%.
SenseVoice/$K=4$ reaches 16.03 TPS, a 9.9\% gain with a paired range of
5.8\%--14.9\%. These repeated gains concern this window, not uncertainty
across recordings. At $K=4$, all source chunks are ready after a median
2.13\,s for Parakeet and 7.74\,s for SenseVoice. These intervals overlap
target execution and must not be added to the total decoding time.

Parakeet/$K=4$ accepts a median of 39 additional tokens beyond the
already-known first token of each admitted chain, reducing target forward
calls from 211 to 172. At $K=8$, it accepts 43 additional tokens and uses
168 calls, yet throughput falls to 15.66 TPS. Fewer calls do not guarantee
lower time when each wider verification costs more. The very small
SenseVoice difference between $K=3$ and $K=4$ does not establish a stable
ordering. These results motivate selecting a drafter and budget by their
net decoding benefit rather than source size, acceptance, or invocation
count alone.

All tested source/budget configurations enter the comparison. These fixed
controls support the cost--acceptance interpretation for this audio input
and timing boundary.

\FloatBarrier

\subsection{RQ2: Component ablations}
\label{sec:appendix-eval-rq2}
The removals below share the corpus, target references, and phone costs
with the complete configurations. Appendix~\ref{sec:appendix-eval-deployment}
gives the calibration and development-selection protocols; each removal
states its additional cost assumptions and sensitivity checks.

\subsubsection{Cumulative Redmi K70 Pro ablations}
\label{sec:k70-cumulative-ablation}

Table~\ref{tab:evaluation-performance} evaluates two modules added to
SpecASR (ASP): budget configuration (BC), then target-decoupled overlapped
drafting (TD) with the fixed Qwen drafter. The main table presents
these configurations as \system, \system w/o TD, and the existing SpecASR
baseline, and adds the independent \system w/o BC control that retains TD.
Removing both modules is identical to ASP and is not repeated as a separate row.
All rows use the same
target model and full-audio 688-window cohort as
Table~\ref{tab:evaluation-performance}. This study evaluates post-prefill
decoding on complete windows with chain candidates. Deployment configurations
are fixed before testing. The ASP arm uses the calibrated operation costs in
Appendix~\ref{sec:k70-baseline-replay}.
All costs and configuration choices use the K70 development measurements.

\paragraph{First addition: budget configuration.}
We screen proposal limits 1--24 on 50 balanced development windows and retain
budget 7 after validation. This raises serial TPS from 16.81 to 19.77.
The fixed setting and development split are documented in
Appendix~\ref{sec:deployment-selection}.

\paragraph{Second addition: target-decoupled overlapped drafting.}
Source independence removes the target-prefix dependency that otherwise
limits concurrent drafting. We add target-decoupled proposals and overlap as
one module with the fixed Qwen 0.6B drafter. The configuration retains
CPU four-thread placement, the resident OpenCL target,
confidence threshold 0.4, and frozen budget 7. Both arms
consume the same complete waveform and finish their initial full-audio
prefills before timing. No output text is prefetched. The independent
producer conditions on audio and its own generated prefix; it reads no
target text or target KV. It publishes immutable batches of eight tokens
to an ordered buffer. These are token-publication records, not separate
audio chunks. The consumer applies the budget cap to already-published
candidates, using the existing alignment and target-KV handling. A miss
allows target-only progress followed by realignment without restarting
the producer.

This changes both the proposal interface and schedule, so the combined
gain is not pure overlap at unchanged acceptance. The event replay uses
actual server-generated source trajectories and the same fixed target
references as the serial rows. It preserves causal publication, immutable
verification snapshots, confidence stopping, correction, and completion
at EOS. The target verifier and common loop costs retain the existing
K70 anchors. Concurrent-service ratios, source publication, and alignment
costs are fitted only on calibration traces and frozen before test evaluation.
Small negative contention estimates receive no speed credit. Missing
solo-width calibration uses the median measured contention ratio. The replay charges
producer cancellation at an operation boundary and post-EOS drain once.

The evaluation retains all 688 windows and 94,511 reference tokens,
including the capped target reference. Source generation stops at EOS or
2,048 tokens; capped sources remain in the comparison. The queue audit
checks every accepted token against the fixed reference and every
candidate's publication time against its decision boundary. These checks
establish the reference-based replay contract, not exact corpus-wide
server/mobile numerical equivalence.

Earlier chunk-reset diagnostics required repeated audio prefills and delayed
publication until each chunk completed, yielding zero or low acceptance.
The evaluated configuration retains the full audio context and publishes
token batches during generation.

\paragraph{Drafter configuration.}
The development study retains Qwen at budget 7
(Appendix~\ref{sec:deployment-selection}). This same pair is used for
every test window in the reported 27.92 TPS target-decoupled configuration.

\paragraph{Independent budget-configuration removal with TD retained.}
The w/o BC control keeps the target-decoupled producer, overlap, immutable
8-token publication records, alignment, and target-KV handling. It
restores the original ASP configuration: fixed Qwen 0.6B, confidence
threshold 0.4, and a maximum of 24 proposals. This control is fixed
before its test replay; no test-time budget sweep or checkpoint selection
is performed. It removes configuration selection while holding the TD
runtime intact. Replaying all 688 windows retains all 94,511 reference
tokens and reproduces the budget-7 reference exactly for every window.

Without BC, pooled throughput is 25.60 TPS; the selected \system
configuration is 9.0\% faster. Dataset-specific gains range from 5.2\%
on FLEURS to 17.3\% on AMI. Physical verifier widths above eight use the
existing K70 solo profiles through width 25. Their concurrent slowdown
uses the median calibration ratio, clipped at one; these larger
widths lack direct concurrent measurements. They account for 4,732 of
36,514 w/o BC verification rounds. A predeclared sensitivity check
replaces their ratio with the largest measured calibration ratio, 1.049,
yielding 25.42 pooled TPS. Both values are profile-replay estimates.
An independent audit checks candidate visibility, reference-token progress,
confidence stopping, and charged time across all 111,741 rounds in the
reference, removal, and sensitivity runs, retaining the capped sequences.

\paragraph{Accounting boundary.}
TPS pools committed target tokens over charged decoding time. Startup,
exposed producer tail, and cleanup that delays the next window are charged
once under the same boundary for all variants. Failed and capped windows remain recorded.

\subsubsection{Xiaomi 11 Pro component removals}
\label{sec:xiaomi11-component-ablations}

The two component removals retain the same 688 test windows, Qwen source,
device profiles, and post-prefill clock. Without TD, we reprice the actual
serial ASP budget-7 operation traces, including confidence computation,
recycling, and KV refresh; we do not truncate cap-14 trajectories. These
counters reproduce all 688 previous K70 budget-7 timings under the K70
costs. Under current Xiaomi costs, this arm reaches 11.00 pooled TPS,
versus 14.68 for \system, a 33.4\% increase with TD. Without BC, we retain
target-decoupled overlap, eight-token publication and the same queue, while
restoring the original ASP cap of 14 proposals and confidence threshold
0.4. This arm reaches 13.79 TPS; \system is 6.4\% faster. Both caps are specified before
test evaluation; neither arm selects configurations from test outcomes.
Removing both modules yields the already listed SpecASR baseline.

For the w/o-BC arm, physical widths K9--15 use measured solo verifier
profiles, but their concurrent costs are extrapolated. We multiply solo
cost by 0.9651, the median of the directly measured calibration
concurrent/idle ratios at K1, K5 and K8 (each with at least five
observations), following the existing fit rule without clipping. Of
34,261 verification rounds, 6,108 exceed K8. A predeclared sensitivity
check substitutes the smallest and largest eligible calibration ratios,
0.9387 and 1.2396, yielding 13.86 and 13.16 pooled TPS, respectively.
These scenarios test this cost assumption; they are not confidence intervals
or new concurrent phone measurements at large K.

The original \system time is reproduced exactly on every window. Across
the baseline, removal and sensitivity arms, 139,236 overlapped rounds
pass token, queue-readiness and timing-accounting checks; 17,234 serial
budget-7 rounds also pass trace checks, validating replay consistency.

\subsubsection{Xiaomi 14 and Redmi Note 9 Pro component removals}
\label{sec:additional-phone-ablations}

The Xiaomi 14 and Redmi Note 9 Pro ablations use the same definitions as
Section~\ref{sec:main-ablation}. Without TD, we hold the frozen source and
budget fixed and price actual serial SpecASR traces at budget 7 on
Xiaomi 14 and budget 3 on Note 9. Both controls cover all 688 windows.
The Note 9 control combines 622 retained budget-3 traces with 66 newly
collected trajectories, all priced with the same frozen phone profile
and no selector fee. GPU collection time is excluded from phone TPS.
Without BC,
we keep target-decoupled overlap and restore confidence-controlled ASP with a
14-proposal cap. Each arm retains Qwen, the same windows, and the same
post-prefill resource clock. Removing TD preserves the fixed drafter and
budget; test outcomes do not enter configuration selection.

On Xiaomi 14, the retained deployment reaches 28.40 TPS, versus
27.84 without BC and 14.55 without TD. For the without-BC arm,
2,329 of 52,804 verification rounds exceed physical K8. Their concurrent
costs use 0.9961 times the directly measured idle-width cost, the
median eligible calibration ratio. Replacing it by the observed ratio extremes
(0.9032, 1.0179) yields 27.81--27.98 TPS. These are
cost-sensitivity scenarios, not confidence intervals or measured concurrent
K9--15 executions.

On Redmi Note 9 Pro, the fixed deployment reaches 5.85 TPS, versus
5.07 without BC and 3.84 without TD. The matched without-TD comparison
shows a 52.5\% gain from target-decoupled overlap. For the without-BC arm,
5,836 of 36,279 verification rounds exceed physical K8. Their concurrent
costs use 1.0593 times the directly measured idle-width cost, the
median eligible calibration ratio. Replacing it by the observed ratio extremes
(1.0565, 1.0694) yields 5.06--5.07 TPS. These are
cost-sensitivity scenarios, not confidence intervals or measured concurrent
K9--15 executions.

The validation-only fixed selection is documented in
Appendix~\ref{sec:deployment-selection}.

\FloatBarrier

\subsection{RQ3: Native on-demand execution}
\label{sec:appendix-eval-rq3}
\subsubsection{Native on-demand request protocol and baselines}
\label{sec:native-request-protocol}
The Redmi K70 Pro request experiment uses an Omni-7B target on six CPU
threads and a Qwen3-ASR-Audio-0.6B drafter on OpenCL, with both models
resident. MNN precision is \texttt{normal} for CPU inference and
\texttt{low} for the OpenCL decoder; these settings do not specify weight
quantization. Both models keep independent audio KV states. All speculative
methods use proposal budget $b=3$. SD generates up to three
prefix-conditioned candidates before
target verification. SpecASR uses the same budget, confidence threshold
0.4, and aligned recycling with serial old/new drafter branches. This is
an ASP chain adaptation, distinct from the masked-pair ASR replay above.
Request timing includes instruction-suffix prefill and generation through
EOS, excluding earlier audio preparation and model loading
(Section~\ref{sec:main-native-requests}).

The sweep uses 19 positions at 60,90,\ldots,600s of the same \emph{Pearl}
recording. Each request transcribes only the latest 60s window, whose
language-model audio KV is prepared before the instruction arrives.
No new audio enters a running request. Positions identify audio checkpoints,
not a real-time request schedule or progressively longer response inputs.
Each plotted point is one measured request after an initial execution check, and the
four curves combine separate runs under the same request protocol.
TPS divides committed body tokens by request-to-EOS wall time, including
instruction-suffix prefill. We report all 19 formal observations per method
without estimating run-to-run uncertainty.

\paragraph{Native request output checks.}
These checks measure cross-method agreement with Target-only, rather than
transcription accuracy against ground truth. At the 120s checkpoint,
all four methods produce the same 173 body tokens, so the peak 78.0\% TPS
gain also reduces response time from 25.05s to 14.07s. SD and SpecASR produce
the same token sequence as each other at all 19 positions. The word-agreement bound
requires both at most three word edits and an edit rate of at most 1\%
relative to Target-only. SD and SpecASR satisfy it at 18 of 19 positions,
and \system at 17 of 19. At 210s, \system differs by 45 word edits.
At 270s, SD, SpecASR, and \system differ by 9, 9, and 2 edits, respectively,
and all exceed the 1\% rate bound. All observations remain in
Figure~\ref{fig:native-ondemand-fourway}. Thus the throughput comparison
does not establish exact-output preservation across the sweep or
corpus-wide gains. Appendix~\ref{sec:numerical-sensitivity} discusses
numerical execution boundaries.

\paragraph{Request energy on Xiaomi 14.}
We measure the first 60s window of the Xiaomi 14 trace in
Figure~\ref{fig:native-ondemand-fourway}, using the same resident model pair,
processor placement, and proposal budget as the native request experiment.
Each method has one warmup and one formal execution, with the formal
\system request followed by Target-only. Both executions start from
prepared audio KV and finish naturally at EOS. USB is physically disconnected
throughout the measurement, while MNN remains in the foreground and external
cooling remains enabled.

The phone's battery current and voltage are sampled every 250\,ms through
Perfetto. Observed gauge values update approximately once per second.
We estimate request energy as $E=\int_{t_0}^{t_{\mathrm{end}}}I(t)V(t)\,dt$
using trapezoidal integration, where $t_0$ is the native request origin and
$t_{\mathrm{end}}$ is the time when both model workers have finished.
The endpoints and power samples use the phone's boot-time clock.
This boundary includes instruction-suffix prefill, decoding, and any producer
drain. Response time ends at target EOS.
Energy excludes model loading and earlier audio preparation.
The estimate covers the whole phone, including its display and background
activity, with no idle-power subtraction. Energy consumed by the external
cooler is excluded.

\begin{table}[t]
  \centering
  \small
  \caption{\system uses less estimated phone energy on one 60s Xiaomi 14
  window. Each row reports one formal execution after warmup.}
  \label{tab:native-request-energy}
  \begin{tabular}{lrrrrr}
    \toprule
    Method & Energy (J) & Mean power (W) & Response (s) & Body tokens & J/token \\
    \midrule
    Target-only & 341.6 & 10.47 & 32.64 & 256 & 1.334 \\
    \system & 239.4 & 11.33 & 21.13 & 248 & 0.965 \\
    \bottomrule
  \end{tabular}
\end{table}

The shorter response more than offsets the higher mean power, reducing
request energy by 29.9\% and energy per body token by 27.7\%
(Table~\ref{tab:native-request-energy}). The two formal responses contain
the same 194 normalized words in the same order, with differences in
punctuation, casing, and tokenization. Their raw token sequences are not
identical, so the table also reports token counts and energy per token.

These are battery-gauge estimates from one window and one formal run per
method. Over the full captured interval, current integration gives
145.9\,mAh, compared with a 159.0\,mAh decrease in the charge counter, an
8.2\% discrepancy. Both readings come from the phone's gauge and do not
constitute independent external calibration. The case measures request
energy under this deployment and does not estimate run-to-run variability
or battery-life gains across workloads.

\subsubsection{Sustained native streaming}
\label{sec:native-sustained-streaming}
\newcommand{\StreamCoolStages}{0.70 / 2.24 / 0.68}
\newcommand{\StreamCoolService}{3.63}
\newcommand{\StreamCoolRTF}{0.907 / 1.007}
\newcommand{\StreamCoolTPS}{13.87}
\newcommand{\StreamCoolPninetyfive}{4.68}
\newcommand{\StreamCoolFinalLag}{4.09}
\newcommand{\StreamCoolWait}{2.07 / 0.21}
\newcommand{\StreamCoolTemperature}{23.0 / 35.4}
\newcommand{\StreamCoolCaps}{0}
\newcommand{\StreamCoolShort}{28}
\newcommand{\StreamCoolLatePrefill}{2.34}
\newcommand{\StreamBareStages}{0.94 / 2.95 / 0.78}
\newcommand{\StreamBareService}{4.68}
\newcommand{\StreamBareRTF}{1.171 / 1.201}
\newcommand{\StreamBareTPS}{12.23}
\newcommand{\StreamBarePninetyfive}{103.39}
\newcommand{\StreamBareFinalLag}{120.50}
\newcommand{\StreamBareWait}{114.03 / 114.03}
\newcommand{\StreamBareTemperature}{28.0 / 42.1}
\newcommand{\StreamBareCaps}{0}
\newcommand{\StreamBareShort}{28}
\newcommand{\StreamBareLatePrefill}{3.46}
\newcommand{\StreamExactOutputs}{150}

\noindent\textbf{Protocol.}\quad
A resident MNN native runner on Redmi K70 Pro receives a continuous 600s
English audiobook excerpt in 150 non-overlapping 4s blocks, without added
pauses or repeated segments. The excerpt comes from the \emph{Pearl} chapter
of \emph{The Scarlet Letter}; the recording was chosen during exploratory
experiments and is a deployment case, not a held-out corpus result.
The Qwen2.5-Omni-7B target uses six CPU threads, its encoder uses four,
and Qwen3-ASR-Audio-0.6B drafts on OpenCL. The proposal budget is three
(up to four verifier nodes including the known root). Target encoding,
prefill, and verification execute serially within each block, concurrently
with the resident source worker. The worker reads only arrived audio;
verification uses only candidates ready at its decision time. Audio and
committed text context persist for 12s, then reset while model weights stay
resident. Every block and final resource drain is retained.

\noindent\textbf{Deployment cases.}\quad
Both runs retain 75 of the 100 target audio tokens per block; the source
receives uncompressed audio. The cooled run uses a magnetic thermoelectric
cooler in its maximum mode and an arrival-based source deadline.
The uncooled run uses a consumption-based source lifetime: an epoch stops
drafting when the target has consumed it, avoiding premature expiry during
backlog. In a separate 60s paired check, switching to consumption-based
expiry changes mean complete processing
from 3.62s to 3.63s with identical outputs in all 15 blocks; this does not
establish long-run equivalence. Starting temperatures also differ
(Table~\ref{tab:native-sustained}). The curves are descriptive deployment
cases rather than an isolated cooling ablation.

\begin{figure}[t]
  \centering
  \includegraphics[width=\linewidth]{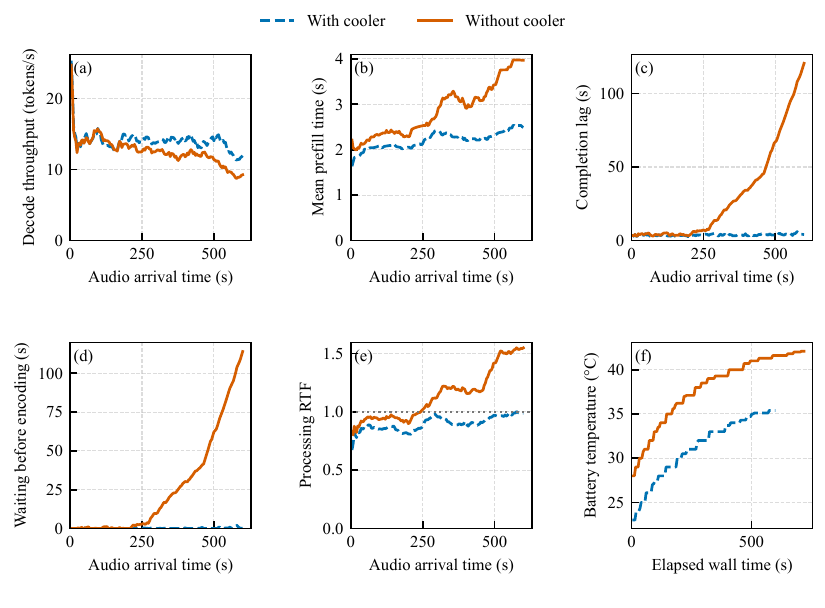}
  \caption{The cooled configuration keeps pace over 600s, while the uncooled
  case develops backlog. Decode throughput, prefill time, and processing RTF
  use trailing windows of at most 15 blocks. Completion lag and queue wait
  show every block. Panels (a)--(e) use input-arrival time. Temperature uses
  elapsed wall time and includes final drain. The dotted line marks RTF 1.
  Both cases use 75\% token retention, but source deadlines and initial
  thermal states differ as described above.}
  \label{fig:native-sustained-streaming}
\end{figure}

\noindent\textbf{Metrics and observations.}\quad
Completion lag starts when a block's final audio sample arrives and ends
when its committed output completes; it excludes the block's own 4s capture
interval. Queue wait ends when encoding begins. Active RTF divides serial target-path wall time (target encoder plus
prefill, decoding, and host service) by audio duration; concurrent drafter
processor work is not added a second time. Resource RTF instead uses total
elapsed time, including paced waits and all producer drain. Decode throughput
counts committed non-stop tokens over decoding time. Quantiles use linear
interpolation over all 150 blocks. Loading and warm-up precede the stream.

\begin{table}[t]
  \centering\small
  \caption{The table summarizes both complete native deployment runs.
  Temperature comes from the battery sensor. Source-lifetime and initial-state
  differences preclude cooling-only attribution.}
  \label{tab:native-sustained}
  \begin{tabular}{lrr}
    \toprule
    Metric & With cooler & Without cooler \\
    \midrule
    Mean encoder / prefill / decode (s) & \StreamCoolStages & \StreamBareStages \\
    Mean complete block processing (s) & \StreamCoolService & \StreamBareService \\
    Active / resource RTF & \StreamCoolRTF & \StreamBareRTF \\
    Decode throughput (tokens/s) & \StreamCoolTPS & \StreamBareTPS \\
    p95 completion lag (s) & \StreamCoolPninetyfive & \StreamBarePninetyfive \\
    Final completion lag (s) & \StreamCoolFinalLag & \StreamBareFinalLag \\
    Maximum / final queue wait (s) & \StreamCoolWait & \StreamBareWait \\
    First / maximum battery temperature ($^\circ$C) & \StreamCoolTemperature & \StreamBareTemperature \\
    Capped outputs & \StreamCoolCaps & \StreamBareCaps \\
    \bottomrule
  \end{tabular}
\end{table}

The cooled run averages 3.63s complete service per 4s block, with p95
completion lag 4.68s and maximum temporary queue wait 2.07s. Prefill accounts
for 61.6\% of complete service. The uncooled run has mean service
\StreamBareService\,s and final completion lag \StreamBareFinalLag\,s;
its last-300s mean prefill is \StreamBareLatePrefill\,s; encoding and
prefill together average 4.58s per 4s block in that interval.
Thus decoding acceleration alone does not guarantee sustained timing:
audio-prefill efficiency and heat removal remain important deployment issues.

\noindent\textbf{Auxiliary audio-token reduction.}\quad
Audio-token pruning and merging can shorten the input to an audio language
model while retaining useful acoustic information
\citep{gibier2025segmentwise,luo2026locality}. This deployment uses an
auxiliary filter to reduce prefill cost while preserving temporal coverage.
An offline-fitted linear ridge predictor scores projected encoder tokens by
predicted target-attention importance. At runtime, one highest-scoring token
is retained from each of 75 consecutive equal-time bins, preserving temporal
coverage and order; retained tokens use consecutive target positions.
Selection occurs before target prefill and its cost is included in encoder
service. The target is not fine-tuned, while the scoring prior is fitted
offline. This filter is a configuration of the sustained-run experiment.

\noindent\textbf{Limits.}\quad
This paced-recording study measures native runtime behavior rather than
microphone/UI latency or an \system versus Target-only/SD comparison.
It does not evaluate recognition accuracy under audio-token reduction. \StreamCoolShort/150 cooled outputs contain at most three
body tokens; all are retained. Output equality across the two deployment cases is
\StreamExactOutputs/150. Better prefill kernels, token reduction
with explicit quality evaluation, and thermal management are complementary
options outside the decoding contribution. One run per condition and a fixed
retention rate do not establish robustness across workloads or durations.

\FloatBarrier
\section{Extended related work}
\label{sec:related-work}

\subsection{On-demand audio understanding}
Prior systems let users retrieve information from previously captured audio.
Memoro combines a wearable audio interface with semantic memory retrieval
and concise LLM-generated assistance \citep{zulfikar2024memoro}. LA-RAG
converts long recordings into timestamped event records and answers queries
through structured retrieval \citep{hegde2026larag}. These systems motivate
responsive access to past audio, while choosing different representations
for the retained information.

Audio language models can instead retain model context for later generation.
MiniCPM-o 2.6 accepts continuous audio and video independently of user queries,
with separate streaming-prefill and generation interfaces
\citep{openbmb2025minicpmo26}. OmniMem allocates memory across audio and visual
inputs and selects informative KV states to support long-context understanding
within a memory budget \citep{sun2026omnimem}. This preparation reduces work
at request time, but generating a complete reply still requires decoding.
\system accelerates that stage under a fixed audio-preparation protocol.
Its objective is request-to-complete-response latency on a phone, making
input preparation, retrieval, and cache management complementary concerns.

\subsection{Speech proposals and target verification}
Standard speculative decoding separates cheaper proposal generation from
target verification \citep{leviathan2023speculative}. Proposals may come from
a separate model or reuse target layers, heads, and features
\citep{zhang2024draftverify,cai2024medusa,li2024eagle}. Speech provides an
additional source of candidate information. SpecASR exploits acoustic
alignment through adaptive draft lengths, local draft recycling, and sparse
trees \citep{wei2025specasr}. Its recycling procedure extends a branch from
verified text and merges it with earlier drafts when their tokens realign.
\system adapts this local reuse principle while letting the producer advance
without synchronizing to target-prefix updates. An autoregressive producer
still conditions on the request and its own generation history.

Speech-specific accelerators also differ in their search and acceptance rules.
SMUD uses decoder-assisted masking and masked/unmasked beam search to reduce
ASR search cost \citep{okabe2025smud}. Whisper-Medusa accepts multi-head
predictions using probability thresholds
\citep{segalfeldman2025whispermedusa}. CTC encoder drafts already avoid
dependence on target-text prefixes, but their confidence gating can bypass
LLM verification, and their relaxed verification uses token-likelihood
thresholds \citep{saon2026selfspec}. These acceptance rules need not reproduce
the target's greedy output. \system combines independently advancing proposals
with asynchronous execution, retaining target verification and correction.

Speculative speech recognition studies a different form of anticipation:
predicting future transcript tokens from audio heard so far
\citep{yusuf2024ssr}. Our requests concern already received audio.

\subsection{Asynchronous and mobile execution}
Asynchronous speculation changes when work is performed relative to target
decisions. PEARL uses pre- and post-verification to overlap draft generation
with verification \citep{liu2025pearl}. DSI schedules concurrent target and
drafter instances to reduce latency \citep{timor2025dsi}. SSD prepares and
caches continuations for predicted verification outcomes, reusing a
continuation when the prediction matches \citep{kumar2026ssd}. These methods
advance work from verified or hypothesized target-text prefixes. \system
instead draws candidates from the requested audio and instruction, so its
producer can continue without enumerating possible target corrections.
The target verifies ready candidates or decodes alone. Rejection triggers
candidate realignment while the producer continues.

Mobile systems additionally account for memory capacity, data movement, and
contention. LLMCad and EdgeLLM combine confidence-guided token-tree generation
with verification and model-loading pipelines
\citep{xu2023llmcad,xu2025edgellm}. EdgeLLM accounts for interference by
scheduling speculative work during target-weight loading rather than target
computation. Lever combines tree construction and pruning with CPU--NPU
execution for flash-backed targets \citep{wang2026lever}. PELM jointly
controls speculative execution and processor frequency, allowing
partial-depth verification to trade output fidelity for efficiency
\citep{yang2026pelm}. AHASD uses a custom NPU--PIM architecture and target
feedback to confirm or roll back asynchronous drafts \citep{ma2026ahasd}.
\system keeps the target resident and overlaps its computation with a
target-decoupled audio producer on available phone processors. The useful
overlap depends on candidate readiness and measured concurrent costs.

Resource-aware inference also spans other deployment settings.
\citet{liang2026dynamic} partition and preload an LLM across nearby mobile
devices, then schedule layer-wise execution while accounting for computation
and inter-device communication. On shared NVIDIA GPUs, SGDRC dynamically
allocates compute units and memory bandwidth among concurrent DNN inference
services \citep{zhang2025sgdrc}. These methods address distributed execution
and shared-GPU contention, respectively. \system instead coordinates an
audio producer and target verifier within one phone.

\subsection{Candidate structure and deployment configuration}
SpecInfer, Sequoia, and EAGLE-2 organize or adapt candidate trees to increase
coverage per verification pass
\citep{miao2024specinfer,chen2024sequoia,li2024eagle2}.
Our evaluated interface uses ordered candidate chains, which fit existing
causal-attention runtimes and support different audio drafters.

Model selection can adapt inference to changing input conditions.
Anole and its journal extension select among scene-specialized compressed
models for cross-scene inference on mobile devices
\citep{li2024anole,li2025sceneaware}. This selection changes the model used
for prediction. In \system, the target model remains fixed, while a cheaper
producer supplies candidates that the target verifies and corrects.

Configuration methods select drafters or candidate budgets.
UniSpec calibrates draft size against acceptance and device-specific execution
costs \citep{do2026unispec}. LTD trains state-dependent draft-tree depth and
verification-size policies before deployment \citep{zhang2026ltd}, while
MetaSD updates an alignment-feedback bandit online to select drafters
\citep{kim2026metasd}.
Online Speculative Decoding adapts the draft model itself
\citep{liu2024online}. \system uses a fixed task-specific drafter and
configures its budget using device profiles and development workloads
(Appendix~\ref{sec:deployment-selection}). This deployment choice supports
the asynchronous mechanism without requiring test-time learning.

\FloatBarrier
\section{Discussion and future directions}
\label{sec:discussion}

\subsection{Task-dependent speedup and complementary speculation}
\label{sec:discussion-candidate-scope}

Target-decoupled drafting is useful when the audio and instruction provide
candidates that remain aligned with the target's response. Acceleration
also depends on how quickly useful candidates arrive. Our translation and
spoken-QA results illustrate why a high acceptance ratio alone is
insufficient (Section~\ref{sec:main-omni-tasks}). Task-specific drafter
selection should account for both candidate supply and mobile execution
cost. The drafter conditions on its own generation history, but does not
observe the target's intermediate decisions. Open-ended reasoning or dialogue may depend more
strongly on those decisions, reducing usable candidate coverage. Our
recognition, translation, and spoken-QA results do not establish comparable
gains for such tasks. The target can continue alone when candidates are
unusable, but concurrent drafting still consumes resources.

Prefix-conditioned speculation offers a complementary source of candidates.
A future hybrid could retain the independent audio producer while selectively
using continuations conditioned on target-text prefixes. SSD's continuations
for predicted verification outcomes \citep{kumar2026ssd} and DSI's scheduling
of parallel target and drafter instances \citep{timor2025dsi} offer possible
components for this design. Whether additional coverage offsets prefix
synchronization, speculative state, and processor contention on a phone
requires evaluation.

\subsection{Supporting full-duplex dialogue}
\label{sec:discussion-interaction}

Each response uses an audio view fixed at request arrival.
New audio does not alter that response's input, and the evaluated requests
do not include user interruptions or multi-turn dialogue. Extending
\system to these settings requires defining when new audio, revised instructions, and dialogue
history enter the target context and which existing candidates remain
eligible. Target and drafter state must follow those context boundaries
while preserving causal input access. One direction is to update the
response context at well-defined points while reusing previously processed
audio and valid candidate continuations, extending \system toward
continuous interaction.

The prepared-cache request sweep evaluates decoding from completed audio KV
states. It does not establish an incremental cache-maintenance protocol
across successive requests. Evaluating that extension requires matching
audio preparation, request times, context, and commitment rules across
methods, with any unfinished preparation or state repair charged after
request arrival.

\subsection{Resource assumptions and fixed configurations}
\label{sec:future-work}

Useful overlap requires enough memory for resident target and drafter
models, their KV states, and candidate buffers, together with processors
that can execute concurrently. Shared-memory bandwidth and contention can
reduce the benefit even when both models fit. We leave adaptive processor
placement and parallel scheduling to future systems work. Such scheduling
could account for thermal throttling, competing applications, and the
relative execution costs of the target and drafter when assigning
processors and coordinating concurrent work.

Configurations are selected before evaluation and remain fixed for each
deployment. Changes in workload distribution, thermal state, or competing
applications may change which configuration is preferable. The observed
1.80--2.67\% gaps to the evaluated per-window oracle concern a finite
catalog under the measured costs. They establish neither global optimality
nor adaptation to changing operating conditions. Re-profiling the device
and revisiting its configuration are possible extensions for these changing
conditions.

\subsection{Response latency and preparation costs}

The objective is faster response generation after audio prefill.
Preparing audio before a request moves that work off the response path but
still consumes computation, memory, and energy. The Xiaomi 14 energy
measurement covers one prepared-cache request, excluding earlier audio
preparation and external cooling. The evaluation does not characterize
energy over long listening sessions, the cost of retaining
audio states for many possible requests, or battery-life tradeoffs.
Decoding gains therefore do not by themselves establish lower total energy
or end-to-end latency when audio preparation is unfinished. These costs
are relevant to a complete always-available audio interface.

\subsection{Evaluation scope}

The corpus-wide comparisons use phone-profile replay. ASR uses 688 full-audio
test windows and 94,511 target tokens per phone on four separately calibrated
profiles. Replay accounts for producer work, verification, correction,
queueing, and drain, but excludes model loading and audio/prompt prefill.
Translation and spoken QA use separate Omni-7B replay with a shared Omni-3B
drafter and zero token-conversion cost. Their cohort-selection rules differ,
and the profiles do not separately calibrate queue/cache-control costs or contention.
Appendix~\ref{sec:omni-qa-subgroup} specifies the post-hoc long-answer QA
selection rule and its limits.

Native cases support the mechanism at their stated inputs and repetition
counts. The native on-demand sweeps cover two phones, Redmi K70 Pro and Xiaomi 14,
using a different 10-minute audio trace on each phone, with one formal
request per checkpoint for each method.
The 10-minute continuous-input
study measures lag and temperature for a separate Omni-7B configuration
with audio-token reduction. Its cooled run keeps pace with arriving audio,
while the uncooled case develops backlog. The cases differ in initial thermal
state and source lifetime as well as cooling. Their input is a paced
recording with 12s context resets. They do not establish corpus-wide native
gains, recognition quality under pruning, or stability beyond the measured
duration. Appendix~\ref{sec:appendix-eval-rq3} gives both native protocols.

\subsection{Numerical sensitivity across execution backends}
\label{sec:numerical-sensitivity}

Finite-precision target execution can vary with the backend and verification
shape even under greedy decoding. \citet{yuan2025numerical} show that changing
batch size, hardware configuration, and arithmetic precision can alter LLM
outputs through rounding and floating-point non-associativity.
\citet{he2025nondeterminism} further distinguish repeatability at a fixed
configuration from batch invariance: matrix multiplication and attention may
use different reduction schedules as the number of processed tokens changes.
Small logit perturbations can therefore reverse the ranking of nearly tied
tokens without stochastic sampling.

Native checks retain all observed output discrepancies. The K70 ASP
calibration emits an additional comma relative to scalar decoding.
Five of 40 Note 9 Pro resident-source calibration episodes fail exact output
matching. The native on-demand sweep also contains output differences from
Target-only. Appendix~\ref{sec:appendix-evaluation} retains these checks.
These observations do not
identify the cause of each discrepancy. They preclude an exact-output
claim for the affected runs, even when their operation timings remain
useful for cost calibration.

Replay cannot validate partial-KV repair. A real target may retain a
numerically perturbed state even when observed greedy tokens agree.
A state-equivalence claim requires validated accepted-prefix state
restoration, with any repair cost included in the measured execution.
Proposition~\ref{prop:policy-exactness} is conditional on reference-equivalent
execution and state restoration. It does not assert bitwise reproducibility
across distinct numerical implementations. Target checking and repair still
govern which speculative tokens may be committed.

\end{document}